\theoremstyle{plain}
\newtheorem*{theorem*}{Theorem}
\newtheorem{theorem}{Theorem}[section]
\newtheorem{proposition}[theorem]{Proposition}
\newtheorem*{corollary*}{Corollary}
\theoremstyle{remark}
\newtheorem{defn}{Definition}[section]
\definecolor{bubbles}{rgb}{0.91, 1.0, 1.0}
\definecolor{aquamarine}{rgb}{0.5, 1.0, 0.83}
\definecolor{bubblegum}{rgb}{0.99, 0.76, 0.8}
\definecolor{bluebell}{rgb}{0.64, 0.64, 0.82}
\definecolor{dollarbill}{rgb}{0.72, 0.93, 0.6}
\newcommand{\fp}{x}
\newcommand{\nprops}{N}
\newcommand{\eexternal}{E}
\newcommand{\rrank}{R}
\newcommand{\feynpar}{x}
\newcommand{\sumnus}{\nu_1+\dots+\nu_{\nprops}}
\DeclareMathOperator{\nint}{N_{\mathrm{int}}}
\newcommand{\dd}{\mathrm{d}} 
\newcommand{\sfA}{\mathsf{A}}
\newcommand{\cU}{\mathscr {U}}
\newcommand{\cF}{\mathscr {F}}
\newcommand{\cV}{\mathscr {V}}
\newcommand{\cL}{\mathscr {L}}
\renewcommand{\Re}{\operatorname{Re}}
\definecolor{unamblue}{cmyk}{1 0.79 0.12 0.59}
\newcommand{\undernotation}[1]{\pmb{#1}}
\pgfplotsset{compat=1.18}
\tikzset{
    partial ellipse/.style args={#1:#2:#3}{
        insert path={+ (#1:#3) arc (#1:#2:#3)}
    }
  }
     \newcommand{\smallNickel}[1]{{                \texttt{#1}}}
\newcommand{\ii}{\mathrm{i}}
\newcommand{\Conv}{\operatorname{Conv}}
\newcommand{\Newton}{\Delta}
\DeclareMathOperator{\aff}{aff}
\DeclareMathOperator{\supp}{supp}
\DeclareMathOperator{\ehr}{Ehr}
\newcommand{\simplexwithpars}[1]{\Newton_{\text{HS}}^{(#1)}}
\title{Fano and Reflexive Polytopes from Feynman Integrals} 
\author[a]{Leonardo de la Cruz,}
\affiliation[a]{Institut de Physique Th\'eorique, Université Paris-Saclay, CEA, CNRS, F-91191 Gif-sur-
	Yvette Cedex, France}
\author[b]{Pavel P.~Novichkov}
\affiliation[b]{Department of Physics and Astronomy, Ghent University, 9000 Ghent, Belgium}
\author[a]{and Pierre Vanhove}
\date{\today}
\abstract{We classify the Fano and reflexive polytopes that arise
  from quasi-finite Feynman integrals.  These polytopes appear as
  scaled Minkowski sums of the Newton polytopes associated with the
  Symanzik graph polynomials. 
  For one-loop graphs and  multiloop sunset graphs, we
  identify the Fano and reflexive cases by computing
  the number of interior points from the associated bivariate Ehrhart polynomials. More generally,  we utilize the properties of Symanzik polynomials and their symmetries to conduct a direct search over all Feynman graphs in generic kinematics with up to ten edges and nine loops.
  We find that such cases are remarkably sparse: for example, we find only two two-dimensional
  reflexive polytopes, three three-dimensional reflexive polytopes, and
  four  three-dimensional Fano polytopes. We also reveal a surprising feature of one-loop $N$-gon integrals in higher dimensions:  their associated reflexive polytopes encode degenerate Calabi--Yau $(N-2)$-folds.  We further
  analyze the geometric structures encoded by these polytopes and
  exhibit explicit connections with del Pezzo surfaces, $K3$ surfaces,
  and Calabi--Yau threefolds. Since reflexive polytopes naturally
  correspond to Calabi--Yau varieties, our classification demonstrates
  that quasi-finite Feynman integrals, with reflexive polytopes, are intrinsically linked to
  Calabi--Yau period integrals.
}
\begin{document}
	\maketitle


	\section{Introduction}

Feynman integrals are central objects in perturbative quantum field
theory, encoding the contributions of quantum fluctuations to physical
observables such as scattering amplitudes and the determination of effective
couplings (see, for instance, the review~\cite{Travaglini:2022uwo} for a
recent account of the breadth of applications of amplitudes).
Beyond their physical significance, these integrals exhibit
a remarkably rich mathematical structure, interconnecting quantum
field theory with algebraic geometry, number theory, and
combinatorics. It is now widely recognized that
many Feynman integrals can be interpreted as periods of
algebraic varieties defined by the vanishing of the Symanzik
polynomials~\cite{Bloch:2005bh,Bogner:2007mn,Brown:2009ta, Schnetz:2013hqa}. This
observation has opened a new perspective on quantum field theory,
emphasizing the role of algebraic geometry in understanding the
analytic and arithmetic properties of amplitudes. 

Feynman
integrals also serve as a testing ground for discovering novel
geometries that emerge at higher-loop orders. The same class of
multiloop integrals governs observables such as Higgs boson
production at the LHC,  the anomalous magnetic moments of the
electron or the muon~\cite{Laporta:2017okg, Blum:2023qou}, or the
post-Minkowskian expansion of general relativity for the dynamics of
compact binaries relevant to gravitational-wave
emission~\cite{Bjerrum-Bohr:2022blt}. Recent computations have
revealed the appearance of Calabi--Yau period integrals in physical
observables. For instance, $K3$ periods enter in the evaluation of the
three-loop contribution to two-body post-Minkowskian scattering in
general relativity~\cite{Bern:2021dqo,Bern:2021yeh,Dlapa:2021npj,Dlapa:2022lmu,Bjerrum-Bohr:2022ows} and the hadronic vacuum polarization in chiral
perturbation theory~\cite{Lellouch:2025rnz}, and Calabi--Yau three-fold periods in the four-loop
computation of post-Minkowskian scattering in
general relativity~\cite{Klemm:2024wtd,Driesse:2024feo}. The function space that appears in intermediate steps of these calculations has been studied in refs.~\cite{Frellesvig:2023bbf, Brammer:2025rqo}.

Newton polytopes associated with Feynman integrals also play an important role in understanding their geometry. They arise from the graph polynomials in the parametric representation. In this representation, the Newton polytope associated with a  Feynman graph $\Gamma$ is the Minkowski sum of the Newton polytope of the first Symanzik polynomial $\cU_\Gamma$ and the second Symanzik polynomial $\cF_\Gamma$.
Newton polytopes of Feynman integrals are useful 
	for studying their convergence properties~\cite{2010arXiv1010.5060N,2011arXiv1103.6273B},  
	the method of regions~\cite{Ananthanarayan:2018tog,Gardi:2022khw,Gardi:2024axt,Ma:2023hrt},  their evaluation using sector decomposition~\cite{Borowka:2015mxa,Heinrich:2021dbf}, and their relation to $\sfA$-hypergeometric systems~\cite{GELFAND1990255, 2010arXiv1010.5060N, 2016arXiv160504970N, Schultka:2018nrs,  delaCruz:2019skx, Klausen:2019hrg, Klausen:2021yrt,Ananthanarayan:2022ntm}. 	 Newton polytopes that arise from Feynman integrals are lattice polytopes, that is, their vertices have integer coordinates.  To a given lattice polytope we can attach a toric variety, which carries information about the convergence of a given Feynman integral~\cite{2011arXiv1103.6273B} and the differential equations it satisfies~\cite{delaCruz:2019skx}.	
    
 Fano polytopes are a particular class of lattice polytopes having a single interior lattice point (see ref.~\cite{cox2011toric}).
 An important subclass of Fano polytopes, known as reflexive polytopes, was put forward
 in ref.~\cite{Batyrev:1993oya}.
 Reflexive polytopes play a distinguished role in mirror symmetry, as being associated with Calabi--Yau varieties.
In a fixed number of dimensions, all reflexive polytopes are known up to dimension four~\cite{Kreuzer:2000xy}. 

Certain Feynman integrals lead to mirror
pairs. This is the case for example, of  multiloop sunset graphs in two dimensions, whose connection to mirror symmetry was rigorously studied
in refs.~\cite{Bloch:2014qca,Bloch:2016izu}. There it was found that these Feynman integrals  are multiple-valued holomorphic
functions, which arise as well in the context of open mirror symmetry.
The relevance of the toric Fano varieties, associated to Fano polytopes, in the context of Feynman integrals has been put
forward as well in ref.~\cite{Schimmrigk:2024xid}.

\medskip

	In another context, algorithms for classifying all locally finite integrals in four dimensions for a given Feynman graph 
    were presented in refs.~\cite{Gambuti:2023eqh,delaCruz:2024xsm}. In particular, the algorithm of ref.~\cite{delaCruz:2024xsm} relies on the convergence region of Euler--Mellin integrals studied in ref.~\cite{2010arXiv1010.5060N},  later generalized in refs.~\cite{2011arXiv1103.6273B, Schultka:2018nrs}. It is based on constructing numerators that have the property of having parametric representations that depend only on monomials whose exponents are  interior lattice points of the Newton polytope of their Symanzik polynomials.
    In this context, Fano and reflexive polytopes arise naturally as  special cases of Newton polytopes associated to finite integrals with only one interior lattice point.
    Quasi-finite integrals introduced in ref.~\cite{vonManteuffel:2014qoa}  lead to Fano and reflexive polytopes as well.

In this work, we will take a systematic step toward classifying the \emph{reflexive} and \emph{Fano} polytopes
that arise from the Newton polytopes of the Symanzik polynomials associated to finite Feynman integrals. The classification presented in this paper establishes a bridge between the combinatorics of Feynman graphs and the geometry of their associated toric varieties. By systematically identifying the reflexive and Fano polytopes corresponding to classes of Feynman integrals, we construct a geometric dictionary that links the analytic structure of amplitudes to the combinatorial and topological data of toric geometry. This correspondence provides a new framework for exploring the interplay between algebraic geometry and quantum field theory, with potential applications ranging from the theory of periods and motives to high-precision predictions in particle physics and gravitational theory.

\medskip
This paper is structured as follows.
In \cref{sec:finite}, we review the parametric representation of Feynman integrals.
In \cref{sec:newton}, we develop the framework of Newton polytopes associated with the Symanzik polynomials,
we discuss the convergence conditions that define (quasi-)finite integrals,
and explain how bivariate Ehrhart polynomials are used to determine interior lattice points.
In \cref{sec:reflexivefano}, we define Fano and reflexive polytopes and present the example of  multiloop sunset integrals.
In \cref{sec:dimloopscan}, we perform a systematic search by increasing the number of  dimensions and loops. We identify all reflexive and Fano polytopes that arise from one- and two-loop graphs in  $D=2$ and $D=4$,  isolating the rare cases where Newton polytopes have a single interior point. Using symmetry relations and Ehrhart bounds, we extend this search to graphs up to ten edges and nine loops in  
\cref{sec:edgescan}. In
\cref{sec:periods}, we evaluate certain Feynman integrals  associated to reflexive polytopes that lead to simple period integrals.
In \cref{sec:mirror}, 
we  interpret the resulting polytopes in terms of toric geometry and Calabi--Yau periods, establishing a geometric dictionary between quasi-finite Feynman integrals and some Calabi--Yau families. 
Our conclusions are presented in \cref{sec:conclusion}. 

The appendices provide the technical material supporting the main analysis. 
\Cref{app:TwoloopGraph} lists the Symanzik polynomials for all two-loop graph topologies. 
\Cref{app:oneLoopEhrhart} derives the Ehrhart polynomial used in the one-loop scan. 
\Cref{app:sunsetEhrhart} develops the full bivariate Ehrhart polynomial for the multiloop sunset polytopes and relates it to permutohedra. 
\Cref{app:fanotable} contains  tables of graphs leading to Fano polytopes. 
	
	\section{Parametric representation of Feynman integrals}\label{sec:finite}
	
\newcommand{\prefactorGamma}{\Upgamma}	
	Feynman integrals associated with a Feynman graph $\Gamma$ generally appear in quantum field theory as tensor integrals, that is,  integrals with numerator factors. Let us consider a graph  $\Gamma$ with $L$ loops and $\nprops$ internal edges in $D$ space-time dimensions. 
	A general tensor integral attached to this graph is a linear combination of integrals of the form 
	\newcommand{\multiJ}{\undernotation{J}}
    \begin{equation}\label{e:tensorFeyn}
		I_\Gamma^{\multiJ}(L,D;\undernotation{\nu}):= \int_{(\mathbb R^{1,D-1})^L}{\ell_{j_1}^{\mu_1}\cdots
			\ell_{j_R}^{\mu_R}\over \prod_{i=1}^{\nprops} d_i^{\nu_i} }
		\prod_{i=1}^L \frac{  \dd^D\ell_i }{\ii \pi^{D/2} }\,,
	\end{equation}
	where $\rrank$ is the number of times that the loop momenta appear in the numerator. 
    The double-index  $\multiJ:=(J_1, J_2, \dots, J_R)$,  with  $J_i=(j_i,\mu_i)$, denotes the $i^{\mathrm{th}}$ Lorentz index belonging to the $j^{\mathrm{th}}$ loop momentum. The powers of the denominators are denoted collectively by $\undernotation{\nu} :=(\nu_1,\dots,\nu_{\nprops})$ while the inverse propagators are given by 
	\begin{equation}
		d_i=
		\left( \sum_{j=1}^L \alpha_{ij} \ell_j+\sum_{j=1}^{\eexternal} \beta_{ij}p_j \right)^2-m_i^2+\ii \epsilon\,, 
        \label{denominators}
	\end{equation}
	where
	$\alpha_{ij}$, $\beta_{ij}$ take the values in  $\set{-1,0,1}$, $m_i$ are the  internal masses and $p_i$ are the $E$ independent
	external momenta. We will explicitly indicate  the parameters $(L,D;\undernotation{\nu})$ throughout, as they   will play a central role in the subsequent analysis.
	The parametric representation  of eq.~\eqref{e:tensorFeyn}  can be written as~\cite{Heinrich:2008si} 
	\begin{multline}\label{e:Itensor}
		I_\Gamma^{\multiJ}(L,D;\undernotation{\nu})=   \prefactorGamma\left(\sumnus-\left\lfloor
                    R\over2\right\rfloor-{LD\over2}\right)\cr
                \times\int_{ \mathbb R\mathbb P_+^{\nprops-1}} {\mathscr{U}_\Gamma^{\sumnus-{L+1\over2}D}\over
			\mathscr{F}_\Gamma^{\sumnus-{L\over2}D}}{\mathcal
			N^{\multiJ}(\undernotation x;\undernotation{\nu},\undernotation a)\over \mathscr{U}_\Gamma^R}
		\prod_{i=1}^{\nprops}
		{\feynpar_i^{\nu_i-1}\over\prefactorGamma(\nu_i) }  \Omega_0^{(\nprops)} \, ,
	\end{multline}
	where $\left\lfloor R\over2\right\rfloor$ denotes the nearest
        integer less than or equal to $R$ and the domain of integration
        is the projective positive orthant
        \begin{equation}
          \mathbb R\mathbb P_+^{\nprops-1}:=\{[x_0,\dots, x_{\nprops}]\in
\mathbb P^{\nprops-1}| x_i\in\mathbb R, x_i\geq0\},
\end{equation}
        and
	\begin{equation}
		\Omega_0^{(\nprops)}=\sum_{i=1}^{\nprops} (-1)^i x_i \dd x_1\wedge \cdots
		\wedge \dd x_{i-1}\wedge \dd x_{i+1}\wedge \cdots \wedge \dd x_\nprops
	\end{equation}
	is    the canonical   differential form on $\mathbb
        P^{\nprops-1}$. The integral depends on the Symanzik (graph) polynomials and a numerator that we define as follows.

        In order to construct  Symanzik polynomials $\cU_\Gamma$ and $\cF_\Gamma$,  we introduce the mass hyperplane
	\begin{equation}
		\label{e:Ldef}
		\cL_\Gamma:=\sum_{i=1}^\nprops m_i^2 \feynpar_i \, 
	\end{equation}
	and  the graph polynomial from the sum of the 2-forests  (we
        refer to ref.~\cite{Nakanishi:1971} for details about the
        construction of these graph polynomials)
	\begin{equation}
		\label{e:Vdef}
		\cV_\Gamma:=\sum_{\substack{\text{spanning}\\\text{2-forests}~F~\text{of}~\Gamma}}
		\quad  \left(\sum_{(v_1,v_2)\in F=T_1\cup T_2} p_{v_1}\cdot
		p_{v_2}\right)\, \prod_{j\in \Gamma\setminus F} \feynpar_j \,. 
	\end{equation}
	The Symanzik polynomials are  defined by
	\begin{align}
		\label{e:Udef}
		\cU_\Gamma&:= \sum_{\substack{\text{spanning}\\\text{trees}~T~\text{of}~\Gamma}}
		\quad
		\prod_{j\in \Gamma\setminus T} \feynpar_j\,, \\
		\label{e:Fdef}
		\cF_\Gamma&:= \cU_\Gamma \,\cL_\Gamma-\cV_\Gamma \, .
	\end{align}
	The numerator $\mathcal{N}^{\multiJ}(\undernotation{x};\undernotation{\nu},\undernotation{a})$ is a homogeneous polynomial of degree $RL$ in the edge variables $\feynpar_1,\dots,\feynpar_\nprops$,
    \begin{equation}
		\mathcal N^{\multiJ}(\undernotation{x};\undernotation{\nu},\undernotation {a}):=    \sum_{k=0}^{\left\lfloor
			R\over2\right\rfloor} \left(-\frac12\right)^k
		{ \prefactorGamma\left(\sumnus-k-{LD\over2}\right)\over  \prefactorGamma\left(\sumnus-\left\lfloor
			R\over2\right\rfloor-{LD\over2}\right)} 
		\times \,  \mathscr{F}_\Gamma^k\,
		\mathscr{P}_k^{\multiJ}(\undernotation x) \,,
	\end{equation}
	where $\mathscr{P}^{\multiJ}_k(\undernotation x)$ is a homogeneous polynomial
 of degree
	$RL-k(L+1)$,
	\begin{equation}\label{e:Pdef}
		\mathscr{P}^{\multiJ}_k(\undernotation
		x)=\sum_{\undernotation{a} \in \textrm{supp}(\Gamma;\mu)}
		p^{\multiJ }_{\undernotation a }\prod_{i=1}^\nprops x_i^{a_i} \, ,
	\end{equation}
where $p^{\multiJ }_{\undernotation a }$ is a tensor that can be constructed from metric tensor and the matrices in eq.~\eqref{denominators}. Its explicit form can be found in ref.~\cite{Heinrich:2008si} and may also be computed with \texttt{SecDec} \cite{Heinrich:2021dbf}. We won't need the explicit form of this polynomial as we explain below. 
    We have introduced the support of the homogeneous polynomial,
	\begin{equation}
		\textrm{supp}(\mathscr{P}):=\{
		a_1+\cdots +a_\nprops=RL-k(L+1); \qquad
		R(L-1)/2\leq a_i\leq RL\}\, .
	\end{equation}
        We remark that terms of the form  \eqref{e:tensorFeyn} with different values of $R$ can also be put in the form of eq.~\eqref{e:Itensor} by taking the maximum value of $R$ in the combination and adjusting the numerator to include additional powers of $\cU$. 
	
       In practical calculations, loop momenta is contracted with external momenta and thus each term in the numerator becomes a scalar. Thus, we can focus on the monomials appearing in $\mathcal N^{\multiJ}(\undernotation{x};\undernotation{\nu},\undernotation {a})$. 
		Taking one monomial at the time in the numerator, say $x_1^{\alpha_1}\cdots x_\nprops^{\alpha_\nprops}$, amounts to studying  the scalar integral (now dropping the double-index $\multiJ$)
		\begin{equation}
			I_\Gamma (L, D; \undernotation{\alpha},\undernotation{\nu}) :=   	\int_{\mathbb R\mathbb P_+^{\nprops-1}} {\mathscr{U}_\Gamma^{\sumnus-{L+1\over2}D}\over
				\mathscr{F}_\Gamma^{\sumnus-{L\over2}D}}{1\over \mathscr{U}_\Gamma^R}
			\prod_{i=1}^{\nprops}
			{\feynpar_i^{\alpha_i+\nu_i-1}\over\prefactorGamma(\nu_i) }  \Omega_0^{(\nprops)} \, .
			\label{basic-feynman-integral}
		\end{equation}
		This definition does not include the $\Upgamma$-function prefactor for reasons detailed in \cref{sec:quasifinite}.
		Since $\alpha_1 + \dots + \alpha_\nprops = RL$, we can rewrite the integral~\eqref{basic-feynman-integral} as
		\begin{multline}
			I_\Gamma (L, D; \undernotation{\alpha}, \undernotation{\nu}) =
			\prod_{i=1}^\nprops \frac{\prefactorGamma(\alpha_i + \nu_i)}{\prefactorGamma(\nu_i)} \\
            \times
			\int_{\mathbb R\mathbb P_+^{\nprops-1}} {\mathscr{U}_\prefactorGamma^{(\alpha_1+\nu_1) + \dots + (\alpha_\nprops + \nu_\nprops) - (L+1)(D/2+R)}\over
				\mathscr{F}_\Gamma^{(\alpha_1+\nu_1) + \dots + (\alpha_\nprops + \nu_\nprops) - L(D/2+R)}}
			\prod_{i=1}^{\nprops}
			{\feynpar_i^{\alpha_i+\nu_i-1}\over\prefactorGamma(\alpha_i + \nu_i) }  \Omega_0^{(\nprops)} \, ,
		\end{multline}
		manifesting that, up to a numerical prefactor, the rank~$R$ and the numerator exponents~$\alpha_i$ enter the integrand only through linear combinations $D/2+R$ and $\alpha_i + \nu_i$, respectively. We have thus recovered the well-known result~\cite{Tarasov:1996br} stating that Feynman integrals with numerator factors can be expressed as linear combinations of scalar integrals, possibly in higher dimensions and with raised propagator powers.
        
		From now on, we will redefine $D \to D + 2R$, $\nu_i \to \nu_i + \alpha_i$ effectively setting $R$ and $\nu_i$ to zero, so that the integral under study takes the form
		\begin{equation}
			I_\Gamma (L,D;\undernotation{\nu}) :=
			\int_{\mathbb R\mathbb P_+^{\nprops-1}} \frac{\prod_{i=1}^{\nprops}
			{\feynpar_i^{\nu_i-1}}}{\mathscr{U}_\Gamma^{n_\cU} \mathscr{F}_\Gamma^{n_\cF}}
			  \Omega_0^{(\nprops)} ,
			\label{parametric-integral-projective}
		\end{equation}
		where we have introduced the short-hand notation 
		\begin{equation}
			n_\cU=	- \nu_1-\cdots-\nu_\nprops+{L+1\over2}D\, , \qquad 
			n_\cF=\nu_1+\cdots+\nu_\nprops-{L\over2}D\, .
            \label{definitionsofpowers}
		\end{equation}
                In an affine patch, say $\feynpar_\nprops=1$, the
                projective integral becomes
		\begin{equation}
			I_\Gamma (L,D;\undernotation{\nu}) =
			\int_{\mathbb{R}^{\nprops-1}_+} \left. \frac{\prod_{i=1}^{\nprops-1}
			{\feynpar_i^{\nu_i}}}{\mathscr{U}_\Gamma^{n_\cU} \mathscr{F}_\Gamma^{n_\cF}} \right|_{\feynpar_\nprops = 1}
			\prod_{i=1}^{\nprops-1}
		 {\dd x_i\over x_i}.
			\label{parametric-integral-affine}
		\end{equation}
	The equivalence between the two expressions is a consequence
 of the projective nature of the parametric representation and it is 
referred to as the Cheng--Wu theorem~\cite[App.~C.10]{Cheng:1987ga} in the physics literature~\cite{Panzer:2015ida,Weinzierl:2022eaz}.

		\section{Newton polytopes attached to Feynman integrals}\label{sec:newton}

		Introducing the multi-index  notation
		\begin{equation}
			\undernotation x^{\undernotation {a}}:=   x_1^{ a_1}\cdots x_n^{a_n}\, ,
		\end{equation}
		to a  (Laurent) polynomial in $n$ variables
		\begin{equation}\label{e:LaurentDef}
			f(x_1,\dots,x_n)=\sum_{\undernotation a=(a_1,\dots,a_n)\in \mathbb
				Z^n} c_{\undernotation a} \undernotation x^{\undernotation a}, \qquad
			c_{\undernotation a} \in\mathbb C \, ,
		\end{equation}
		we associate a Newton polytope
		\begin{equation}\label{e:DeltaLaurentDef}
			\Newton(f):= \left\{\sum_{i=1}^n\lambda_i a_i \, \middle| \, \sum_{i=1}^n
			\lambda_i=1, \lambda_i\in\mathbb R_+, \undernotation a\in \textrm{supp}(f)\right\} 
		\end{equation}
		as the convex hull of its support
		$
		\supp(f):=\set{\undernotation a\in\mathbb Z^n | c_{\undernotation a}\neq0}.
		$
		The Newton polytope of the product of polynomials is                the Minkowski sum
		\begin{equation}
			\Newton(f\cdot g)=\Newton(f)+\Newton(g)=
                        \{x+y \, | \, x\in \Newton(f), y\in \Newton(g) \} \, .
			\label{Newton-property}
                      \end{equation}
For later purposes, we introduce the hypersimplex (HS) $\simplexwithpars{d,k}$. It is defined as the convex hull of $d$-dimensional vectors whose coefficients consist of $k$ ones and $d-k$ zeros. The standard simplex corresponds to 
  \begin{align}\label{e:standardsimplex-definition}
      \simplexwithpars{\nprops,1} &:= \Newton\!\left(\sum_{i=1}^\nprops x_i\right)
    = \Conv\{e_1, \dots, e_\nprops\}\cr
    &=  \left\{ (\nu_1,\dots,\nu_\nprops) \in \mathbb{R}^\nprops \, \middle| \, \sum_{i=1}^\nprops \nu_i = 1, \ 0\le \nu_i \le 1 \text{ for all } i \right\},
  \end{align}
  and the second hypersimplex
  \begin{align}\label{e:secondhypersymplex-definition}
      \simplexwithpars{\nprops,2} &:= \Newton\!\left(\sum_{1 \le r < s \le \nprops} x_r x_s\right)
    = \Conv\{ e_r + e_s \, | \, 1 \le r < s \le \nprops\}\cr
    &= \left\{ (\nu_1,\dots,\nu_\nprops) \in \mathbb{R}^\nprops \, \middle| \, \sum_{i=1}^\nprops \nu_i = 2, \ 0\le \nu_i \le 1 \text{ for all } i \right\}\, ,
  \end{align}
where \( e_i \) are the standard basis vectors of \( \mathbb{R}^\nprops \).

        \subsection{Newton polytopes in the parametric representation}
        
		To the parametric representation of the Feynman integral in
		\cref{parametric-integral-projective} or \cref{parametric-integral-affine}, we associate the Newton polytope%
		\begin{equation}\label{e:NewtonUF}
			\Delta_\Gamma(L,D;\undernotation{\nu}) =
			n_\cU\Newton (\cU_\Gamma)+n_\cF\Newton(\cF_\Gamma),                
		\end{equation}
		assuming $n_\cU, n_\cF \geq 0$.
		
		For integer values of $n_\cU$ and $n_\cF$, $\Delta_\Gamma(L,D;\undernotation{\nu})$  is a lattice polytope and \cref{e:NewtonUF} follows from eq.~\eqref{Newton-property} because 
    $\Delta_\Gamma(L,D;\undernotation{\nu})=	\Newton(\cU_\Gamma^{n_\cU}\cF_\Gamma^{n_\cF})$.       All its vertices have integer components and thus lie on the lattice~$\mathbb{Z}^n \cap \aff(\Delta_\Gamma)$, where $\aff(\Delta_\Gamma) \subset \mathbb{R}^n$ is the smallest affine space containing $\Delta_\Gamma$. Even though \cref{parametric-integral-projective} and \cref{parametric-integral-affine} define two different Newton polytopes, they are related by a projection map $(a_1, \dots, a_{N-1}, a_N) \mapsto (a_1, \dots, a_{N-1})$. Moreover, since the integrand of \cref{parametric-integral-projective} is homogeneous, its Newton polytope is restricted to a hyperplane, so the projection map defines an isomorphism of the respective lattices $\mathbb{Z}^N \cap \{a_1 + \dots + a_N = \mathrm{const}\}$ and $\mathbb{Z}^{N-1}$. Therefore, it preserves properties related to lattice polytopes, such as volume, number of interior lattice points, and reflexivity~\cite{Haase:2012,Beck:2015}. For this reason, we will use the projective and the affine form of the Newton polytope interchangeably.

                \medskip
		We give a few remarkable properties that we will use later when
		analyzing  polytopes arising from Feynman integrals.
		
		\begin{itemize}
			\item
			When all the internal masses are non-vanishing, i.e.\ $m_i\neq0$ for
			$1\leq i\leq \nprops$, we have that the Newton polytope of $\cF_\Gamma$ is
			given by the Minkowski sum
			\begin{equation}\label{e:FMinkowski}
				\Newton(  \cF_\Gamma) =\Newton(\cU_\Gamma)+\Newton(\cL_\Gamma)\, .
			\end{equation}
			This relation  is a direct consequence of the definitions of the graph polynomials when all the internal mass coefficients are non vanishing (this was also observed  e.g. in ref.~\cite{Arkani-Hamed:2022cqe,Borinsky:2023jdv}).  
             The Newton polytope $\Newton(\cL_\Gamma)=\simplexwithpars{\nprops,1}$ is the standard
simplex in \cref{e:standardsimplex-definition}.
			This means that when all the internal masses are non-vanishing the
			Newton polytope of the integrand of the Feynman integral~\eqref{parametric-integral-projective}
            is given by the weighted Minkowski sum of $\Newton (\cU_\Gamma)$ and $\simplexwithpars{\nprops,1}$. Specifically,
            \begin{equation}\label{e:NewtonUFMassive}
				\Delta_\Gamma(L,D;\undernotation{\nu}) =
				\frac{D}{2}
				\Newton (\cU_\Gamma)+\left(\nu_1+\cdots+\nu_\nprops-\frac{DL}
				2\right)\simplexwithpars{\nprops,1}                
			\end{equation}
			is determined only by the polytope of the first Symanzik polynomial
			and the linear mass polynomial.
			
\item			When some of the internal masses vanish, 
			\cref{e:FMinkowski} is not true anymore as was observed in ref.~\cite[Remark 4.16]{Schultka:2018nrs}. In particular, in the fully
			massless case $(m_1,\dots,m_\nprops)=(0,\dots,0)$ (external momenta can still be massive) the Newton polytope
			associated to the Feynman integral is given by the weighted Minkowski sum of the Newton
			polytope of $\Newton (\cU_\Gamma)$ and  $\Newton(\cV_\Gamma)$
			\begin{equation}
            \begin{aligned}
				\label{e:NewtonUFMassless}
				\Delta_{\Gamma;\undernotation 0}(L,D;\undernotation{\nu}):=    \bigg(\frac{D(L+1)}{2}-\nu_1&-\cdots-\nu_\nprops\bigg)
				\Newton (\cU_\Gamma)\\
                &+\left(\sumnus-\frac{DL}
				2\right)\Newton(\cV_\Gamma)\, .
			\end{aligned}
            \end{equation}
			We discuss the difference between the polytopes of graphs with all
            massive internal propagators and graphs with all massless internal propagators in \cref{sec:dimloopscan} in the context of one-loop graphs.

\item The Newton polytope of the first Symanzik polynomial is  a generalized permutahedron~\cite{postnikov2005}  as shown in ref.~\cite[Theorem~3.4]{Borinsky:2023jdv}.  When the Minkowski sum relation in \cref{e:FMinkowski} applies the Newton polytope is also given by a generalized permutahedron as shown in ref.~\cite[Theorem~3.5]{Borinsky:2023jdv}.
            
			\item There is a special class of polynomials with saturated Newton polytopes, meaning that
 every lattice point of the
			Newton polytope corresponds to a non-vanishing monomial~\cite{Cara:2019}
			\begin{equation}
				\Newton(f) \cap \mathbb{Z}^n= \supp (f)\, .
			\end{equation}
			The graph polynomials of one-loop integrals
                        and the multiloop sunset graph polynomials have saturated Newton
			polytopes.
		\end{itemize}

\subsubsection{Newton polytopes in Lee--Pomeransky representation}
\label{sec:LeePomeransky}

Here we explore the relation between Newton polytopes of Feynman integrals in the Feynman parameter representation~\eqref{parametric-integral-projective} and another commonly-used parametric representation---the Lee--Pomeransky (LP) representation~\cite{Lee:2013hzt}.
This representation is typically presented in deprojectivized form, which reads, up to a prefactor,
\begin{equation}
	I_\Gamma (L,D;\undernotation{\nu}) =
	\int_{\mathbb{R}^{\nprops}_+} \frac{1}{\left( \cU_\Gamma + \cF_\Gamma \right)^{D/2}}
	\prod_{i=1}^{\nprops}
	{\feynpar_i^{\nu_i-1}} \dd x_i \, .
	\label{lp-integral-affine}
\end{equation}
To make the relation between Newton polytopes more transparent, we will instead consider the projective version of this integral:
\begin{equation}
	I_\Gamma (L,D;\undernotation{\nu}) =
	\int_{\mathbb R\mathbb P_+^{\nprops}} \frac{1}{\left( x_0 \cU_\Gamma + \cF_\Gamma \right)^{D/2}}
	\prod_{i=0}^{\nprops}
	{\feynpar_i^{\nu_i-1}}  \Omega_0^{(\nprops+1)} ,
	\label{lp-integral-projective}
\end{equation}
where \(\nu_0 = \frac{L+1}{2} D - \nu_1-\cdots-\nu_\nprops\). (The affine integral~\eqref{lp-integral-affine} can be obtained from \cref{lp-integral-projective} by setting \(x_0 = 1\) in the integrand.)

In complete analogy to the parametric representation case~\eqref{parametric-integral-projective}, we can associate the following Newton polytope to the integral~\eqref{lp-integral-projective}:
\begin{equation}
	\Delta_\Gamma^{\mathrm{LP}}(L,D) =
	\frac{D}{2} \Newton (x_0 \cU_\Gamma + \cF_\Gamma)\, .
	\label{e:NewtonLP}
\end{equation}
Note that, unlike the parametric representation case, this polytope does not depend on the vector of exponents \(\undernotation{\nu}\).
Furthermore, assuming that \(D/2\) is integer, we can expand the corresponding polynomial in powers of \(x_0\) as
\begin{equation}
	\left( x_0 \cU_\Gamma + \cF_\Gamma \right)^{D/2}
	= \sum_{a_0=0}^{D/2} \binom{D/2}{a_0} \, x_0^{a_0} \, \cU_\Gamma^{a_0} \cF_\Gamma^{D/2 - a_0},
\end{equation}
so that
\begin{equation}
	\supp \left( (x_0 \cU_\Gamma + \cF_\Gamma)^{D/2} \right) =
	\bigcup_{a_0 = 0}^{D/2}
	\left\{ (a_0, \undernotation{a}) \in \mathbb{Z}^{N+1} \, \middle| \, \undernotation{a} \in \supp \left( \cU_\Gamma^{a_0} \cF_\Gamma^{D/2-a_0} \right) \right\}.
\end{equation}
It follows that \(\Delta_\Gamma^{\mathrm{LP}}(L, D)\) is the convex hull of Newton polytopes~\(\Delta_\Gamma (L,D; \undernotation{\nu})\) defined in \cref{e:NewtonUF} with all possible values of \(n_\cU = 0, \dots, D/2\), embedded into a set of parallel hyperplanes \(a_0 = n_\cU\), see \cref{fig:lp-slices}.
This explains our choice of the projective form of the Lee--Pomeransky
representation~\eqref{lp-integral-projective} over the deprojectivized
one~\eqref{lp-integral-affine}: setting \(x_0 = 1\) is equivalent to
projecting all polytopes \(\Delta_\Gamma (L, D; \undernotation{\nu})\)
onto the same hyperplane \(a_0 = 0\), which would obscure this
picture.  

  \begin{figure}[tb]
\begin{subfigure}{0.45\textwidth}
        \centering
        \includegraphics[width=5cm]{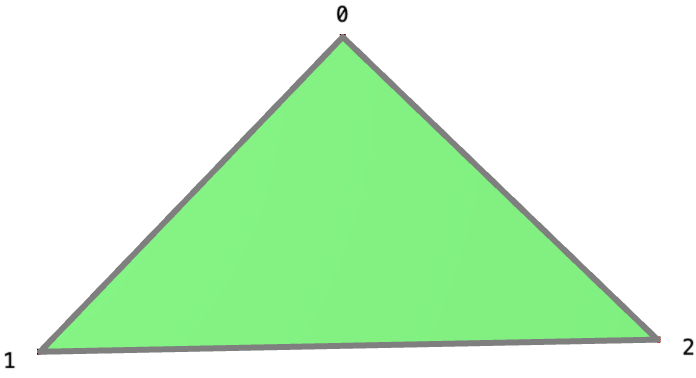}
        \caption{Polytope $\Newton(\cU_ {\rm triangle})+\Newton(\cF_{\rm triangle})$ in the parametric representation.}
        \label{fig:massivetrianglepoluf}
\end{subfigure}
\hfill
\begin{subfigure}{0.45\textwidth}
        \centering
        \includegraphics[width=5cm]{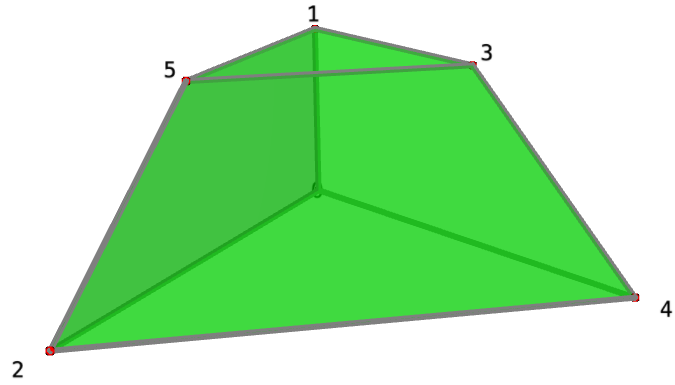}
        \caption{Polytope $2 \Newton(x_0 \cU_ {\rm triangle}+\cF_{\rm triangle})$ in Lee--Pomenransky representation.}
        \label{fig:massivetrianglepollp}
      \end{subfigure}
      	\caption{Comparison of
the 
polytopes for the massive triangle in \(D = 4\) dimensions. Polytope (\subref{fig:massivetrianglepoluf}) is a slice of the Lee--Pomeransky  polytope (\subref{fig:massivetrianglepollp}) by the plane \(a_0 = 1\), parallel to the triangular faces, corresponding to \(a_0 = 0\) and \(a_0 = 2\). }
	\label{fig:lp-slices}
  \end{figure}

It is clear now that \(\Delta_\Gamma^{\mathrm{LP}}(L, D)\) contains
interior points of \(\Delta_\Gamma (L, D; \undernotation{\nu})\)  for
all values of \(n_\cU\). In general, this number can be greater than
that for a single polytope with a fixed value of \(n_\cU\). This makes
Lee--Pomeransky representation less suitable for Fano and reflexivity
searches than Feynman parameter representation. Indeed, if
\(\Delta_\Gamma^{\mathrm{LP}}(L, D)\) is Fano, one can expect to find
a Fano polytope among the Feynman parameter polytopes \(\Delta_\Gamma
(L, D; \undernotation{\nu})\) for some value of \(n_\cU\). On the
other hand, if a Feynman parameter polytope \(\Delta_\Gamma (L, D;
\undernotation{\nu})\) happens to be Fano for some value of \(n_\cU\),
this does not guarantee that the corresponding Lee--Pomeransky
polytope \(\Delta_\Gamma^{\mathrm{LP}}(L, D)\) is Fano, as the latter
may contain additional interior points. 
Therefore, in what follows we will focus on Newton polytopes~\eqref{e:NewtonUF} associated to integrals in the parametric representation~\eqref{parametric-integral-projective}.

		\subsection{(Quasi-)finite Feynman integrals}
		\label{quasifinite}	The parametric representation of Feynman integrals in
		eq.~\eqref{parametric-integral-projective}  is a special
		case of Euler--Mellin integrals. Following ref.~\cite{2011arXiv1103.6273B} we will quote a theorem that states the convergence region of an Euler--Mellin integral.
		
		\begin{defn}[Euler--Mellin integrals]
        Let $f_1, \dots, f_q$ be Laurent polynomials in variables $x_1,\dots,x_n$, $f_i(\undernotation x):=\sum_{\undernotation a \in
				\supp(f_i)} c_i(\undernotation a)  \undernotation
			x^{\undernotation a}$. An Euler--Mellin integral is defined
			as
			\begin{equation}\label{e:Mellin}
				I(\undernotation{\nu},\undernotation t):=    \int_{\mathbb R^n_+}  {x_1^{\nu_1} \cdots x_n^{\nu_n}\over \prod_{i=1}^q (f_i(\undernotation x))^{t_i}} \, \prod_{j=1}^n {\dd x_j\over x_j}.
			\end{equation}
		\end{defn}
		\begin{defn}[Non-vanishing polynomial]
			If $F$ is a face of the Newton polytope
			$\Newton(f)$ of $f$, then the truncated polynomial with
			support $F$ is given by $f_F:=\sum_{\undernotation a\in
				F \cap \supp(f)} c(\undernotation a)\, \undernotation
			x^{\undernotation a}$. The polynomial $f$ is said to be completely
			non-vanishing on a set $X$ if for each face $F$ of
			$\Newton(f)$, the truncated polynomial $f_F$ has no zeros
			on $X$. In particular, the polynomial $f$ itself does not
			vanish on $X$.  
		\end{defn}

		\begin{theorem}[Theorem~2.2 of ref.~\cite{2011arXiv1103.6273B}]\label{thm:conv}
			If each of the polynomials $f_1,\dots,f_q$ is completely non-vanishing on the positive
			orthant $\mathbb R^n_+$, then the Euler--Mellin integral $ I(\undernotation \nu,\undernotation t) $ of~\cref{e:Mellin} converges and
			defines an analytic function in the tube domain
			\begin{equation}
				\label{e:DomainConvergence}
				\left\{  (\undernotation{\nu},\undernotation t) \in \mathbb C^{n+q} \, \middle| \,
				\Re(\undernotation t) \in \mathbb R_+^q, \, 
				\Re(\undernotation{\nu})\in{\mathrm{int}}\left(\sum_{i=1}^q \Re( t_i) \Newton(f_i)\right) \right\} .
			\end{equation}
			
		\end{theorem}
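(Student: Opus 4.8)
The plan is to pass to logarithmic coordinates and reduce the statement to a two-sided comparison between each $f_i(e^{\undernotation u})$ and the support function of its Newton polytope. Substituting $x_j = e^{u_j}$, so that $\dd x_j/x_j = \dd u_j$ and $\undernotation x^{\undernotation a} = e^{\undernotation a\cdot \undernotation u}$, the integral \eqref{e:Mellin} becomes
\[
I(\undernotation\nu,\undernotation t) = \int_{\mathbb R^n} \frac{e^{\undernotation\nu\cdot\undernotation u}}{\prod_{i=1}^q f_i(e^{\undernotation u})^{t_i}}\,\dd^n u .
\]
Because the full Newton polytope is itself a face, complete non-vanishing already gives $f_i(e^{\undernotation u})\neq 0$ for every $\undernotation u\in\mathbb R^n$, so a continuous branch of $f_i(e^{\undernotation u})^{t_i}=\exp(t_i\log f_i(e^{\undernotation u}))$ exists and the integrand is holomorphic in $(\undernotation\nu,\undernotation t)$ for each fixed $\undernotation u$. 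The entire task is therefore to construct an integrable majorant that is locally uniform in $(\undernotation\nu,\undernotation t)$, after which absolute convergence and holomorphy follow together.

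Introduce the support function $h_i(\undernotation u):=\max_{\undernotation a\in\supp(f_i)}\undernotation a\cdot\undernotation u$, which is convex and positively homogeneous and coincides with the support function of $\Newton(f_i)$. The upper bound $|f_i(e^{\undernotation u})|\le\big(\sum_{\undernotation a}|c_i(\undernotation a)|\big)e^{h_i(\undernotation u)}$ is immediate since $\undernotation a\cdot\undernotation u\le h_i(\undernotation u)$ for every $\undernotation a\in\supp(f_i)$. The crucial and hardest ingredient is the matching lower bound: there is a constant $c_i>0$ with
\[
|f_i(e^{\undernotation u})|\ \ge\ c_i\,e^{h_i(\undernotation u)}\qquad\text{for all }\undernotation u\in\mathbb R^n .
\]
This is exactly where the full strength of \emph{complete} non-vanishing is needed. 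Decomposing $\mathbb R^n$ along the normal fan of $\Newton(f_i)$, on the normal cone of a face $F$ the exponents $\undernotation a\in F$ are precisely those attaining $\undernotation a\cdot\undernotation u=h_i(\undernotation u)$, so
\[
f_i(e^{\undernotation u}) = e^{h_i(\undernotation u)}\Big(f_{i,F}(e^{\undernotation u})\,e^{-h_i(\undernotation u)} + \text{exponentially suppressed terms}\Big),
\]
with leading part governed by the truncated polynomial $f_{i,F}$. The hypothesis forbids $f_{i,F}$ from vanishing on $\mathbb R^n_+$, which prevents the dominant contribution from cancelling. Promoting this pointwise non-vanishing to a \emph{uniform} positive lower bound over each noncompact cone, and patching the estimates continuously across the lower-dimensional cones where the dominating face jumps, is the main obstacle; I expect to handle it by a compactness argument on the unit sphere combined with an induction on the faces of $\Newton(f_i)$ that keeps the suppressed terms uniformly controlled near cone boundaries.

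Granting the two-sided bound and using $\Re(t_i)\ge 0$, the modulus of the integrand is dominated by
\[
C\,\exp\!\Big(\Re(\undernotation\nu)\cdot\undernotation u-\sum_{i=1}^q\Re(t_i)\,h_i(\undernotation u)\Big)=C\,\exp\!\big(\Re(\undernotation\nu)\cdot\undernotation u-h_P(\undernotation u)\big),
\]
where $P:=\sum_{i=1}^q\Re(t_i)\Newton(f_i)$ and I used additivity and homogeneity of support functions under Minkowski sums, $h_P=\sum_i\Re(t_i)h_i$, together with $C=\prod_i c_i^{-\Re(t_i)}$. By the standard characterization of interior points, $\Re(\undernotation\nu)\in\operatorname{int}(P)$ is equivalent to $\Re(\undernotation\nu)\cdot\undernotation u<h_P(\undernotation u)$ for all $\undernotation u\neq 0$; homogeneity and compactness of the unit sphere then upgrade this to $h_P(\undernotation u)-\Re(\undernotation\nu)\cdot\undernotation u\ge\delta|\undernotation u|$ for some $\delta>0$. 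The majorant is therefore bounded by $C\,e^{-\delta|\undernotation u|}$, which is integrable on $\mathbb R^n$, so the integral converges absolutely. Finally, since $\delta$ and $C$ can be chosen uniformly on compact subsets of the tube domain \eqref{e:DomainConvergence}, the same type of majorant dominates the $\undernotation\nu$- and $\undernotation t$-derivatives of the integrand, which only insert factors growing at most linearly in $|\undernotation u|$ (namely $\undernotation u$ and $\log f_i(e^{\undernotation u})$). Differentiation under the integral sign is thus justified and $I(\undernotation\nu,\undernotation t)$ is holomorphic on the stated domain.
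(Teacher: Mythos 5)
Your overall strategy---passing to logarithmic coordinates and squeezing each $|f_i(e^{\undernotation u})|$ between constant multiples of $e^{h_i(\undernotation u)}$, then using the interior-point characterization to get exponential decay---is exactly the route taken in the cited reference~\cite{2011arXiv1103.6273B} (the paper itself only quotes the theorem, so that reference is the benchmark). The routine parts of your argument (the upper bound, the equivalence $\Re(\undernotation\nu)\in\mathrm{int}(P)\Leftrightarrow \Re(\undernotation\nu)\cdot\undernotation u<h_P(\undernotation u)$ for $\undernotation u\neq 0$, the compactness upgrade to $\delta|\undernotation u|$ decay, and differentiation under the integral) are correct. However, the proof has a genuine gap at its crux: the lower bound $|f_i(e^{\undernotation u})|\geq c_i e^{h_i(\undernotation u)}$ is precisely the point of the ``completely non-vanishing'' hypothesis and is essentially the entire mathematical content of the theorem, yet you only announce a plan for it (``I expect to handle it by a compactness argument \dots''). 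Your own diagnosis of why the naive argument fails is accurate---along directions near the boundary between two normal cones the convergence of $f_i(e^{\undernotation u})e^{-h_i(\undernotation u)}$ to truncated data is not uniform, and one must use non-vanishing of $f_{i,F}$ on the \emph{whole} orthant, not merely along rays---but diagnosing the obstacle is not the same as overcoming it. In the reference this step is an actual comparability lemma proved by a compactification/induction over faces; until you execute that lemma, the proof is incomplete.

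There is also a second, unacknowledged gap. For complex $t_i$ the modulus of the integrand is not what your majorant assumes: with a continuous branch of the logarithm, $\bigl|f_i(e^{\undernotation u})^{-t_i}\bigr| = |f_i(e^{\undernotation u})|^{-\Re(t_i)}\exp\bigl(\Im(t_i)\arg f_i(e^{\undernotation u})\bigr)$, so your bound silently discards the factor $\exp\bigl(\Im(t_i)\arg f_i(e^{\undernotation u})\bigr)$. To obtain a locally uniform integrable majorant on the full tube domain (where $\Im(\undernotation t)$ is unrestricted) you must prove that $\arg f_i(e^{\undernotation u})$ is uniformly bounded on $\mathbb R^n$. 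This is true under the hypotheses, but it requires an argument---for instance, observing that $f_i(e^{\undernotation u})=\sum_{\undernotation a} c_i(\undernotation a)e^{\undernotation a\cdot\undernotation u}$ always lies in the open convex cone generated by the finitely many coefficients $c_i(\undernotation a)$, which cannot be all of $\mathbb C$ when $f_i$ is non-vanishing on the orthant, or alternatively running the same face induction used for the modulus bound. Without both of these ingredients, convergence is only established for polynomials with positive coefficients and real $\undernotation t$, which is strictly weaker than the stated theorem.
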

		We shall apply this theorem to the parametric representation in
		\cref{parametric-integral-projective} with $\undernotation
		t=(n_\cU,n_\cF)$ and the polytope 
		$\Delta_\Gamma(L,D;\undernotation{\nu})$ in \cref{e:NewtonUF} attached to a Feynman graph
		$\Gamma$, with the exponents $\nu_i\geq1$ for $1\leq i\leq \nprops$.
        
		\subsubsection{Convergence of Feynman integrals}
        We will apply the above theorem for integer values of all the quantities $\undernotation \nu$, $\nprops$, $D$, $L$. The convergence of the integrals 
		is then controlled by $D$, $\undernotation \nu$, $n_\cF$. We also assume that the exponent $n_\cU$ is positive. Otherwise it can be taken as part of the numerator~\cite{delaCruz:2024xsm}. 
                The positivity condition on the
		powers of the Symanzik polynomials, and taking into
                account that the powers of the propagators are
                strictly positive integers $\nu_i\geq1$, leads to the following constraints
		\begin{align}
                  \label{e:convParam}
                  n_\cU&\geq0, \qquad n_\cF\geq0,  \qquad n_\cU+n_\cF={D\over2},\cr
                  \nu_1+\cdots+\nu_{\nprops}&=(L+1)n_\cF+L n_\cU,\qquad
            \nu_i\geq1 \qquad 1\leq i\leq\nprops,\\
\nonumber                         	(\nu_1,\dots,\nu_\nprops)  &\in \textrm{int}\left(	\left({D\over2}-n_\cF\right)\Newton(\cU_\Gamma)+n_\cF \Newton(\cF_\Gamma)\right)\,.     
		\end{align}

     The finite integral reads
		\begin{equation}\label{e:Fmassive}
			I_\Gamma(L, D;\undernotation{\nu}):=
                        \int_{\mathbb R_+^{\nprops-1}}
		\left.	{ \prod_{i=1}^\nprops x_i^{\nu_i}\over \cU_\Gamma^{{D\over2}-n_\cF}
				\cF_\Gamma^{n_\cF}}\right|_{x_\nprops=1} \prod_{i=1}^{\nprops-1} {\dd x_i\over x_i}, 
		\end{equation}
with $(\nu_1,\dots,\nu_\nprops)  \in
\textrm{int}\left(\Newton_\Gamma(L,D;\undernotation{\nu})\right)$. 
		When all  the internal masses are non-vanishing, the interior point
		condition in \cref{e:Fmassive} takes a simpler form
                thanks to the relation in \cref{e:FMinkowski}
		\begin{equation}
		\Newton_\Gamma(L,D;\undernotation{\nu})= {D\over2}\Newton(\cU_\Gamma)+n_\cF \simplexwithpars{\nprops,1}\,.
		\end{equation}
        Therefore the properties of the polytope associated with the Feynman
		integral does not require to know the second Symanzik
		polynomial in full. 
        In addition, we will consider even spacetime dimensions to focus on integer lattice polytopes.        This will be particularly useful when scanning for
		reflexivity of the polytope attached to a fully massive Feynman integral.

		In the case of vanishing  internal masses
		$(m_1,\dots,m_n)=(0,\dots,0)$ we have
		\begin{equation}\label{e:Fmassless}
			 I_{\Gamma,\undernotation 0} (L, D;\undernotation{\nu}):=  \int_{\mathbb R_+^{\nprops-1}} \left.
			{ \prod_{i=1}^\nprops x_i^{\nu_i}\over \cU_\Gamma^{{D\over2}-n_\cF}
				\cV_\Gamma^{n_\cF}}\right|_{x_\nprops=1} \prod_{i=1}^{\nprops-1} {\dd x_i\over x_i}
        \end{equation}
with $(\nu_1,\dots,\nu_\nprops)  \in  \textrm{int}\left(\left({D\over2}-n_\cF\right)\Newton(\cU_\Gamma)+n_\cF \Newton(\cV_\Gamma)\right)$. 
                
		\subsubsection{Quasi-finite integrals}\label{sec:quasifinite}
		The integral $I_\Gamma(L, D; \undernotation{\nu})$ differs from the
		parametric representation of a Feynman integral in \cref{e:Itensor} by the $\Upgamma$-function prefactor
		\begin{equation}
			c_\prefactorGamma=
			{\prefactorGamma(\nu_1+\cdots+\nu_\nprops-{LD\over2})\over \prod_{
					i=1}^\nprops\prefactorGamma(\nu_i)}.
		\end{equation}
		\Cref{thm:conv} gives the conditions  for the integral
                in \cref{parametric-integral-projective} to be finite
                for a given integer value of the dimension $D$, 
		but the prefactor $c_\prefactorGamma$ could have a pole in that
                dimension. The numerator does not vanish because all
                the exponents $\nu_i$ are greater than or equal to 1.
                
		Feynman integrals $I_\Gamma(\undernotation{\nu},D)$ that have at worst a
		$1/\epsilon$ divergence from the $\Upgamma$-function prefactor are known as quasi-finite Feynman integrals~\cite{vonManteuffel:2014qoa}. In this work, we will
		consider quasi-finite integrals as well.        For instance, the $\nprops$-point one-loop scalar Feynman integral in $D=2\nprops-2\epsilon$ dimensions is a quasi-finite integral. It 
		reads
		\begin{equation}
		\int \prod_{i=1}^{\nprops}{1\over
                  (\ell+\sum_{j=1}^i p_j)^2-m_i^2+\ii\varepsilon}
                {\dd^{2\nprops-2\epsilon}\ell\over \ii\pi^{\nprops-\epsilon}}=
                        \Upgamma(\epsilon) \, I_{\nprops-\rm gon}(1,2\nprops-2\epsilon, (1,\dots,1))
                      \end{equation}
                      with $p_1+\cdots+p_\nprops=0$ and
		where $ I(1, \nprops-2\epsilon, (1,\dots,1))={1\over
			(\nprops-1)!}+O(\epsilon)$ is a finite
                      integral (see \cref{e:evalUngon} for details
                      about this integral).		
		
		\subsubsection{Interpretation of the interior point condition}
		\label{sec:class11n}

                	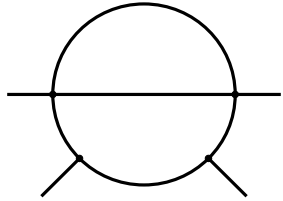
\begin{figure}[h]
			\centering
			\begin{tikzpicture}[scale=0.6]
				\filldraw [color = black, fill=none, very thick] (0,0) circle (2cm);
				\draw [black,very thick] (-2,0) to (2,0);
				\filldraw [black] (2,0) circle (2pt);
				\filldraw [black] (-2,0) circle (2pt);
				\filldraw [black] (1.414,-1.414) circle (2pt);
				\filldraw [black] (-1.414,-1.414) circle (2pt);
				\draw [black,very thick] (-2,0) to (-3,0);
				\draw [black,very thick] (2,0) to (3,0);
				\draw [black,very thick] (1.414,-1.414) to (2.25,-2.25);
				\draw [black,very thick] (-1.414,-1.414) to (-2.25,-2.25);
			\end{tikzpicture}
			\caption{A two-loop graph with edge label $(1,1,3)$.}\label{fig:11n}
		\end{figure}
		
		We illustrate the importance of the interior point condition for finiteness with an example of the two-loop graph with edge label $(1,1,3)$ (using the notation
                of ref.~\cite{Doran:2023yzu}, see also \cref{app:TwoloopGraph}) with four external legs
                attached to one of the lines of the skeleton graph. 
		
		 The momentum-space representation of this integral is
		\begin{equation}
		\mathcal{I}_{(1,1,3)}=\int_{\mathbb R^{2D}}\frac{1 }
			{		(\ell_1^2-m_{1}^2)^{\nu_1} ((\ell_1-\ell_2)^2-m_{2}^2)^{\nu_2}\prod_{i=1}^3
				(    (\ell_2+\sum_{j=1}^i p_j)^2-m_{2+i}^2)^{\nu_{2+i} }  }
			\frac{\dd^{D}\ell_1}{\ii \pi^{D\over 2}}	\frac{\dd^{D}\ell_2}{\ii \pi^{D\over2}}	
		\end{equation}
		with $p_1+\cdots+p_4=0$.
		The parametric representation of this integral is given by $\mathcal I_{(1,1,3)}=\Upgamma(\nu_1+\cdots+\nu_5-D)
		\,  I_{(1,1,3)}(2,D;\undernotation{\nu})$, where the integral without the $\Upgamma$-factor is
		\begin{equation}\label{e:I11n}
			 I_{(1,1,3)}(2,D;\undernotation{\nu})=
			\int_{\mathbb R_+^{4}} \left. x_1\cdots x_{5}\over
				\cU_{(1,1,3)}^{{3D\over2}-\sum_{i=1}^5\nu_i}\cF_{(1,1,3)}^{\sum_{i=1}^5\nu_i-D}\right|_{x_{5}=1}
                              \prod_{i=1}^{4}{\dd x_i\over x_i}\,.
		\end{equation}
        The Symanzik polynomials are obtained using the formulas of \cref{app:TwoloopGraph}
        \begin{align}
            \cU_{(1,1,3)}&=(x_1+y_1)(z_1+z_2+z_3)+x_1y_1,\cr
            \cV_{(1,1,3)}&=x_1 \sum_{1\leq i<j\leq 3} c_{ij}^x z_iz_j+y_1 \sum_{1\leq i<j\leq 3} c_{ij}^y z_iz_j+x_1y_1\sum_{i=1}^3 c_i^{xy} z_i ,\cr
            \cF_{(1,1,3)}&=\cU_{(1,1,3)}(m_1^2x_1+m_2^2y_1+m_3^2z_1+\cdots +m_5^2z_3)-\cV_{(1,1,3)},
        \end{align}
        where the kinematics coefficients  $c^x_{ij}$, $c^y_{ij}$ and $c^{xy}_i$ are linear combinations of kinematic invariants. We consider the internal masses $m_i$ and the kinematic coefficients generic and non-vanishing. 
		The positivity condition on the powers of the Symanzik
                polynomials gives
                \begin{equation}\label{e:pos113}
                {3D\over2}\geq \sum_{i=1}^5\nu_i\geq D,
                \end{equation}
                which
                gives several solutions. 
                When all powers are equal to one $\nu_1=\cdots=\nu_5=1$, then  $\sum_{i=1}^5\nu_i=5$. The positivity condition on the power of the Symanzik polynomial has a unique solution, that is, $D=4$. The resulting integral $I_{(1,1,3)}(2,4;(1,1,1,1,1))$ 
        diverges. Indeed one can check that the associated graph polytope $\Newton(\cU_{(1,1,3)})+\Newton(\cF_{(1,1,3)})$ does not have any interior points.
        But for the case when $\sum_{i=1}^5\nu_i=7$, the positivity in \cref{e:pos113} has a unique solution $D=6$.  The associated  polytope $2\Newton(\cU_{(1,1,3)})+\Newton(\cF_{(1,1,3)})$ has three interior points $(3,1,1,1,1)$, $(1,3,1,1,1)$ and $(2,2,1,1,1)$, and the resulting integrals $I_{(1,1,3)}(2,6;\undernotation{\nu})$ are finite for each choice of these interior points.  
        
        The interior point condition controls the ultraviolet behavior of the  one-loop integral over $\ell_1$ in \cref{e:I11n}. This integral is free of ultraviolet divergences  if the powers $\nu_1$ and $\nu_2$  satisfy $2(\nu_1+\nu_2)>D$.

                 \subsection{Counting the number of interior points}\label{sec:NBinterior}

                 The Ehrhart polynomial~\cite{Haase:2012,Beck:2015} is a fundamental tool in discrete geometry for enumerating the
lattice points contained in integer dilations of a convex polytope. 
For a lattice polytope \( P \subset \mathbb{R}^n \), the Ehrhart polynomial \( \ehr_P(t) \)
counts the number of integer lattice points in the dilated polytope \( tP \), that is,
\begin{equation}
\ehr_P(t) = \# \big( tP \cap \mathbb{Z}^n \big)\, .
\end{equation}
The number of interior lattice points is obtained from Ehrhart--Macdonald reciprocity,
\begin{equation}
\label{eq:n_int_ehr}
 \ehr_P(-t) = (-1)^{\dim P} \nint(tP) \, ,
\end{equation}
where
\begin{equation}
\nint(Q) := \# \big( \mathrm{int}(Q) \cap \mathbb{Z}^n \big)
\end{equation}
is the number of interior points of a polytope~$Q$.
In particular, \( \ehr_P(1) \) and \(\lvert \ehr_P(-1) \rvert\) give the total number of lattice and interior lattice points in \( P \), respectively.
Thus, the Ehrhart polynomial encodes both the boundary and interior
point structure of the polytope. We will use this in \cref{sec:upperdim} in order to put an upper 
bound on the spacetime dimensions of Feynman integrals when searching for reflexive and Fano polytopes. The knowledge of the Ehrhart polynomial gives an efficient way to identify polytope dilations containing a single interior point.

\medskip

The polytopes associated to a Feynman integral in
\cref{e:NewtonUFMassive,e:NewtonUFMassless} are the Minkowski sum of
the scaling of  two lattice polytopes:
$P$, which is 
either $\Newton(x_1+\cdots+x_\nprops) = \simplexwithpars{\nprops,1}$ for the massive case or $\Newton(\cV_\Gamma)$ for the massless case, and $Q=\Newton(\cU_\Gamma)$.
The number of lattice points depends on the scaling coefficients of both polytopes. 
To capture this dependence, one introduces the bivariate Ehrhart
polynomial \( \ehr_{P,Q}(t_1,t_2) \), defined by (see, e.g.~\cite[Chap.~19.1]{Gruber2007})
\begin{equation}
\ehr_{P,Q}(t_1,t_2) = \# \big((t_1P+t_2 Q) \cap \mathbb{Z}^n \big)\,,
\end{equation}
which counts the lattice points in the Minkowski sum of independently
dilated copies of \(P\) and \(Q\). The bivariate Ehrhart polynomial also satisfies a reciprocity relation:
\begin{equation}\label{e:countInterior}
\ehr_{P,Q}(-t_1,-t_2) = (-1)^n\, 
\nint(t_1P+ t_2Q)\,.
\end{equation}
We will use it to identify cases when Newton polytopes associated to Feynman graphs contain a single interior point.

In general, determination of the bivariate
Ehrhart polynomial is a difficult problem. The
Bernstein--McMullen's theorem~\cite[Theorem~19.4]{Gruber2007} implies that \(\ehr_{P,Q}(t_1,t_2)\) is a polynomial of total degree \(n\):
\begin{equation}
\ehr_{P,Q}(t_1,t_2) = \sum_{i=0}^{n} \sum_{j=0}^{n-i} c_{ij}\, t_1^i t_2^j,
\end{equation}
with rational coefficients \(c_{ij}\).  One can use this polynomial
form to determine the bivariate Ehrhart polynomial by computing the number of lattice points for various dilations \(t_1 P + t_2 Q\) using {\tt
  polymake}~\cite{polymake:FPSAC_2009} and fitting the coefficients~\(c_{ij}\).
We list a few properties that help to compute the Ehrhart polynomial~\cite{Beck:2015,Haase:2017,BrandenburgEtAl:2020}:
\begin{itemize}
  \item 
The leading homogeneous part (degree \(n\)) corresponds to the mixed volumes:
\begin{equation}
\mathrm{Vol}(t_1P + t_2 Q)
= \sum_{i=0}^{n} \binom{n}{i} V_i(P,Q)\, t_1^i t_2^{n-i},
\end{equation}
with $V_i(P,Q)$ the mixed volumes of $i$ copies of $P$ and $n-i$ copies of $Q$, so that $V_0(P,Q)=V(Q)$, $V_n(P,Q)=V(P)$ and $V_i(P,P)=V(P)$. 
\item Setting one of the variables to zero gives the corresponding univariate Ehrhart polynomial:
\begin{equation}
\ehr_{P,Q}(t_1,0) = \ehr_P(t_1), \qquad \ehr_{P,Q}(0,t_2) = \ehr_Q(t_2).
\end{equation}
\end{itemize}

\section{Fano and reflexive polytopes from Feynman integrals}\label{sec:reflexivefano}

		\newcommand{			\PolarPolytope}{\nabla}

We introduce the Fano and reflexive polytopes that we use in this work. 
  We refer to~\cite{CoxKatz1999,Nill2005,Kasprzyk12,Telen:2022}  for detailed discussions of the notions used in
		this work.
        
		\begin{defn}[Fano polytope~\cite{Nill2005, Kasprzyk12}]
			A lattice polytope is (canonical) Fano if the only lattice point that lies strictly
			in its interior is the origin.
		\end{defn}
		Let $N$ be
		the  dual lattice to $M$ with respect to the scalar
		product $\langle \bullet| \bullet\rangle: M\times N\to \mathbb Z$.
		Using this scalar product we can define the dual (or polar) polytope
		\begin{equation}
			\PolarPolytope:=\left\{\undernotation b=(b_1,\dots,b_n)\in N \, \middle| \, \langle
			a,b\rangle\geq-1, \, \forall \undernotation a\in \Delta\right\}\subset N\cong \mathbb Z^n\,.  
		\end{equation}

		\begin{defn}[Reflexive polytope]
			A  lattice polytope $\Delta$ is reflexive if its polar
			polytope $\PolarPolytope$ is a lattice polytope and has a single interior
			point. The polytopes $(\Delta,\PolarPolytope)$ are said to be
			mirror pairs.
			
		\end{defn}

	We make a few remarks:
        \begin{itemize}
          \item 	Newton polytopes of Laurent polynomials, defined in \cref{e:DeltaLaurentDef}, are integer lattice polytopes, i.e.\ they are included in
		some lattice $M$ of $\mathbb Z^n$. 
	Reciprocally, to a given integer lattice 
		polytope $\Delta$ one associates a Laurent polynomial
		$ f_{\Delta}=\sum_{\undernotation{\nu}\in\Delta} c_{\undernotation{\nu}} \undernotation
		x^{\undernotation{\nu}}$, with a similar definition for the dual polytope $\PolarPolytope$. This will be used in \cref{sec:mirror} when discussing the relation between polytopes from Feynman integrals and smooth Calabi--Yau varieties.

					\item The integrand of the
                                          Feynman integral~\eqref{parametric-integral-projective} is the inverse of the
			Laurent polynomial
			\begin{equation}
				f_\Newton=\frac{ 
					\cU_\Gamma^{{D\over2}-n_\cF} \cF_\Gamma^{n_\cF}} {x_1^{\nu_1}\cdots x_\nprops^{\nu_\nprops}} \, .
			\end{equation}
			When there is a single interior point $\undernotation{\nu}=(\nu_1,\dots,\nu_\nprops)$, the
			presence of the monomial in the denominator
                        (or the numerator of the Feynman integral) shifts the polytope so
			that the interior point becomes the origin.
            \item 
			Every reflexive polytope is Fano, but there are Fano polytopes that are
			not reflexive. In \cref{sec:dimloopscan,sec:edgescan} we list quasi-finite Feynman integrals whose associated polytopes are Fano, but not reflexive, for graphs up to ten edges. 
			
		\end{itemize}

		Together with the conditions in \cref{e:convParam}, we are now imposing the condition that the Newton polytope has only one interior point, and therefore it is Fano. 
        Among such polytopes, we then determine whether the polytope is reflexive.

		\subsection{Example: sunset integrals}\label{sec:sunset}

        	\begin{figure}[h]
			\centering
			\begin{tikzpicture}[scale=0.7]
				\draw[very thick] (0,0) ellipse (3cm and 2cm);
				\draw[very thick] (0,0) ellipse (3cm and 1.cm);
				\draw [very thick](0,0) ellipse (3cm and 0.5cm);
				\draw [very thick] (-3,0)--(-3.5,0);
				\draw [very thick] (3,0)--(3.5,0);
                \filldraw [black] (3,0) circle (2pt);
                \filldraw [black] (-3,0) circle (2pt);
				\node[text width=0.5cm, text centered ] at (0,0.1) {$\vdots$};
				\node [text width=0.5cm, text centered ] at (0,-2.3){$m_1$};
				\node [text width=0.5cm, text centered ] at (0,-1.3){$m_2$};
				\node [text width=0.5cm, text centered ] at (0,1.3){$m_{\nprops-1}$};
				\node [text width=1.5cm, text centered ] at (0,2.3){$m_{\nprops}$};
			\end{tikzpicture} 
			\caption{Multiloop sunset graph.}
			\label{cantaloupe-diagram}
		\end{figure}
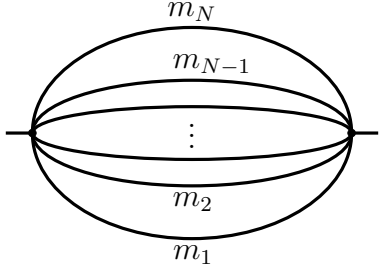

		The basic example of reflexive polytopes arising from
        Feynman integrals is the family of multiloop sunset graphs
        depicted in \cref{cantaloupe-diagram}, with $\nprops \ge 2$ edges ($\nprops=2$ corresponds to the one-loop bubble graph).
        They have the following Symanzik polynomials:
		\begin{align}
			\cU^{\nprops}_{\circleddash}&=\fp_1 \fp_2 \cdots \fp_{\nprops} \left(\frac {1}{\fp_1}+\dots + \frac {1}{\fp_{\nprops}}\right)\, ,\cr
			\cF^{\nprops}_{\circleddash}&= 	\cU^{\nprops}_{\circleddash}\,\left(m_1^2 \fp_1 + \dots + m_{\nprops}^2 \fp_{\nprops} \right) - p^2\fp_1 \cdots \fp_{\nprops}\,.\label{secondSymanzik-banana}
		\end{align}
This family is special because both of these polynomials can be written as matrix determinants.
For the first Symanzik polynomial this is a direct
consequence of its definition:
\begin{equation}
  \cU^{\nprops}_{\circleddash}= \det\left(
    \begin{pmatrix}
      x_1&0&\cdots&0\cr
      0&x_2&\cdots&0\cr
      \vdots&\cdots&\ddots&\vdots\cr
      0&0&\cdots &x_{\nprops-1}
    \end{pmatrix}
        + x_{\nprops}
      \begin{pmatrix}
        1&\cdots& 1\cr
        \vdots&\cdots&\vdots\cr
        1&\cdots& 1
      \end{pmatrix}
\right)\,.
\end{equation}
The determinantal representation for the second Symanzik polynomial is
\begin{equation}
  \cF^{\nprops}_{\circleddash}= - p^2\det\left(\begin{pmatrix}
      x_1&0&\cdots&0\cr
      0&x_2&\cdots&0\cr
      \vdots&\cdots&\ddots&\vdots\cr
      0&0&\cdots &x_{\nprops}
    \end{pmatrix}-
    {m_1^2x_1+\cdots+m_{L+1}^2x_{\nprops}\over p^2}
      \begin{pmatrix}
        1&\cdots& 1\cr
        \vdots&\cdots&\vdots\cr
        1&\cdots& 1
      \end{pmatrix}\right) \, .
  \end{equation}
               The associated Feynman integral (without the $\Upgamma$-prefactor)
                is
                \begin{equation}
                   I_{\circleddash}(\nprops-1,D;\undernotation{\nu})= \int_{\mathbb R_+^{N-1}} {x_1^{\nu_1}\cdots
                    x_{\nprops}^{\nu_{\nprops}}\over
                    (\cU^\nprops_{\circleddash})^{{D\over2}-n_\cF}
                    (\cF^\nprops_{\circleddash})^{n_\cF}                  }\Bigg|_{x_{\nprops}=1}
                  \prod_{i=1}^{\nprops-1}{\dd x_i\over x_i}\,,
                \end{equation}
                with
                $n_\cF=\nu_1+\cdots+\nu_{\nprops}-{(\nprops-1)D\over2}$ because the sunset graph has $L=\nprops-1$ loops.              %
In the generic kinematic case, thanks to \cref{e:FMinkowski} the Newton 
polytope is the scaled Minkowski sum of the  Newton polytope of the first Symanzik polynomial and the standard simplex
\begin{equation}\label{e:SunsetPolytope}
  \Delta_{\circleddash}(\nprops-1,D;\undernotation{\nu})={D\over2}\Newton(	\cU^{\nprops}_{\circleddash})+\left(\nu_1+\cdots+\nu_{\nprops}-{(\nprops-1)D\over2}\right)\simplexwithpars{\nprops,1}\,.
\end{equation}
The Newton polytope for the first Symanzik polynomial is 
\begin{equation}
  \Newton(\cU_{\circleddash}^{\nprops})=\Newton\left(\sum_{i=1}^\nprops \prod_{1\leq
    j\leq N \atop j\neq i} x_j\right) =  \Conv\set{\mathbf 1-e_i \, | \, i=1, \dots, \nprops} =\mathbf 1-\simplexwithpars{\nprops,1}\,,
    \label{simplex-sunset}
\end{equation}
with $\mathbf 1=(1,\dots,1)$.
\medskip

The number of interior points can be computed by applying
 the reciprocity relation of~\cref{e:countInterior} to the bivariate Ehrhart polynomial of the multiloop sunset polytope, which we derive in \cref{app:sunsetEhrhart}:
\begin{equation}\label{e:EhrSunsetResultMain}
    \ehr_{\circleddash}(n_\cF,D/2,\nprops)=P(n_\cF,D/2,\nprops)+\sum_{r=0}^{\nprops-1} c(r,\nprops) P(n_\cF,D/2,r)\,,
\end{equation}
with $P$ being a polynomial defined as
\begin{equation}
  P(t_1,t_2,n):=  \sum_{0\leq i,j\leq n-1\atop i+j\neq
    n-1}(-1)^{-i-j+n-1} \binom{i+j}{i} \binom{i+t_1}{i} \binom{j+t_2}{j} \,, 
\end{equation}
and the coefficients $c(r,n)$ are given by
\begin{equation}\label{e:EhrSunsetCoeff}
 c(r,n)=(-1)^n \textrm{coeff}_{x^{n-1}}   \left({2 x+1\over (1+x)^2}  \left(\frac{x}{x+1}\right)^{n-1-r}\right),
\end{equation}
where $\textrm{coeff}_{x^{r}}(f(x))$ means the coefficient of
$x^r$ in the series expansion of $f(x)$ around $x=0$.
We list the Ehrhart polynomials $\ehr_{\circleddash}(n_\cF,D/2,N)$ for up to $N=10$ edges in \cref{tab:EhrSunset}.

Using the reciprocity formula for the Ehrhart polynomial in \cref{e:EhrSunsetResultMain} we identify the polytopes with a single interior point for generic number of edges:
\begin{itemize}
\item {$D=2$ with $\undernotation{\nu}=(1,\dots,1)$:}  in this case
  $n_\cU=0$ and $n_\cF=1$, and the polytope is
  \begin{equation}\label{e:sunsetVerrill}
    \Delta_{\circleddash}(\nprops-1,2;(1,\dots,1))=\Newton(	\cU^{\nprops}_{\circleddash})+\simplexwithpars{\nprops,1}\,.
  \end{equation}

\item {$D=2\nprops$ with $\undernotation{\nu}=(\nprops-1,\dots,\nprops-1)$:}
in this case $n_\cU=\nprops$ and $n_\cF=0$, and the polytope reads
  \begin{equation}\label{e:polysunsetN}
    \Delta_{\circleddash}(\nprops-1,2\nprops;(\nprops-1,\dots,\nprops-1))=\nprops\Newton(	\cU^{\nprops}_{\circleddash})=\nprops\left(\mathbf{1}-\simplexwithpars{\nprops,1}\right)\,.
  \end{equation}
\end{itemize}

Both polytopes are known to be reflexive. The polytope of \cref{e:sunsetVerrill} is the one of the $A_{\nprops-1}$ root lattice (see \cref{e:RootAnpolytope}) that 
  has been studied in detail in ref.~\cite{verrill1996root}. 	%
  We refer to \cref{app:sunsetpolytope} for a discussion of the properties of this polytope  and \cref{sec:sunsetmirror} for a relation to Calabi--Yau geometry.  The polytope in \cref{e:polysunsetN} is a translation of the scaled simplex, therefore it is reflexive with interior point $(N,\dots,N)$.

As a side remark, we notice that the above  polytopes are saturated, because they are Newton polytopes of
determinants~\cite{Cara:2019}.

\section{Search by integral dimension and loop count}
\label{sec:dimloopscan}

 Feynman integrals are characterized by the number of
 loops $L$, the number of internal edges $\nprops$ of
 the graph and the dimension $D$ where the integral is
 evaluated.  These enter into the definition of the
 Newton polytope given in \cref{e:NewtonUF}. The
 exploration of reflexivity can be done by fixing any
 of these parameters.
 
 A common approach in physics is going up in the number of
 loops, increasing precision of theoretical predictions.
 Another possibility is to move up in dimension $D$.
 Fixing either of these parameters to their lowest possible values, $D=2$ and $L=1$, we can fully classify the resulting Fano and reflexive polytopes.
 We do this in \cref{sec:twodim} and \cref{sec:oneloop}, respectively.
 In \cref{sec:fourdimtwoloop} we consider the ``next-to-simplest'' case by setting $D=4$ and $L=2$ simultaneously.
 
 In the mathematics literature the natural parameter to fix is $\nprops$, which corresponds to fixing the dimension of the space where the polytope lives.
 This was the strategy followed in ref.~\cite{Kreuzer:2000xy} to classify reflexive
 polytopes living in four dimensions, which corresponds
 to graphs with up to $\nprops=5$ edges.  
 We follow this approach in \cref{sec:edgescan}, performing an exhaustive scan of Feynman integrals with up to $\nprops=10$ edges.
	
\subsection{$D=2$, $L=\mathrm{any}$: integrals in two dimensions}%
\label{sec:twodim}

		In $D=2$ dimensions, the conditions in \cref{e:convParam}
		give two
		solutions, namely $\sum_{i=1}^\nprops \nu_i=L+n_\cF$ with $n_\cF=0,1$. Since $\nu_i\geq1$,
		and the Euler characteristic of the graph gives $V-\nprops+L=1$, where
		$V\geq1$ is the number of vertices, we deduce that
		\begin{equation}
			L+n_\cF=\sum_{i=1}^\nprops \nu_i\geq \nprops \geq L, \qquad n_\cF=0,1  \, .
		\end{equation}
		\begin{itemize}
			\item For $n_\cF=0$, the only possibility is $\nprops = L$, $V = 1$, and $\nu_i=1$ for $1\leq i\leq \nprops$. 
			This case corresponds to a factorizable $\nprops$-bouquet Feynman graph, that is, a product of $\nprops$ tadpoles. The Symanzik polynomials are
			\begin{equation}
				\cU_{\nprops\mathrm{-bouquet}} = x_1\cdots x_\nprops,\quad \cF_{\nprops\mathrm{-bouquet}}=
				x_1\cdots x_\nprops\left(m_1^2x_1+\cdots +m_\nprops^2x_\nprops\right).
			\end{equation}
			The Newton polytope $\Newton(\cU_{\nprops\mathrm{-bouquet}})$ is a single point and thus does not have any interior points.
            Physically, this is due to the fact that tadpole integrals diverge in $D=2$.
            
			\item For $n_\cF=1$, there are two options.
            The first is $\nprops = L$, $V = 1$, $\nu_j = 2$ for some $j$, and $\nu_i = 1$ for $1 \le i \le \nprops, i \ne j$.
            This is again the case of an $\nprops$-bouquet, but now the Newton polytope is $\Newton(\cF_{\nprops\mathrm{-bouquet}})$.
            The point~$(\nu_1, \dots, \nu_\nprops)$ does not lie in the interior, hence the integral is again divergent and the associated polytope is not Fano.
            The second option is $\nprops = L+1$, $V = 2$, and $\nu_i = 1$ for $1 \le i \le \nprops$.
            This corresponds to the $L$-loop sunset integral which has been discussed in \cref{sec:sunset}.
            We conclude that this is the only class of Feynman graphs that leads to Fano and reflexive polytopes in $D=2$.
		\end{itemize}

\subsection{$D=\mathrm{any}$, $L = 1$: one-loop integrals}%
\label{sec:oneloop}

The one-loop $\nprops$-gon integrals have exponents
\begin{equation} n_\cU=-(\nu_1+\cdots+\nu_\nprops)+D, \quad n_\cF=(\nu_1+\cdots+\nu_\nprops)-{D\over2}
\end{equation}
with the  integrals taking the form
\begin{equation}
  I_{\nprops-\rm gon}(1,D; \undernotation{\nu} )=
  \int_{\mathbb R_+^{\nprops-1}} \left. {\prod_{i=1}^{\nprops} x_i^{\nu_i}\over \cU_{\nprops-\rm
      gon}^{{D\over2}-n_\cF} \cF_{\nprops-\rm
      gon}^{n_\cF}}\right|_{x_\nprops=1} \prod_{i=1}^{\nprops-1} {\dd x_i\over x_i}\,.
\end{equation}
Their Symanzik polynomials  read
\begin{equation}\label{e:ULFoneloop}
        \begin{aligned}
			\cU_{\nprops-\rm gon}&:=x_1+  \cdots +x_\nprops,\cr
			\cL_{\nprops-\rm gon}&:=m_1^2x_1+\cdots +m_\nprops^2x_\nprops,\cr
			\cV_{\nprops-\rm gon}&:=\sum_{1\leq i<j\leq \nprops}
       (p_{i}+\cdots+p_{j-1})^2 x_ix_j,\cr
                                               \cF_{\nprops-\rm gon}&:=\cU_{\nprops-\rm gon}\cL_{\nprops-\rm gon}-\cV_{\nprops-\rm gon}\,,
                                               \end{aligned}
		\end{equation}
		 where we use a cyclic labeling modulo
		$\nprops$ of the massive external momenta, which are
                constrained by momentum conservation $p_1+\cdots+p_\nprops=0$.  The convergence conditions imply that
\begin{equation}\label{e:nuvsD}
  D\geq   \nu_1+\cdots+\nu_\nprops\geq {D\over2}\,,
\end{equation}
together with the condition that the exponents lie in the interior
\begin{equation}
   \undernotation{\nu}= (\nu_1,\dots,\nu_\nprops)\in\textrm{int}(	\Delta_{\nprops-\rm gon}(D,1;\undernotation{\nu}))\,.
  \end{equation}

  \subsubsection{Massive case}\label{sec:onelooppolytope}

For the one-loop $\nprops$-gon with massive internal lines the polytope is  
	\begin{equation}\label{e:NewtonUFoneloop}
		\Delta_{\nprops-\rm gon}(1,D;\undernotation{\nu})=
			n_{\cU}\Newton(\cU_{\nprops-\rm gon})+n_\cF \Newton(\cF_{\nprops-\rm gon})=
			(\nu_1+\cdots+\nu_\nprops)\simplexwithpars{\nprops,1},
		\end{equation}
where we used \cref{e:FMinkowski} and the relation
\begin{equation}
\Newton(\cU_{\nprops-\rm
  gon})=\Newton(\cL_{\nprops-\rm
  gon})=\Newton(x_1+\cdots+x_{\nprops})=\simplexwithpars{\nprops,1}\,,
  \end{equation}
and that $n_\cU+2n_\cF=\nu_1+\cdots+\nu_\nprops$.
Therefore, the polytope for the one-loop graph with massive internal lines is the scaled standard simplex defined in \cref{e:standardsimplex-definition}. For this polytope it is immediate to compute the
Ehrhart polynomial~\cite{Beck:2015} because it is a scaled standard simplex:
\begin{equation}
  \ehr_{ \simplexwithpars{\nprops,1}}(t)=
  \prod_{r=1}^{n-1}{t+r\over r}\,. 
\end{equation}
The number of interior points can be computed using the Ehrhart
reciprocity formula in \cref{eq:n_int_ehr}:
\begin{equation}
  	\nint(k\simplexwithpars{\nprops,1}) =
        (-1)^{\nprops-1} \ehr_{  \simplexwithpars{\nprops,1}}(-k)=\prod_{r=1}^{n-1}{k-r\over r}\,.
      \end{equation}
Therefore the polytope $\Delta_{\nprops-\rm gon}(1,D;\undernotation{\nu})$
has only one interior point for
$k=\nu_1+\cdots+\nu_\nprops=\nprops$. Because $\nu_i\geq1$ for all
$i=1,\dots,\nprops$,  the interior point is
$\undernotation{\nu}=(1,\dots,1)$. This condition does not depend on the
spacetime dimension $D$, and the positivity  condition in
\cref{e:nuvsD} implies that the polytope associated to the one-loop
integral
	\begin{equation}\label{e:UFoneloop}
			 I_{\nprops-\rm gon}(1,D;(1,\dots,1))= \int_{\mathbb R_+^{\nprops-1}} \left. {1\over \cU_{\nprops-\rm
					gon}^{D-\nprops}\cF_{\nprops-\rm gon}^{\nprops-{D\over2}}}\right|_{x_\nprops=1} \dd x_1\cdots \dd x_{\nprops-1}
                                  \end{equation}
                                  is 
reflexive in $\nprops \leq D\leq 2\nprops $ dimensions. This corresponds to the vertical line at $L=1$ in
\cref{fig:PlotReflexiveLvsD}.

\subsubsection{Massless case}
In the massless case the polytope is
                      \begin{equation}
                               \Delta_{\nprops-\rm gon;0}(1,D;\undernotation{\nu})=
			n_{\cU}\Newton(\cU_{\nprops-\rm gon})+n_\cF \Newton(\cV_{\nprops-\rm gon})\,.
                      \end{equation}
The polytope associated with the $\cV_{\nprops-\rm gon}$ graph
polynomial is the standard second hypersimplex defined in \cref{e:secondhypersymplex-definition},
\begin{equation}
  \Newton(\cV_{\nprops-\rm
    gon})=\Newton\left(\sum_{1\leq i<j\leq
      \nprops}x_ix_j\right)=
      \simplexwithpars{\nprops,2} \, .
  \end{equation}
The polytope for the one-loop $\nprops$-gon with massless internal lines is the scaled Minkowski sum of two simplices,
  \begin{equation}\label{e:DeltaOneLoopMassless}
                               \Delta_{\nprops-\rm gon;0}(1,D;\undernotation{\nu})=
			n_{\cU}\simplexwithpars{\nprops,1}+n_\cF  \simplexwithpars{\nprops,2}\,.
                      \end{equation}

The number of interior points of the polytope in
\cref{e:DeltaOneLoopMassless} can be computed using the bivariate
Ehrhart polynomial associated to the Minkowski sum of the polytopes
$\Newton(\cU_{\nprops-\rm gon})=\simplexwithpars{\nprops,1}$ and $\Newton(\cV_{\nprops-\rm
  gon})=\simplexwithpars{\nprops,2}$. The bivariate Ehrhart polytope of the Minkowski sum of the standard simplex
and the second hypersimplex reads (see, e.g.~\cite[Chapter~9]{Beck:2015})
\begin{equation}\label{e:EhPolOneLoopMassless}
  \ehr(t_1,t_2)=
  \sum_{i=1}^{\nprops-1}\sum_{j=1}^{\nprops-1} E_{ij} t_1^i t_2^j \,,
\end{equation}
where $E_{ij}$ are the mixed Ehrhart coefficients. Although each polytope involved in the Minkowski sum is standard,  to the best of our knowledge the bivariate Ehrhart polynomial that we need is unknown.
The expression for the bivariate Ehrhart polynomial is
\begin{equation}\label{e:EhNgonConj}
\ehr_{\Delta_{\nprops-\rm gon;0}}(t_1,t_2)={1\over
       (\nprops-1)!} \prod_{r=0}^{\nprops-2} (t_1+2t_2+r+1)
     -{\nprops\over (\nprops-1)!} \prod_{r=0}^{\nprops-2} (t_2+r)\,,
\end{equation}
which we derive in \cref{app:oneLoopEhrhart}. The number of interior points is evaluated as in the single variable
case.  Using the binomial notation it reads
\begin{equation}
	\begin{aligned}
		\nint(\Delta_{\nprops-\rm gon;0}(1,D;\undernotation{\nu})) &=
		(-1)^{\nprops-1}  \ehr_{\Delta_{\nprops-\rm gon;0}}(-n_{\cU}, -n_{\cF}) \\
		&= \binom{n_\cU + 2n_\cF - 1}{N-1} - N \binom{n_\cF}{N-1}\,.
	\end{aligned}
\end{equation}
There is a single interior point $
\nint(\Delta_{\nprops-\rm gon;0}(1,D;\undernotation{\nu}))=1$
for
\begin{align}
  (n_\cU,n_\cF)&=(3,0), (1,1), (0,3)\qquad \nprops=3,\cr
  (n_\cU,n_\cF)&=(2m-\nprops,\nprops-m)\qquad
                 \left\lceil\nprops\over2\right\rceil\leq m\leq
                 \nprops, \nprops\geq4\,,
\end{align}
so the corresponding polytopes are Fano.
The case $ (n_\cU,n_\cF)=(n,0)$ (i.e.\ $m=\nprops$) is identical to the massive one-loop
case of the previous \namecref{sec:onelooppolytope}, so we already know that the polytope is reflexive.

In \cref{tab:oneloopmassless} we collect all Fano cases up to $\nprops=6$ edges and indicate which of them are reflexive, identified using {\tt
  polymake}~\cite{polymake:FPSAC_2009}. We give the corresponding interior points as well.

	\begin{table}[tb]
			\centering
			\setlength{\tabcolsep}{1em}
			\begin{tabular}{@{}l|r|r|r|r|@{}}
				$N$   & 3 & 4 & 5 &6 \\
                          \hline
                          		$(n_\cU,n_\cF)$
                                      &(3,0) & (4,0)&(5,0) &(6,0)\\
                          &reflexive & reflexive&reflexive
                                                  &reflexive\\
                       interior point   &(1,1,1) & (1,1,1,1)&(1,\dots,1)
                                                  &(1,\dots,1)\\
                          \hline
				$(n_\cU,n_\cF)$  &(1,1) & (0,2)&(1,2)
                                                  &(0,3)\\
                                                     &reflexive & reflexive&Fano
                                                  &Fano\\
                          interior point    &(1,1,1) & (1,1,1,1)&(1,\dots,1)
                                                  &(1,\dots,1)\\
                          \hline
                          				$(n_\cU,n_\cF)$
                                      &(0,3) & (2,1)&(3,1) &(2,2)\\
                          &reflexive & Fano&Fano
                                                  &Fano\\
                         interior point   &(2,2,2) & (1,1,1,1)&(1,\dots,1)
                                                  &(1,\dots,1)\\
                          \hline
                          
                              	$(n_\cU,n_\cF)$         &&&&(4,1)\\
                          
                          & & &
                                                  &Fano\\
                       interior point     & & &
                                                  &(1,\dots,1)\\
                          \hline
                          					\end{tabular}
			\caption{Fano and reflexive polytopes arising from one-loop graphs with massless internal lines, and the corresponding interior points.}
			\label{tab:oneloopmassless}
		\end{table} 

		\subsubsection{Example: $D = 4$}
		\begin{table}[tb]
			\centering
			\setlength{\tabcolsep}{1em}
			\begin{tabular}{ll}
				$n_\cF$ & Solutions   \\
				\midrule
				0 & $(\nu_1,\nu_2 )=(1,1)$  \\
				\addlinespace
				1 & $(\nu_1,\nu_2 )=(1,2)$, $(\nu_1,\nu_2 ,\nu_3)=(1,1,1)$
				\\
				\addlinespace
				2 &$(\nu_1,\nu_2 )=(2,2)$, $(\nu_1,\nu_2 ,\nu_3)=(1,1,2)$,  $(\nu_1,\nu_2,\nu_3,\nu_4 )=(1,1,1,1)$
			\end{tabular}
			\caption{Solutions at one loop in four dimensions. Reflexivity only depends on the sum of propagator powers~$\nu_i$.}
			\label{tab:1loop}
		\end{table} 
As an example, we collect reflexive cases for one-loop graphs in $D=4$ in \cref{tab:1loop}. 		The reflexivity conditions do not depend on the precise value of the powers of the propagators but only on their sum.
		
		\begin{itemize}
			\item The bubble graph: all the cases
                          $n_\cF=0,1,2$ lead to finite
			integrals when the internal propagators are
                        massive. Only the case with $n_\cF = 0$ is
                        reflexive; the other cases $n_\cF=1,2$ have more than one
                        interior point, so they are not Fano.
			\item The triangle graph: only the case
                          $n_\cF=1$ leads to a reflexive
			polytope. The graph polynomials are
			\begin{align}
				\cU_{\rm triangle}&=x_1+x_2+x_3, &\cL_{\rm
					triangle}&=m_1^2x_1+m_2^2x_2+m_3^2x_3,\cr
				\cV_{\rm
					triangle}&=p_3^2x_1x_2+p_2^2x_1x_3+p_1^2x_2x_3,
				&\cF_{\rm
					triangle}&= \cU_{\rm triangle}\cL_{\rm
					triangle}-  \cV_{\rm
					triangle}.                                 
			\end{align}
			
			The Newton polytopes  $\Newton(\cU_{\rm
                          triangle})+\Newton(\cF_{\rm triangle})=3\simplexwithpars{3,1}$ for
                        the massive triangle and
			$\Newton(\cU_{\rm triangle})+\Newton(\cV_{\rm
                          triangle})=\Newton(\cV_{\rm
                          triangle})+\simplexwithpars{3,1}$ for the massless triangle are two-dimensional and
                        reflexive. They are 
			shown in \cref{fig:polytopetriangle} and their relation to smooth elliptic curves is discussed in \cref{massive-triangle,massless-triangle}.

			\begin{figure}[ht]
				\centering
				\begin{subfigure}{0.45\textwidth}
					\centering
					\includegraphics[width=5cm]{polytope-massive-triangle.png}
					\caption{Massive case: 2d reflexive polytope \#15 in
						{\tt sagemath} database~\cite{sagemath}.}
					\label{fig:polytopemassivetriangle}
				\end{subfigure}
                \hfill
				\begin{subfigure}{0.45\textwidth}
					\centering
					\includegraphics[width=5cm]{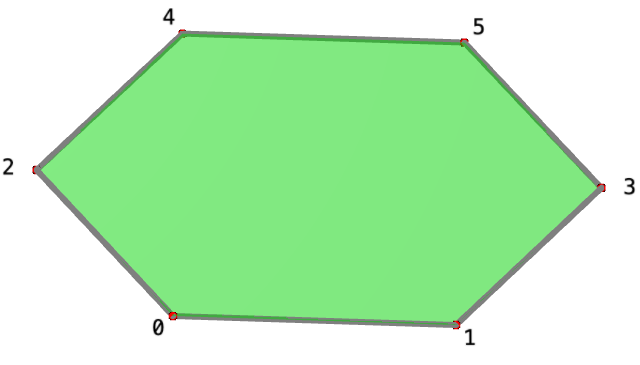}
					\caption{Massless case: 2d reflexive polytope \#9 in
						{\tt sagemath} database.}
					\label{fig:polytopemasslesstriangle}
				\end{subfigure}
				\caption{Two-dimensional reflexive
					polytopes for the triangle graph. }\label{fig:polytopetriangle}
			\end{figure}

			\item The box graph: only the case $n_\cF=2$ leads to reflexive
			polytopes. The graph polynomials are given by
			\begin{align}
				\cU_{\rm box}&=x_1+x_2+x_3+x_4, &\cL_{\rm
					box}&=m_1^2x_1+m_2^2x_2+m_3^2x_3+m_4^2x_4,\cr
				\cV_{\rm
					box}&=\sum_{1\leq i<j\leq 4} (p_{i}+\cdots+p_{j-1})^2 x_ix_j,
				&\cF_{\rm
					box}&= \cU_{\rm box}\cL_{\rm
					box}-  \cV_{\rm
					box}.    
			\end{align}
			The Newton polytopes  $2\Newton(\cF_{\rm
                          box})=4\simplexwithpars{4,1}$ for the massive box graph and
			$2\Newton(\cV_{\rm box})$ for the massless box
                        graph are three-dimensional and reflexive. They are
			shown in \cref{fig:polytopebox} and their relation to $K3$ varieties is discussed in \cref{massive-box,massless-box}.
		\end{itemize}

			\begin{figure}[ht]
				\centering
				\begin{subfigure}{0.45\textwidth}
					\centering
					\includegraphics[width=5cm]{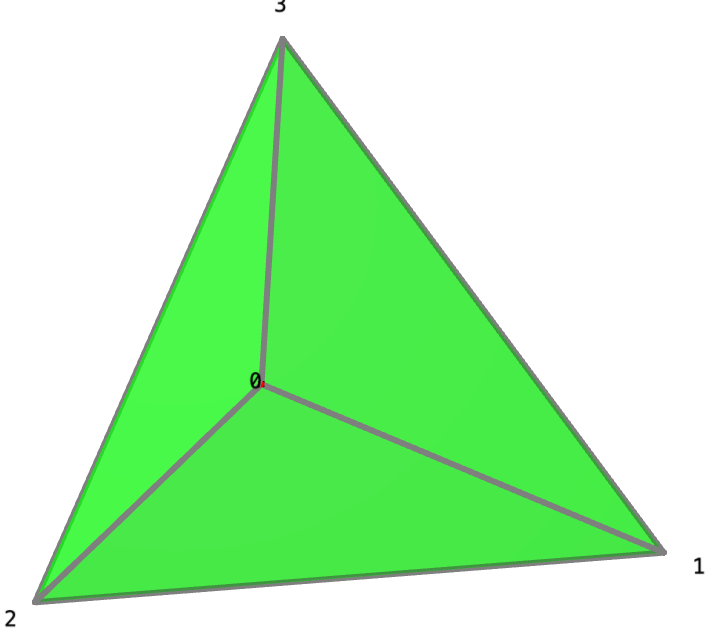}
					\caption{Massive case: 3d reflexive polytope \#4311 in
						{\tt sagemath} database.}
					\label{fig:polytopemassivebox}
				\end{subfigure}\hfill
				\begin{subfigure}{0.45\textwidth}
					\centering
					\includegraphics[width=5cm]{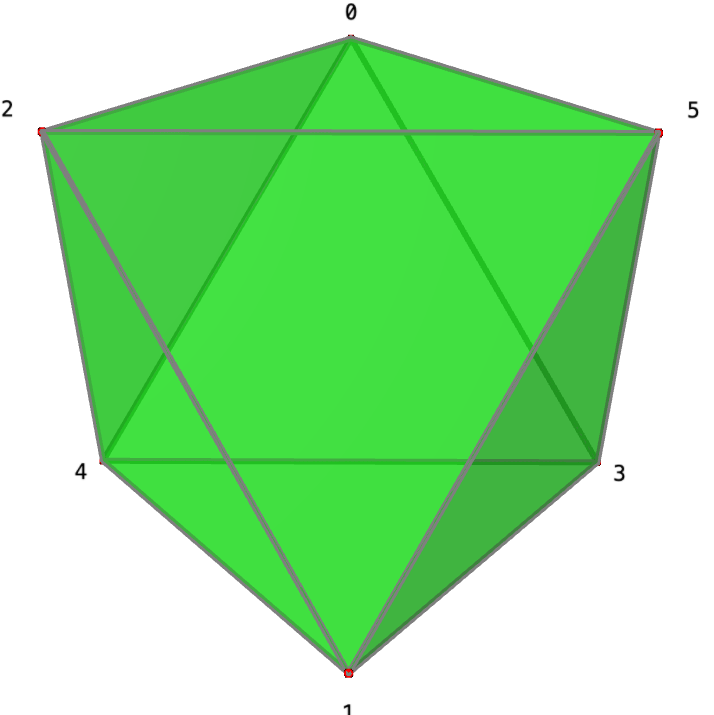}
					\caption{Massless case: 3d reflexive polytope \#3349 in
						{\tt sagemath} database.}
					\label{fig:polytopemasslessbox}
				\end{subfigure}
				\caption{Three-dimensional reflexive
					polytopes for the box graph. }\label{fig:polytopebox}
			\end{figure}

	\subsection{$D=4$, $L=2$: two-loop integrals in four dimensions}%
    \label{sec:fourdimtwoloop}

                The Symanzik graph polynomials for two-loop graphs
                take the generic expression derived
                in~\cite{Doran:2023yzu} and are recalled in
                \cref{app:TwoloopGraph}. Thanks to these generic
                expressions, it is not difficult to search for two-loop
                graph polytopes that are Fano or reflexive using
                {\tt polymake}~\cite{polymake:FPSAC_2009} or {\tt sagemath}~\cite{sagemath}.

		Solving the finiteness  conditions for two-loop graphs
                and searching for polytopes with a single interior
                point, we only find three cases: the kite, shown
                in \cref{fig:kite}, the house graph, shown in
                \cref{fig:house}, and the                tardigrade,
                shown in \cref{fig:tardigrade}.
                Polytopes for these graphs with all
                massive internal lines lead to non-reflexive Fano
                polytopes. The massless internal line cases lead to
                reflexive polytopes. 
		
		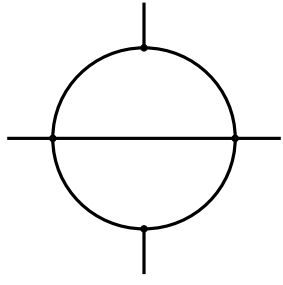
\begin{figure}[h]
			\centering
			\begin{tikzpicture}[scale=0.6]
				\filldraw [color = black, fill=none, very thick] (0,0) circle (2cm);
				\draw [black,very thick] (-2,0) to (2,0);
				\filldraw [black] (2,0) circle (2pt);
				\filldraw [black] (-2,0) circle (2pt);
				\filldraw [black] (0,2) circle (2pt);
				\filldraw [black] (0,-2) circle (2pt);
				\draw [black,very thick] (-2,0) to (-3,0);
				\draw [black,very thick] (2,0) to (3,0);
				\draw [black,very thick] (0,2) to (0,3);
				\draw [black,very thick] (0,-2) to (0,-3);
			\end{tikzpicture}
			\caption{The kite graph, edge label $(1,2,2)$.}\label{fig:kite}
		\end{figure}

        \paragraph{The kite graph.}
		The graph polynomials of the kite graph, shown in in \cref{fig:kite}, are given by
                \cref{app:TwoloopGraph}:
		\begin{align}\label{e:KitePolynomials}
			\cU_{(1,2,2)}&=(y_1+y_2+z_1+z_2)x_1 +(z_1+z_2)(y_1+y_2)\, ,\cr
			\cL_{(1,2,2)}&=m_1^2x_1+m_2^2y_1+m_3^2y_2+m_4^2z_1+m_5^2z_2\,,\cr
			\cV_{(1,2,2)}&=p_1^2 (x_1 y_1 y_2+ x_1 y_2 z_1+y_1 y_2
			z_1 +x_1 y_2 z_2 +y_1 y_2 z_2)\cr
			&+ p_3^2 (x_1 y_1 z_1+x_1 y_2 z_1+x_1
			y_1 z_2+x_1 y_2 z_2)
			\cr
			& + p_2^2 (x_1
			y_1 z_2+x_1 y_2 z_2 + x_1 z_1 z_2 + y_1 z_1 z_2 + y_2 z_1 z_2)\cr
			& - 2 p_1\cdot p_3 (x_1 y_2
			z_1 +x_1
			y_2 z_2)
			+ 2
			p_2\cdot p_3
			(x_1 y_1 z_2 +x_1 y_2
			z_2)
			- 2 p_1\cdot
			p_2 x_1 y_2 z_2\, ,\cr
			\cF_{(1,2,2)}&=\cU_{(1,2,2)}\cL_{(1,2,2)}-\cV_{(1,2,2)}\,.
		\end{align}
		The Newton polytope for the kite graph in $D=4$ dimensions
		with non-vanishing internal masses,
                \begin{equation}
                  \Delta_{(1,2,2)}(2,4;(1,1,1,1,1))=\Newton(\cU_{(1,2,2)})+\Newton(\cF_{(1,2,2)})=2\Newton(\cU_{(1,2,2)})+\simplexwithpars{5,1} \, ,
                \end{equation}
                has a single interior point, so it is Fano, but it is not reflexive. The
		Newton polytope of the kite graph with all massless
                internal lines,
                \begin{equation}
                  \Delta_{(1,2,2),0}(2,4;(1,1,1,1,1))= \Newton(\cU_{(1,2,2)})+\Newton(\cV_{(1,2,2)}) \, ,
                  \end{equation}
                  is a reflexive
		four-dimensional polytope with
         24 vertices, 12 facets, 45 lattice points and lattice volume of 196.
        Using this information and expressing the polytope in normal form, we can locate it in the Kreuzer--Skarke database~\cite{Kreuzer:2002uu}, which also gives the 
     associated Hodge numbers: $h_{11}=8$ and $h_{12}=40$.  
		
		\begin{figure}[h]
			\centering
			\begin{tikzpicture}[scale=0.6]
				\filldraw [color = black, fill=none, very thick] (0,0) circle (2cm);
				\draw [black,very thick] (-2,0) to (2,0);
				\filldraw [black] (2,0) circle (2pt);
				\filldraw [black] (-2,0) circle (2pt);
				\filldraw [black] (0,2) circle (2pt);
				\filldraw [black] (1.414,-1.414) circle (2pt);
				\filldraw [black] (-1.414,-1.414) circle (2pt);
				\draw [black,very thick] (-2,0) to (-3,0);
				\draw [black,very thick] (2,0) to (3,0);
				\draw [black,very thick] (0,2) to (0,3);
				\draw [black,very thick] (1.414,-1.414) to (2.25,-2.25);
				\draw [black,very thick] (-1.414,-1.414) to (-2.25,-2.25);
			\end{tikzpicture}
			\caption{The house graph, edge label $(1,2,3)$. 
            }\label{fig:house}
		\end{figure}
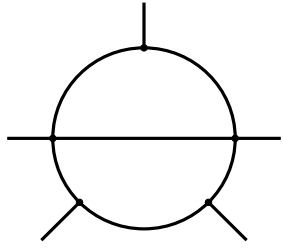
        \paragraph{The house graph.}
		The graph polynomials of the house graph in
                \cref{fig:house} read
		\begin{align}\label{e:123Polynomials}
			\cU_{(1,2,3)}&=x_{1} (y_{1}+y_{2}+z_1+z_2+z_3)+(y_{1}+y_{2})(z_1+z_2+z_3),\cr
			\cL_{(1,2,3)}&=m_1^2x_1+m_2^2y_1+m_3^2y_2+m_4^2z_1+m_5^2z_2+m_6^2z_3,\cr
			\cV_{(1,2,3)}&=-2 p_{1} \cdot p_{2}\, x_{1} y_{2} (z_{2}+z_{3})-2 p_{1} \cdot p_{3} x_{1} y_{2}
			z_{3} +2 p_{1} \cdot p_{4}x_{1} y_{2} 
			(z_{1}+z_{2}+z_{3})\cr
			&+p_{1} \cdot p_{1} y_{2} (x_{1}
			(y_{1}+z_{1}+z_{2}+z_{3})+y_{1}
                   (z_{1}+z_{2}+z_{3}))\cr
                   &+2
			 p_{2} \cdot p_{3} z_{3} (x_{1} (y_{1}+y_{2}+z_{1})+z_{1}
			(y_{1}+y_{2}))\cr
			&-2  p_{2} \cdot p_{4}x_{1} (y_{1}+y_{2})
			(z_{2}+z_{3})+p_{2} \cdot p_{2} (z_{2}+z_{3}) (x_{1}
                   (y_{1}+y_{2}+z_{1})+z_{1} (y_{1}+y_{2}))\cr
                   &-2 
			p_{3} \cdot p_{4} x_{1} z_{3}(y_{1}+y_{2})
			+p_{3} \cdot p_{3} z_{3} (x_{1}
			(y_{1}+y_{2}+z_{1}+z_{2})+(y_{1}+y_{2})
			(z_{1}+z_{2}))\cr
            &+p_{4} \cdot p_{4} x_{1} (y_{1}+y_{2})
			(z_{1}+z_{2}+z_{3}),\cr
			\cF_{(1,2,3)}&=\cU_{(1,2,3)}\cL_{(1,2,3)}-\cV_{(1,2,3)}\,.
		\end{align}
		The Newton polytope for the house graph with
                non-vanishing internal masses is
                \begin{equation}
                  \Delta_{(1,2,3)}(2,4;(1,1,1,1,1,1))=  2 \Newton(\cF_{(1,2,3)}) =2\Newton(\cU_{(1,2,3)})+2\simplexwithpars{6,1} \, .
                \end{equation}
                It has a single interior point, so it is Fano, but again it is not reflexive.
                The Newton polytope of the house graph with vanishing
                internal masses,
                \begin{equation}
                  \Delta_{(1,2,3),0}(2,4;(1,1,1,1,1,1)=  2 \Newton(\cV_{(1,2,3)}) \, ,             
                \end{equation}
                is a reflexive five-dimensional polytope.

		\begin{figure}[h]
			\centering
			\begin{tikzpicture}[scale=0.6]
				\filldraw [color = black, fill=none, very thick] (0,0) circle (2cm);
				\draw [black,very thick] (-2,0) to (2,0);
				\filldraw [black] (2,0) circle (2pt);
				\filldraw [black] (0,2) circle (2pt);
				\filldraw [black] (0,-2) circle (2pt);
				\filldraw [black] (0,0) circle (2pt);
				\filldraw [black] (-2,0) circle (2pt);
				\draw [black,very thick] (-2,0) to (-3,0);
				\draw [black,very thick] (2,0) to (3,0);
				\draw [black,very thick] (0,2) to (0,3);
				\draw [black,very thick] (0,-2) to (0,-3);
				\draw [black,very thick] (0,0) to (0,-1);
			\end{tikzpicture}
			\caption{The tardigrade graph, edge label $(2,2,2)$. }\label{fig:tardigrade}
		\end{figure}
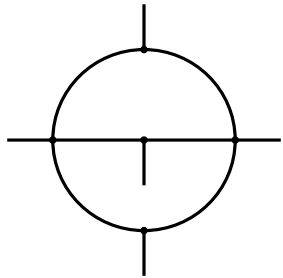
        \paragraph{The tardigrade graph.}
		The tardigrade graph has the graph polynomials
		\begin{align}\label{e:tardigradePolynomials}
			\cU_{(2,2,2)}&=(x_{1}+x_2) (y_{1}+y_{2}+z_1+z_2)+(y_{1}+y_{2})(z_1+z_2),\cr
			\cL_{(2,2,2)}&=m_1^2x_1+m_2^2x_2+m_3^2y_1+m_4^2y_2+m_5^2z_1+m_6^2z_2,\cr
			\cV_{(2,2,2)}&=-2 p_{1} \cdot p_{2} x_{2} z_{2}  (y_{1}+y_{2})+2p_{1} \cdot p_{3} x_{2} y_{2}
			 (z_{1}+z_{2})-2  p_{1} \cdot p_{4}x_{2}
			(y_{1}+y_{2}) (z_{1}+z_{2})\cr
			&+p_{1} \cdot p_{1} x_{2} (x_{1}
			(y_{1}+y_{2}+z_{1}+z_{2})+(y_{1}+y_{2})
			(z_{1}+z_{2}))-2 p_{2} \cdot p_{3} y_{2} z_{2}
                   (x_{1}+x_{2})\cr
                  	&+p_{2} \cdot p_{2} z_{2} (x_{1}
			(y_{1}+y_{2}+z_{1})+x_{2} (y_{1}+y_{2}+z_{1})+z_{1}
                     (y_{1}+y_{2}))\cr                
                       &+p_{3} \cdot p_{3} y_{2} (x_{1}
			(y_{1}+z_{1}+z_{2})+x_{2} (y_{1}+z_{1}+z_{2})+y_{1}
                         (z_{1}+z_{2}))\cr                         
                       &+p_{4} \cdot p_{4} (x_{1}+x_{2}) (y_{1}+y_{2})
                         (z_{1}+z_{2})+2  p_{2} \cdot p_{4} (x_{1}+x_{2})
                            (y_{1}+y_{2})z_{2}\cr
                          &
                            -2  p_{3} \cdot p_{4} (x_{1}+x_{2})y_{2}
			(z_{1}+z_{2}),                       \cr
			\cF_{(2,2,2)}&=\cU_{(2,2,2)}\cL_{(2,2,2)}-\cV_{(2,2,2)}\,.
		\end{align}
		
		The Newton polytope for the tardigrade with
                non-vanishing internal masses is 
                \begin{equation}
                  \Delta_{(2,2,2)}(2,4;(1,1,1,1,1,1))=2\Newton(\cF_{(2,2,2)})=2\Newton(\cU_{(2,2,2)})+2\simplexwithpars{6,1} \, .                  
                \end{equation}
                It has a single interior point, so it is Fano, but it is not reflexive.
                The Newton polytope for the tardigrade with
                all vanishing internal masses is
                \begin{equation}
                  \Delta_{(2,2,2)}(2,4;(1,1,1,1,1,1))=2\Newton(\cV_{(2,2,2)}) \, ,                                 \end{equation}
and it is a five-dimensional reflexive polytope.

	\section{Search by the number of edges}%
    \label{sec:edgescan}
			\begin{figure}[h]
			\centering
                                	\begin{tikzpicture}[scale=0.6]
				\filldraw [black] (0,0) circle (2pt);
				\filldraw [black] (1,1) circle (2pt);
				\filldraw [black] (1,-1) circle (2pt);
				\filldraw [black] (-1,1) circle (2pt);
                                \filldraw [black] (-1,-1) circle
                                (2pt);
                                \filldraw [black] (-2,-2) circle
                                (2pt);
                                \draw [black,very thick] (-2,-2) to
                                (2,2);
                                  \draw [black,very thick] (-2,2) to
                                  (2,-2);
                                    \draw [black,very thick] (-1,-1)
                                    to (-1,1);
                                      \draw [black,very thick] (1,1)
                                      to (1,-1);
                                        \draw [black,very thick] (0,-2) to (0,0);
			\end{tikzpicture}
				\caption{A factorizable graph.}
				\label{fig:doubletriangle}
		\end{figure}
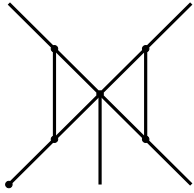
		In this \namecref{sec:edgescan}, we will present our results for an exhaustive direct search of Fano and reflexive polytopes associated to Feynman integrals with up to ten edges, $\nprops \le 10$, assuming generic external kinematics and generic propagator masses.
		
		In generic kinematics the search is facilitated by the fact that the polytope for the second Symanzik polynomial is the Minkowski
		sum of the polytope of the first Symanzik polynomial and the mass hyperplane
		\eqref{e:FMinkowski}.
		We restrict our search to 1-particle irreducible (1PI) non-factorizable graphs; an example of an excluded graph is shown in \cref{fig:doubletriangle}.
        This leads to an upper bound on the number of loops, since $V \ge 2$ implies $L=1+\nprops-V \le \nprops-1$. In addition, as external kinematics is generic, we exclude cases with multiple external momenta attached to the same vertex.
        We generate all Feynman graphs that satisfy these requirements using $\mathtt{QGRAF}$~\cite{Nogueira:1991ex} with the following configuration options:
        \begin{verbatim}
            options = onepi, cycli, topol;
            true = iprop[phi, 0, 10];
        \end{verbatim}

\subsection{Upper bound on dimension}\label{sec:upperdim}
We have just seen that fixing the number of edges~$\nprops$ puts an upper bound on the number of loops~$L$.
We will show now that the remaining parameter in our search---spacetime dimension~$D$---is also bounded.
This means that for a fixed number of edges~$\nprops$, we can perform an exhaustive search for Fano and reflexive polytopes, as we only need to consider a finite number of cases.

To establish an upper bound on $D$, we will make use of the following
\begin{proposition}[cf.~{\cite[Exercise 4.10]{Beck:2015}}]
	Let \(P\) be a \(d\)-dimensional polytope, \(d \geq 1\). Then the number of interior lattice points of its dilations satisfies
	\begin{equation}
		\begin{aligned}
			\nint((d+1)P) & \geq 1, \\
			\nint((d+2)P) & > 1.
		\end{aligned}
	\end{equation}
\end{proposition}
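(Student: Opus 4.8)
The plan is to control $\nint(tP)$ through the $h^*$-polynomial (equivalently, the $\delta$-vector) of the lattice polytope $P$, and to read off both inequalities from its elementary positivity. Recall that the Ehrhart series of a $d$-dimensional lattice polytope can be written as
\[
\sum_{t\geq 0} \ehr_P(t)\, z^t = \frac{\sum_{i=0}^{d} h_i^*\, z^i}{(1-z)^{d+1}},
\]
where, by Stanley's non-negativity theorem, the coefficients $h_i^*$ are non-negative integers, with $h_0^* = 1$ always (as $\ehr_P(0)=1$).

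First I would combine the Ehrhart--Macdonald reciprocity \eqref{eq:n_int_ehr} with this expansion to obtain a generating function for the interior point counts. Evaluating $(-1)^{d+1}$ times the series at $1/z$ and simplifying gives
\[
\sum_{t\geq 1} \nint(tP)\, z^t = \frac{\sum_{i=0}^{d} h_i^*\, z^{d+1-i}}{(1-z)^{d+1}},
\]
and extracting the coefficient of $z^t$ by means of $z^{d+1-i}/(1-z)^{d+1} = \sum_{t} \binom{t+i-1}{d} z^t$ yields the closed form
\[
\nint(tP) = \sum_{i=0}^{d} h_i^* \binom{t+i-1}{d}.
\]

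With this identity in hand, both bounds follow by keeping only the $i=0$ term and discarding the rest, which is legitimate since every summand is non-negative. At $t = d+1$ this gives
\[
\nint\big((d+1)P\big) = \sum_{i=0}^{d} h_i^* \binom{d+i}{d} \;\geq\; h_0^* \binom{d}{d} = 1,
\]
and at $t = d+2$,
\[
\nint\big((d+2)P\big) = \sum_{i=0}^{d} h_i^* \binom{d+i+1}{d} \;\geq\; h_0^* \binom{d+1}{d} = d+1 > 1,
\]
where the final strict inequality uses $d \geq 1$.

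The only non-elementary input is Stanley's non-negativity of the $h^*$-vector, which I would cite rather than reprove; the rest is the reciprocity already recorded in \eqref{eq:n_int_ehr} together with binomial bookkeeping. I expect the step requiring the most care to be the derivation of the interior generating function---in particular getting the sign $(-1)^{d+1}$ and the shifted powers $z^{d+1-i}$ right---after which the two inequalities are immediate; note that the argument in fact yields the sharper bound $\nint((d+2)P) \geq d+1$.
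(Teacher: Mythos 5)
Your proof is correct and follows essentially the same route as the paper: both rest on the $h^*$-vector expansion with Stanley non-negativity, combine it with Ehrhart--Macdonald reciprocity to get $\nint(tP)=\sum_{i=0}^{d}h_i^*\binom{t+i-1}{d}$, and then keep only the $i=0$ term. The only difference is cosmetic---you derive this identity via the Ehrhart generating series, while the paper quotes the polynomial form $\ehr_P(t)=\sum_{i=0}^d h_i^*\binom{t+d-i}{d}$ directly---and your closing remark that one actually gets $\nint((d+2)P)\geq d+1$ is likewise implicit in the paper's bound $\nint(kP)\geq\binom{k-1}{d}$.
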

\begin{proof}
	This is a straightforward application of Ehrhart theory (see, e.g.,~\cite[Chapters 3 and 4]{Beck:2015}). The Ehrhart polynomial of \(P\) can be written in terms of the so-called \(h^*\)-vector as
	\begin{equation}
		\ehr_P(t) = \sum_{i=0}^d h_i^* \binom{t+d-i}{d},
		\label{eq:ehr_hstar}
	\end{equation}
	where \(h_0^* = 1\) and \(h_i^* \geq 0\). Combining \cref{eq:ehr_hstar} with the reciprocity formula~\eqref{eq:n_int_ehr}, we find
	\begin{equation}
		\nint(kP) = \sum_{i=0}^d h_i^* \binom{k+i-1}{d} \geq \binom{k-1}{d}. 
	\end{equation}
	Taking \(k=d+1\) and \(k=d+2\), we obtain the desired bound.
\end{proof}

The Newton polytope~\(\Newton_\Gamma(L, D; \undernotation{\nu})\) lives in \((N-1)\) dimensions, so the above proposition implies that it cannot be Fano if it contains an \((N+1)\)-fold dilation of some polytope.
This means that in the representation of \(\Newton_\Gamma(L, D; \undernotation{\nu})\) as a weighted Minkowski sum we cannot have coefficients exceeding \(N\).

In the case of non-vanishing internal masses, we can write 
\begin{equation}
\Newton_\Gamma(L, D; \undernotation{\nu}) = D/2 \, \Newton (\cU_\Gamma) + n_\cF \Newton(\cL_\Gamma)\,,
\end{equation}
 so that $D/2, n_\cF \leq N $.
With more general kinematics, we have \(n_\cU, n_\cF \leq N\), and in particular, \(D/2 = n_\cU + n_\cF \leq 2N\).

\subsection{Symmetries}
Graphs may share the same Symanzik polynomials and thus correspond to the same Feynman integral. The determination of graphs that share the same Symanzik polynomials is a basic problem in Feynman integral calculations. Given a set of graphs, the isomorphism problem can be solved by choosing a canonical labeling and comparing the relabeled graphs. The isomorphism between two Newton polytopes can be checked similarly by constructing a graph known as a 1-skeleton, choosing a canonical ordering of the graph vertices, and comparing the relabeled polytopes~\cite{Bremner_2014}. Another alternative is to construct a normal form for the vertices of a polytope~\cite{grinis2013normalformsconvexlattice}.  For our purposes, it is not necessary to construct all polytope symmetries of a given integral~\cite{delaCruz:2024ssb}, but only to determine a representative of a given set of integrals sharing the same polytope.  

A strategy based on normal ordering of vertices is used ordinarily in IBP-reduction packages. It is known as \emph{Pak algorithm} in this context~\cite{Pak:2011xt}.  We can compare graphs by ordering variables of the graph polynomial pairs 
$P_\Gamma=(\cU_\Gamma, \cF_\Gamma)$. Two graphs $ \Gamma_1$, $\Gamma_2$ are equivalent if their polynomial pairs are equal under their normal orderings
\begin{equation}
	\text{Normal}(P_{\Gamma_1})
	=\text{Normal}(P_{\Gamma_2}) \, . 
\end{equation}

We use Pak algorithm to identify polynomials that have the same monomials (but not necessarily having the same coefficients) as they will correspond to the same polytope. For instance, the graphs in \cref{example-symmetries} share the same polytope, because they have the same vertices, despite the fact that the graphs do not have the same number of external legs.
As a result, we can group graphs into equivalence classes and choose a representative of each class.
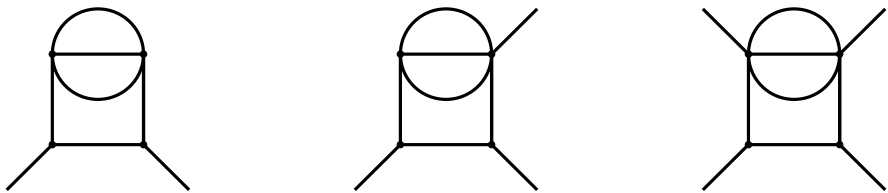
\begin{figure}[htb]
	\centering
	\begin{tikzpicture}[scale=0.6]
		\filldraw [color = black, fill=none,
		very thick] (0,1) circle (1cm);
		\draw [black,very thick] (0,0) +(-1cm,-1cm) rectangle +(1cm,1cm);
		\filldraw [black] (1,-1) circle (2pt);
        \filldraw [black] (1,1) circle (2pt);
		\filldraw [black] (-1,1) circle (2pt);
		\filldraw [black] (-1,-1) circle(2pt);
		\draw [black,very thick] (-2,-2) to
		(-1,-1);                                                             
		\draw [black,very thick] (1,1)
		to (1,-1);
		\draw [black,very thick] (2,-2) to (1,-1);
	\end{tikzpicture}\hspace{2cm}
	\begin{tikzpicture}[scale=0.6]
		\filldraw [color = black, fill=none,
		very thick] (0,1) circle (1cm);
		\draw [black,very thick] (0,0) +(-1cm,-1cm) rectangle +(1cm,1cm);
		\filldraw [black] (1,1) circle (2pt);
		\filldraw [black] (1,-1) circle (2pt);
		\filldraw [black] (-1,1) circle (2pt);
		\filldraw [black] (-1,-1) circle
		(2pt);
		\draw [black,very thick] (-2,-2) to
		(-1,-1);                               
		\draw [black,very thick] (2,2)
		to (1,1);
		\draw [black,very thick] (1,1)
		to (1,-1);
		\draw [black,very thick] (2,-2) to (1,-1);
	\end{tikzpicture}\hspace{2cm}
	\begin{tikzpicture}[scale=0.6]
		\filldraw [color = black, fill=none,
		very thick] (0,1) circle (1cm);
		\draw [black,very thick] (0,0) +(-1cm,-1cm) rectangle +(1cm,1cm);
		\filldraw [black] (1,1) circle (2pt);
		\filldraw [black] (1,-1) circle (2pt);
		\filldraw [black] (-1,1) circle (2pt);
		\filldraw [black] (-1,-1) circle	(2pt);
		\draw [black,very thick] (-2,-2) to
		(-1,-1);
		\draw [black,very thick] (-2,2) to
		(-1,1);
		\draw [black,very thick] (2,2)
		to (1,1);
		\draw [black,very thick] (1,1)
		to (1,-1);
		\draw [black,very thick] (2,-2) to (1,-1);
	\end{tikzpicture}
	\caption{Three graphs that share the same polytope in generic kinematics}
	\label{example-symmetries}
\end{figure}

 \subsection{Results}
 
The search proceeds as follows. In a first step, we generate the set of all graphs $\Gamma^\nprops_{\mathrm{all}}$, for $\nprops=1, \dots, 10$ with \texttt{QGRAF}. 
For $N=10$, this gives 16744 graphs. 
We then compute $\cU_\Gamma$ and $\cF_\Gamma$ and group our graphs based on the normal ordering of the polynomial pair $\text{Normal}(P_\Gamma)$, which we compute using the \texttt{FeynCalc} command $\mathsf{FCLoopPakOrder}$~\cite{Shtabovenko:2023idz}.
For $N=10$, this leaves us with only 565 graph classes.

In a second step, we take a single representative graph of a given class and scan the list of allowed pairs $(D/2, n_\cF)$ using the upper bound of \cref{sec:upperdim}.
For each pair $(D/2, n_\cF)$, we construct the corresponding polytope~$P$ and compute the number of interior points~$\nint(P)$.
If $\nint(P) > 1$, we discard this pair as well as all pairs with higher values of $D/2$ or $n_\cF$, because dilating the polytope further can only increase the number of interior points.
If $\nint(P) = 1$, then the polytope is Fano, and we additionally check if it is reflexive.
This procedure results in a list of pairs $(D/2, n_\cF)$, or equivalently $(n_\cU, n_\cF)$, which give Fano or reflexive polytopes.    
 
We use \texttt{polymake} in order to compute the number of interior points and check reflexivity of a given polytope.
This can be easily done, for example, by evaluating \texttt{N\_INTERIOR\_LATTICE\_POINTS} and \texttt{REFLEXIVE} properties of the polytope, respectively.
To perform these computations, \texttt{polymake} uses the Parma Polyhedra Library~(PPL)~\cite{BagnaraHZ08SCP}.

Our results are summarized in \cref{tableResultsFano} (see also \cref{app:fanotable}). We plot the 3d
distribution of Fano and reflexive polytopes in $(\nprops, L, D)$ variables in
\cref{fulldistribution}.
We also show 2d projections of this distribution, separately for reflexive polytopes in \cref{fig:plotReflexive} and for Fano polytopes in \cref{fig:plotFano}.
The distribution of reflexive
polytopes is bounded from below by graphs with $\nprops$
edges at one loop (see \cref{sec:oneloop}), and bounded from
above by the  multiloop sunset graphs with $\nprops$ edges and
$\nprops-1$ loops (see \cref{sec:sunset}). From the analyses of our results we observe:
\begin{itemize}
\item In  every even dimension
there is a reflexive polytope from a one-loop graph as we discussed
in \cref{sec:oneloop} (see \cref{fig:PlotReflexiveLvsD}).
\item Another remarkable relation is the existence of reflexive
  polytopes in dimension $D=2L+2$. Some associated Feynman integrals are evaluated  in \cref{sec:periods}.
  \item We observe that the $\nprops$-gon and the multiloop sunset graphs admit a reflexive polytope with a non-vanishing exponent~$n_\cF$. 
  \item We notice that  Fano non-reflexive cases arise from graphs with five edges on. This is the case of the kite graph of \cref{sec:fourdimtwoloop}, which has a Fano or reflexive polytope depending on the powers of the Symanzik polynomials. 
\end{itemize}

As we will see in the next \namecref{sec:periods}, some reflexive graphs are simpler to evaluate. Some of their divergent cousins have already been evaluated  in the renormalization of $\varphi^4$ theory~\cite{Kompaniets:2017yct} (see the first three graphs in \cref{representatives8edges}). On the other hand, Fano cases quickly lead to topologies that are currently under extensive investigation, see e.g.~\cite{Abreu:2024fei}. These examples typically involve an integrand where both Symanzik polynomials appear. We present all of these cases  in our search in \cref{Fanoupto9,Fanoupto10} in \cref{app:fanotable}. We do not attempt to evaluate those integrals.

	\begin{table}[tb]
			\centering
			\setlength{\tabcolsep}{1em}
			\begin{tabular}{ccccc}
				$\nprops$ & \makecell[c]{graph \\ topologies} & \makecell[c]{Fano \\ classes} & \makecell[c]{reflexive \\ classes} & \makecell[c]{non-reflexive \\ Fano classes} \\
				\midrule
				2 & 1 & 1 & 1 & 0 \\
				\addlinespace
				3 & 2 & 2 & 2 & 0 \\ 
				\addlinespace
				4 & 3 & 3 & 3 & 0 \\
				\addlinespace
				5 & 6 & 4 & 4 & 1 \\ 
				\addlinespace
				6 & 13 & 8 & 7 & 4 \\ 
				\addlinespace
				7 & 28 & 11 & 6 & 6 \\
				\addlinespace 
				8 & 70 & 23 & 11 & 16 \\ 
				\addlinespace 
				9 & 193 & 36 & 14 & 24 \\ 
				\addlinespace 
				10 & 565 & 104 & 26 & 88 \\ 
			\end{tabular}
			\caption{Non-factorizable graph topologies (with generic masses and kinematics) leading to polytopes with a single interior point that we call Fano classes. We give the number of graph topologies leading to  reflexive polytopes and the ones that are Fano but not reflexive.}
			\label{tableResultsFano}
		\end{table} 
		\begin{figure}[tb]
			\centering
			\begin{subfigure}{0.45\textwidth}
				\centering
				\includegraphics[scale=0.6]{./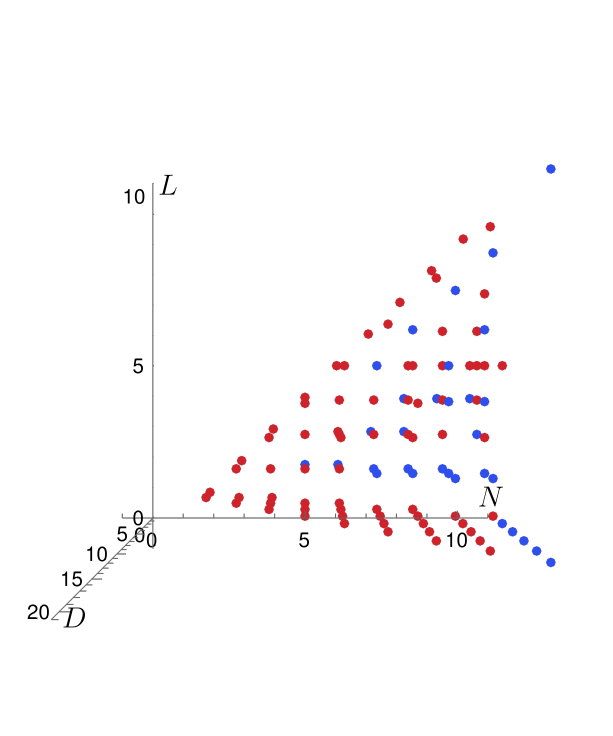}
			\end{subfigure}
			\begin{subfigure}{0.45\textwidth}
				\centering
				\includegraphics[scale=0.6]{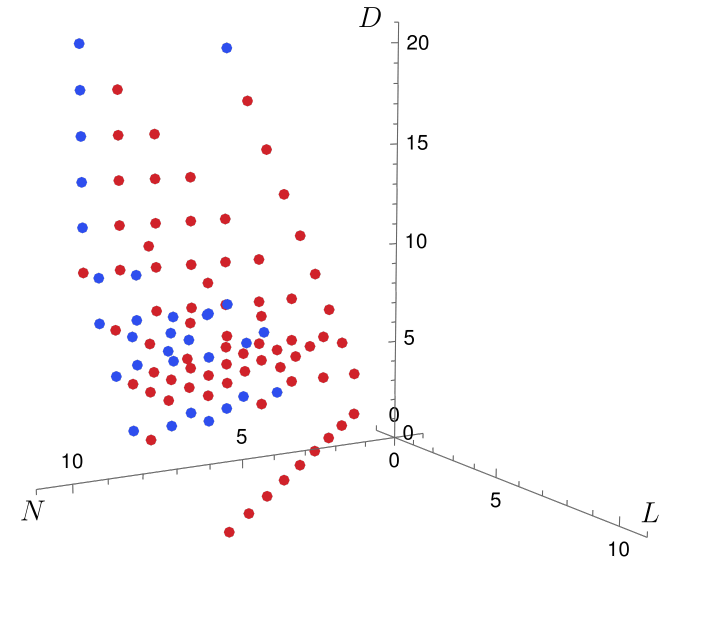}
						\end{subfigure}
			\caption{Distribution of Fano and reflexive polytopes  $(\nprops,L,D)$. Blue  points correspond to graphs that are only Fano. Red points are both Fano and reflexive. Each point represents a value for $(N,L,D)$ where at least one Fano polytope exists.}
			\label{fulldistribution}
		\end{figure}
        
\pgfplotsset{width=4.5cm, height=4.5cm}
\pgfplotsset{every axis/.append style={
    grid=both,
    font=\footnotesize}}
\pgfplotsset{scat/.style={only marks, mark size=1pt}}
\begin{figure}[tb]
\centering
\definecolor{plot-red}{HTML}{D0222A}
\begin{subfigure}{0.3\textwidth}
\begin{tikzpicture}
    \begin{axis}[
        xlabel=Edges,
        ylabel=Loops,
        xtick distance=2,
        minor x tick num=1,
        ytick distance=2,
        minor y tick num=1,
    ]
    \addplot[scat, plot-red] coordinates {
    (2,1) (3, 1) (3, 2) (4, 1) (4, 2) (4, 3) (5, 1) (5, 2) (5, 3) (5, 4) (6, 1) (6, 2) (6, 3) (6, 4) (6, 5) (7, 1) (7, 3) (7, 4) (7, 6) (8, 1) (8, 3) (8, 4) (8, 5) (8, 7) (9, 1) (9, 3) (9, 4) (9, 5) (9, 6) (9, 8) (10, 1) (10,3) (10, 4) (10, 5) (10,6) (10,8) (10, 9)
    };
    \end{axis}
\end{tikzpicture}
\caption{edges vs loops}
\label{fig:PlotReflexiveEvsL}
\end{subfigure}
\begin{subfigure}{0.3\textwidth}
\begin{tikzpicture}
    \begin{axis}[
        xlabel=Loops,
        ylabel=Dimension,
        xtick distance=2,
        minor x tick num=1,
        ytick distance=4,
        minor y tick num=1,
    ]
    \addplot[scat, plot-red] coordinates {
    (1,2) (1, 4) (1, 6) (1,8) (1,10) (1,12) (1,14) (1,16) (1,18) (2, 2)  (2, 6)  (3, 2) (3,4) (3, 6) (3, 8) (4, 2) (4, 6) (4, 10) (3, 4)  (5, 2) (5, 4) (5, 6) (5, 8) (5, 12) (6, 2)  (6, 6) (6, 14)  (7, 2) (7,8) (7, 16)  (8, 2) (8, 18)  (9, 2)
    };
    \end{axis}
\end{tikzpicture}
\caption{loops vs dimension}
\label{fig:PlotReflexiveLvsD}
\end{subfigure}
\begin{subfigure}{0.3\textwidth}
\begin{tikzpicture}
    \begin{axis}[
        xlabel=Edges,
        ylabel=Dimension,
        xtick distance=2,
        minor x tick num=1,
        ytick distance=4,
        minor y tick num=1,
    ]
    \addplot[scat, plot-red] coordinates {
    (2,2) (2,4) (3, 2) (3, 4) (3, 6) (4, 2) (4, 4) (4, 6) (4, 8) (5, 2) (5, 6) (5, 8) (5, 10)  (6, 2)  (6, 4) (6, 6) (6, 8) (6, 10) (6, 12) (7, 2) (7, 6) (7, 8) (7, 10) (7, 12) (7, 14)  (8, 2) (8, 6) (8, 8) (8, 10) (8, 12) (8, 14) (8, 16) (9, 2) (9, 6)  (9, 10) (9, 12) (9, 14) (9, 16) (9, 18) (10, 2)  (10, 4)  (10, 6) (10, 10) (10, 12) 
    };
    \end{axis}
\end{tikzpicture}
\caption{edges vs dimension}
\label{fig:PlotReflexiveEvsD}
\end{subfigure}  
\caption{Distribution of reflexive graph polytopes from Feynman
integrals  with massive internal lines up to 10 edges.
Each point represents a value for $(N,L)$, $(L,D)$, $(N,D)$ where at least one reflexive polytope exists.}\label{fig:plotReflexive}
\end{figure}
\begin{figure}[tb]
\centering
\definecolor{plot-blue}{HTML}{2E4EED}
\begin{subfigure}{0.3\textwidth}
\begin{tikzpicture}
    \begin{axis}[
        xlabel=Edges,
        ylabel=Loops,
        xtick distance=2,
        minor x tick num=1,
        ytick distance=2,
        minor y tick num=1,
    ]
    \addplot[scat, plot-blue] coordinates {
    (5, 2)  (6, 2) (6,3) (7, 2) (7, 3) (7, 5) (8, 2) (8, 3) (8, 4) (8, 6) (9, 2) (9, 3) (9, 4) (9, 5) (9, 6) (9, 7) (10,1) (10, 2) (10, 3) (10, 4) (10, 5) (10,6) (10,7) (10,8) (10,9)
    };
    \end{axis}
\end{tikzpicture}
\caption{edges vs loops}
\label{fig:PlotFanoEvsL}
\end{subfigure}
\begin{subfigure}{0.3\textwidth}
\begin{tikzpicture}
    \begin{axis}[
        xlabel=Loops,
        ylabel=Dimension,
        xtick distance=2,
        minor x tick num=1,
        ytick distance=4,
        minor y tick num=1,
    ]
    \addplot[scat, plot-blue] coordinates {
    (1,12) (1,14) (1,16) (1,20) (2, 4) (2, 6) (2, 8) (2, 10) (3, 4) (3, 6) (3,8) (4, 4) (4, 6) (4, 8) (4,10)  (5, 4) (5,6) (5, 8) (5,12) (6, 6) (6, 8) (7,8) (7, 10) (8,10) (9,20)
    };
    \end{axis}
\end{tikzpicture}
\caption{loops vs dimension}
\label{fig:PlotFanoLvsD}
\end{subfigure}
\begin{subfigure}{0.3\textwidth}
\begin{tikzpicture}
    \begin{axis}[
        xlabel=Edges,
        ylabel=Dimension,
        xtick distance=2,
        minor x tick num=1,
        ytick distance=4,
        minor y tick num=1,
    ]
    \addplot[scat, plot-blue] coordinates {
      (5, 4) (6, 4) (6, 8)  (7, 4) (7, 6) (7, 8)  (8, 4) (8, 6) (8, 8) (8, 10)  (9, 4) (9, 6) (9, 8) (9, 10) (10, 4) (10, 6)  (10, 8)  (10, 10) (10, 12) (10, 14) (10, 16) (10, 18) (10, 20)
    };
    \end{axis}
\end{tikzpicture}
\caption{edges vs dimension}
\label{fig:PlotFanoEvsD}
\end{subfigure}  
\caption{Distribution of non-reflexive Fano graph polytopes from  Feynman
integrals with massive internal lines up to 10 edges.
Each point represents a value for $(N,L)$, $(L,D)$, $(N,D)$ where at least one non-reflexive Fano polytope exists.}\label{fig:plotFano}
\end{figure}
	
\newcommand{\grapheightedgesOne}{  
\begin{tikzpicture}[scale=0.7]
\draw[very thick] (0,1) ellipse (0.5 and 0.2);
\draw [black,very thick] (0,0) to   (0,-0.3);
\draw [black,very thick] (-0.5,1) to   (-0.8,1);
\draw [black,very thick] (1,0) to   (1,-0.2);
\draw [black,very thick] (0,0) to   (0.5,1);
\draw [black,very thick] (0,0) to   (-0.5,1);
\draw [black,very thick] (0,2) to   (0.5,1);
\draw [black,very thick] (1,0) to   (0.5,1);
\draw [black,very thick] (1,0) to   (1,1);
\draw [black,very thick] (0.5,1) to   (2,1);
                                \draw [black,very thick] (0,0) to
                                (1.2,0);
                                 \filldraw [black] (1,1) circle (2pt);
                                \filldraw [black] (1,0) circle (2pt);
                                \filldraw [black] (0,0) circle (2pt);
                                \filldraw [black] (-0.5,1) circle (2pt);
                                 \filldraw [black] (0.5,1) circle (2pt);
			\end{tikzpicture}
            }		 

\newcommand{\grapheightedgesTwo}{
   \begin{tikzpicture}[scale=0.5]
                                  \draw[very thick] (0.5,0) ellipse (0.5 and 0.2);
                             \draw [black,very thick] (0,0)    to (0, 0.5);
                              \draw [black,very thick] (0,-1)    to (0, -2);
 \draw[very thick,rotate=45] (-0.7,0) ellipse (0.20 and 0.7);
                                \draw [black,very thick] (0,-1)
                                to (1,0);
                                \draw [black,very thick] (1,0)
                                to (0,1);
                                \draw [black,very thick] (-1,0)
                                to (0,1);
                                \filldraw [black] (0,0) circle (2pt);
				\filldraw [black] (1,0) circle (2pt);
                                \filldraw [black] (-1,0) circle
                                (2pt);
				\filldraw [black] (0,1) circle (2pt);
                                \filldraw [black] (0,-1) circle (2pt);
                                  \draw [black,very thick] (-2,0) to
                                  (-1,0);
                                    \draw [black,very thick] (2,0)
                                    to (1,0);                                      
                                         \draw [black,very thick] (0,2) to
                                         (0,1);
                                          \draw [black,very thick] (0,0) to
                                (-1,0);
			\end{tikzpicture}
}
\newcommand{\grapheightedgesThree}{  \begin{tikzpicture}[scale=0.7]
             \draw[very thick] (0,1) ellipse (0.5 and 0.2);
          \draw [black,very thick] (0,0) to   (0,-0.5);
                    \draw [black,very thick] (0.5,1) to   (0.5,1.5);
                \draw [black,very thick] (-0.5,1) to   (-0.5,1.5);
          \draw [black,very thick] (0,0) to   (0.5,1);
          \draw [black,very thick] (0,0) to   (-0.5,1);
          \draw [black,very thick] (-1,0) to   (-0.5,1);
          \draw [black,very thick] (1,0) to   (0.5,1);
                                \draw [black,very thick] (-1.5,0) to
                                (1.5,0);
                                \filldraw [black] (-1,0) circle (2pt);
                                \filldraw [black] (1,0) circle (2pt);
                                \filldraw [black] (0,0) circle (2pt);
                                \filldraw [black] (-0.5,1) circle (2pt);
                                 \filldraw [black] (0.5,1) circle (2pt);
			\end{tikzpicture}}

\newcommand{\grapheightedgesFour}{
\begin{tikzpicture}[scale=0.6]
                                  \draw[very thick] (0,0.5) ellipse (0.20 and 0.5);
 \draw[very thick] (0,-0.5) ellipse (0.20 and 0.5);
  \draw [black,very thick] (0,0)                                to (0.5,0);
                                \draw [black,very thick] (0,-1)
                                to (-1,0);
                                \draw [black,very thick] (0,-1)
                                to (1,0);
                                \draw [black,very thick] (1,0)
                                to (0,1);
                                \draw [black,very thick] (-1,0)
                                to (0,1);
                                \filldraw [black] (0,0) circle (2pt);
				\filldraw [black] (1,0) circle (2pt);
                                \filldraw [black] (-1,0) circle
                                (2pt);
				\filldraw [black] (0,1) circle (2pt);
                                \filldraw [black] (0,-1) circle (2pt);
                                 \draw [black,very thick] (-2,0) to
                                  (-1,0);
                                    \draw [black,very thick] (2,0)
                                    to (1,0);
                                        \draw [black,very thick] (0,-2) to
                                        (0,-1);
                                         \draw [black,very thick] (0,2) to
                                (0,1);
			\end{tikzpicture}
}
\newcommand{\grapheightedgesFive}{  \begin{tikzpicture}[scale=0.5]
                                  \filldraw [color = black, fill=none,
                                  very thick] (0,1) circle (1cm);
 \draw[very thick] (0,1) ellipse (1 and 0.5);
                                \draw [black,very thick] (-1,1)
                                to (-1,-1);
                                \draw [black,very thick] (-1,-1)
                                to (1,-1);
                                \draw [black,very thick] (1,-1)
                                      to (1,1);
				\filldraw [black] (1,1) circle (2pt);
				\filldraw [black] (1,-1) circle (2pt);
				\filldraw [black] (-1,1) circle (2pt);
                                \filldraw [black] (-1,-1) circle
                                (2pt);
                                \draw [black,very thick] (-2,-2) to
                                (-1,-1);
                                  \draw [black,very thick] (-2,2) to
                                  (-1,1);
                                    \draw [black,very thick] (2,2)
                                    to (1,1);
                                      \draw [black,very thick] (1,1)
                                      to (1,-1);
                                        \draw [black,very thick]
                                        (2,-2) to (1,-1);
                                        \draw [black,very thick] (0,-2) to
                                (0,-1);
			\end{tikzpicture}
            }
\begin{figure}[htb]
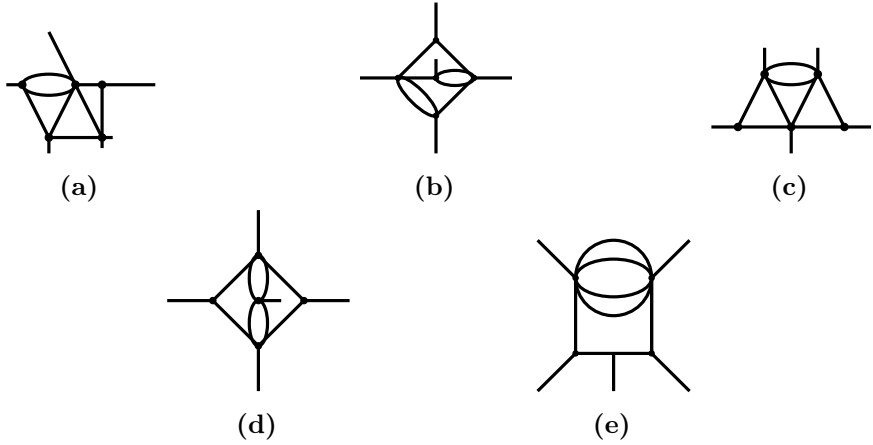

	\centering
\begin{subfigure}{0.3\textwidth}
\centering
             \grapheightedgesOne
              \caption{}
\end{subfigure}
\begin{subfigure}{0.3\textwidth}
\centering
             \grapheightedgesTwo
             \caption{}
\end{subfigure}
\begin{subfigure}{0.3\textwidth}
\centering
             \grapheightedgesThree
            \caption{}
\end{subfigure}
\begin{subfigure}{0.3\textwidth}
\centering
             \grapheightedgesFour
             \caption{}
\end{subfigure}
\begin{subfigure}{0.3\textwidth}
\centering
             \grapheightedgesFive
             \caption{}
             \label{dancing-boy}
\end{subfigure}
	\caption{Representative graphs with eight propagators that have a reflexive polytope. }
	\label{representatives8edges}
\end{figure}
	
\section{Period integrals}\label{sec:periods}
After classifying the reflexive or Fano polytopes we evaluate some of these integrals. Finite integrals and quasi-finite integrals are simpler to evaluate since by definition they do not allow divergences or subdivergences. They can be  evaluated analytically  with {\tt HyperInt}~\cite{Panzer:2014caa}.

The search of reflexive polytopes for graphs up to ten edges
has the case of integrals of the type $(n_\cU, n_\cF) = (n,0)$:
\begin{equation}
 I_\Gamma(L,2n;\undernotation{\nu})= \int_{\mathbb R^{\nprops-1}_+}
\left. {\prod_{i=1}^{\nprops} x_i^{\nu_i}\over \cU_\Gamma^n }\right|_{x_\nprops=1} 
\prod_{i=1}^{\nprops-1}{\dd x_i\over x_i}\, ,
\end{equation}
which are called  period integrals~\cite{Bloch:2005bh}.
The unique point in the interior of the associated polytope $n\Newton(\cU_\Gamma)$ is  $\undernotation{\nu}=(\nu_1,\dots,\nu_\nprops)$, with  $\nu_1+\cdots+\nu_\nprops=Ln$. Despite the fact that our list of graphs is limited by those having at most  ten edges, we observe that some patterns emerge.  In particular, we can discuss four families of graphs which lead to reflexive polytopes and evaluate their period integrals, namely, the one-loop $\nprops$-gon and the multiloop sunset in \cref{sec:Family1}, the family of graphs with external legs attached to one edge  of the multiloop sunset in \cref{sec:Family1n} and the family of graphs with external legs attached to two edges of the multiloop sunset in \cref{sec:Family1nn}.  They all evaluate to rational numbers obtained as product of factorials. In \cref{sec:Intzeta} we evaluate the integrals for the two reflexive cases that give zeta-values.

\subsection{One-loop $\nprops$-gon and multiloop sunset}\label{sec:Family1}
Let us begin by analyzing the reflexive polytopes associated with the $\nprops$-gon and the $(\nprops-1)$-loop sunset.   The former case has the polytope in $D=2\nprops$ dimensions 
\begin{equation}
    \Delta_{\nprops-\mathrm{gon}}(1,2\nprops;(1,\cdots,1))=\nprops \simplexwithpars{\nprops,1}\, , 
\end{equation}
while the latter, also in  $D=2\nprops$ dimensions, reads
\begin{equation}
    \Delta_{\circleddash}(\nprops-1,2\nprops;(\nprops-1,\dots,\nprops-1))=\nprops\Delta(\cU_{\circleddash}^\nprops) \, .
\end{equation}
Both polytopes are reflexive and isomorphic. Therefore, their respective associated Feynman integrals are equal. Indeed, the Feynman integral for the one-loop $\nprops$-gon is given by 
	\begin{equation}\label{e:evalUngon}
			 I_{\nprops-\rm gon}(1,\nprops;(1,\dots,1))= \int_{\mathbb R^{\nprops-1}_+}
		\left.	{1\over \left(\sum_{i=1}^\nprops x_i\right)^{\nprops} }\right|_{x_\nprops=1}\dd x_1\cdots \dd x_{\nprops-1} \,,
                      \end{equation}
and the one for the  $\nprops-1$-loop sunset integral is
\begin{equation}
 I_{\circleddash}(L,2L+2; (L,\dots,L))=   \int_{\mathbb R_+^L} { x_1^L\cdots x_{L}^L  \over
  \left(x_1\cdots x_{L}\left(1+\sum_{i=1}^Lx_i^{-1}\right)\right)^{L+1}} \prod_{i=1}^L {\dd x_i\over x_i}\,.
\end{equation}
Performing the Cremona transformation $x_i\to 1/x_i$, we find that the two integrals are equal:
\begin{equation}
 I_{\circleddash}(L,2L+2;(L,\dots,L))=  I_{\nprops-\rm gon}(1,\nprops;(1,\dots,1)) ={1\over (\nprops-1)!}\,.
\end{equation}

\subsection{Family of graphs of type $(1,\dots,1,n)$}\label{sec:Family1n}
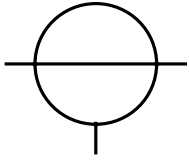
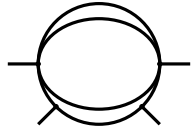
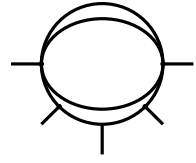
\begin{figure}[ht]
  \centering
  	\begin{subfigure}{0.3\textwidth}
    \centering
\begin{tikzpicture}[scale=0.4]
  \filldraw [color = black, fill=none, very thick] (0,0) circle (2cm);
 \draw [very thick] (-2,0) -- (2,0);
\filldraw [black] (2,0) circle (2pt);
\filldraw [black] (0,-2) circle (2pt);
\filldraw [black] (-2,0) circle (2pt);
\draw [black,very thick] (-2,0) to (-3,0);
\draw [black,very thick] (2,0) to (3,0);
\draw [black,very thick] (0,-2) to (0,-3);
\end{tikzpicture}
	\caption{The graph  $(1,1,2)$. }
        \label{fig:threeloop112}
      \end{subfigure} \hfill
  \begin{subfigure}{0.3\textwidth}
  \centering
      \begin{tikzpicture}[scale=0.4]
  \filldraw [color = black, fill=none, very thick] (0,0) circle (2cm);
  \draw[very thick] (0,0) [partial ellipse=0:180:2cm and 1.5cm];
  \draw[very thick] (0,0) [partial ellipse=180:360:2cm and 1.5cm];
\filldraw [black] (2,0) circle (2pt);
\filldraw [black] (-1.414,-1.414) circle (2pt);
\filldraw [black] (1.414,-1.414) circle (2pt);
\filldraw [black] (-2,0) circle (2pt);
\draw [black,very thick] (-2,0) to (-3,0);
\draw [black,very thick] (2,0) to (3,0);
\draw [black,very thick] (-1.414,-1.414) to (-2,-2);
\draw [black,very thick] (1.414,-1.414) to (2,-2);
\draw [white,very thick] (0,-2.1) to (0,-3);
\end{tikzpicture}
	\caption{The graph  $(1,1,1,3)$. }
        \label{fig:threeloop1113}
      \end{subfigure} \hfill
      	\begin{subfigure}{0.3\textwidth}
    \centering
       \begin{tikzpicture}[scale=0.4]
  \filldraw [color = black, fill=none, very thick] (0,0) circle (2cm);
  \draw[very thick] (0,0) [partial ellipse=0:180:2cm and 1.5cm];
  \draw[very thick] (0,0) [partial ellipse=180:360:2cm and 1.5cm];
\filldraw [black] (2,0) circle (2pt);
\filldraw [black] (-1.414,-1.414) circle (2pt);
\filldraw [black] (1.414,-1.414) circle (2pt);
\filldraw [black] (-2,0) circle (2pt);
\draw [black,very thick] (-2,0) to (-3,0);
\draw [black,very thick] (2,0) to (3,0);
\draw [black,very thick] (-1.414,-1.414) to (-2,-2);
\draw [black,very thick] (1.414,-1.414) to (2,-2);
\draw [black,very thick] (0,-2) to (0,-3);
\end{tikzpicture}
	\caption{The graph  $(1,1,1,1,4)$. }
        \label{fig:threeloop11114}
      \end{subfigure}
\caption{Family of graphs with edge label $(1,\dots,1,n)$.}\label{fig:edgetosunset}
\end{figure}

We consider now the family of graphs in \cref{fig:edgetosunset} composed by a $n$-loop sunset with $n-1$ legs attached to one edge, which we evaluate in dimension $D=2(n+1)$. These graphs have $\nprops=2n$ edges. For $n=2$ this is the ice-cream two-loop graph, for $n=3$ this is the last graph in \cref{example-symmetries}, and for $n=4$ this is the graph in \cref{dancing-boy} now labeled $(1,1,1,1,4)$

The first Symanzik graph polynomial for these graphs is given by
\begin{equation}
    \cU_{(1,\dots,1,n)}=(x_1+\cdots+x_{n})(x_{n+1}+\cdots+x_{2n})+x_{n+1}\cdots x_{2n}\,,
\end{equation}
with the associated  polytope 
\begin{equation}
  \Newton(n,2(n+1),\undernotation{\nu}_{1,n}) =(n+1) \Delta(\cU_{(1,\dots,1,n}) \, .
\end{equation}
This polytope is reflexive with the interior point given by
\begin{equation}
    \undernotation{\nu}_{1,n}=(1,\dots,1,n, \dots, n)\,.
\end{equation}
Their associated Feynman integrals are
\begin{equation}
    \label{e:In} I(n,2(n+1),\undernotation{\nu}_{1,N})=\int_{\mathbb R_+^{2n-1}}  \left.{x_1\cdots x_{n} x_{n+1}^{n}\cdots x_{2n}^{n} \over \cU_{(1,\dots,1,n)}^{n+1}}\right|_{x_{2n}=1}\,\prod_{i=1}^{2n-1}{dx_i\over x_i}\,.
\end{equation}
Performing the change of variables
\begin{equation}
    (x_{n+1},\dots,x_{n})\to  (1/x_{n+1},\dots,1/x_{n}),
\end{equation}
gives
\begin{equation}  I(n,2(n+1),\undernotation{\nu}_{1,n})=\int_{\mathbb R_+^{2n-1}}  \left.{x_1\cdots x_{2n}\over (1+(x_1+\cdots+x_{n})(x_{n+1}+\cdots +x_{2n}))^{n+1}}\right|_{x_{2n}=1}\,\prod_{i=1}^{2n-1}{dx_i\over x_i}\,,
\end{equation}
which evaluates to 
\begin{equation}\label{e:I1neval}
    I(n,2(n+1),\undernotation{\nu}_{1,n})={1\over n! (n-1)!}\,.
\end{equation}

\medskip

The first graph in fig.~\ref{example-symmetries} arises in the calculation of the beta function of the $\varphi^4$ theory. Its first Symanzik polynomial is $\cU_{(1,\dots,1,3)}$.  In $D=4-2\varepsilon$ the graph  is divergent but is quasi-finite   in $D=6-2\varepsilon$. Our analysis leads to  reflexivity for $(n_\cU, n_\cF)=(4,0)$.  The associated integral in \cref{e:I1neval} evaluates to $1/12$, which matches the leading order term of the result given in ref.~\cite{Kompaniets:2017yct}. 

\subsection{Family of graphs of type $(1,\dots,1,n,n)$}\label{sec:Family1nn}

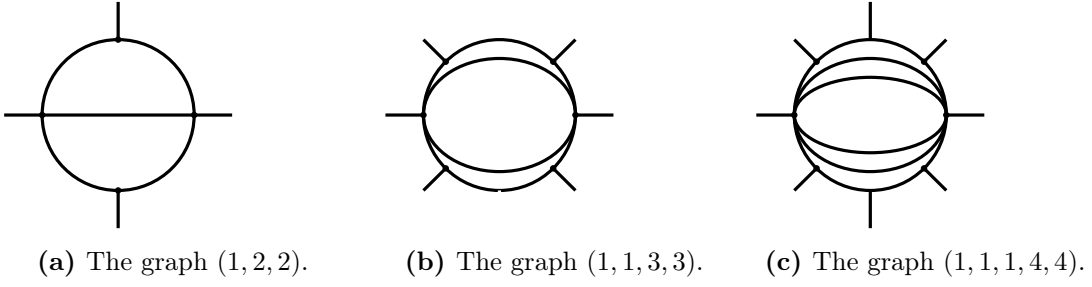
\begin{figure}[ht]
  \centering
  \begin{subfigure}{0.3\textwidth}
      \begin{tikzpicture}[scale=0.5]
				\filldraw [color = black, fill=none, very thick] (0,0) circle (2cm);
				\draw [black,very thick] (-2,0) to (2,0);
				\filldraw [black] (2,0) circle (2pt);
				\filldraw [black] (-2,0) circle (2pt);
				\filldraw [black] (0,2) circle (2pt);
				\filldraw [black] (0,-2) circle (2pt);
				\draw [black,very thick] (-2,0) to (-3,0);
				\draw [black,very thick] (2,0) to (3,0);
				\draw [black,very thick] (0,2) to (0,3);
				\draw [black,very thick] (0,-2) to (0,-3);
			\end{tikzpicture}
			\caption{The graph $(1,2,2)$.}\label{fig:twoloop122}
  \end{subfigure}
\hfil
  \begin{subfigure}{0.3\textwidth}
      \begin{tikzpicture}[scale=0.5]
  \filldraw [color = black, fill=none, very thick] (0,0) circle (2cm);
  \draw[very thick] (0,0) [partial ellipse=0:180:2cm and 1.5cm];
  \draw[very thick] (0,0) [partial ellipse=180:360:2cm and 1.5cm];
\filldraw [black] (2,0) circle (2pt);
\filldraw [black] (-1.414,-1.414) circle (2pt);
\filldraw [black] (1.414,-1.414) circle (2pt);
\filldraw [black] (1.414,1.414) circle (2pt);
\filldraw [black] (-1.414,1.414) circle (2pt);
\filldraw [black] (-2,0) circle (2pt);
\draw [black,very thick] (-2,0) to (-3,0);
\draw [black,very thick] (2,0) to (3,0);
\draw [black,very thick] (-1.414,-1.414) to (-2,-2);
\draw [black,very thick] (1.414,-1.414) to (2,-2);
\draw [black,very thick] (1.414,1.414) to (2,2);
\draw [black,very thick] (-1.414,1.414) to (-2,2);
\draw [white,very thick] (0,-2) to (0,-3);
\end{tikzpicture}
\caption{The graph  $(1,1,3,3)$.}\label{fig:threeloop1133}
  \end{subfigure}\hfil
  \begin{subfigure}{0.3\textwidth}
\begin{tikzpicture}[scale=0.5]
  \filldraw [color = black, fill=none, very thick] (0,0) circle (2cm);
  \draw[very thick] (0,0) [partial ellipse=0:180:2cm and 1.5cm];
  \draw[very thick] (0,0) [partial ellipse=180:360:2cm and 1.5cm];
  \draw[very thick] (0,0) [partial ellipse=0:180:2cm and 1cm];
  \draw[very thick] (0,0) [partial ellipse=180:360:2cm and 1cm];
\filldraw [black] (2,0) circle (2pt);
\filldraw [black] (-1.414,-1.414) circle (2pt);
\filldraw [black] (1.414,-1.414) circle (2pt);
\filldraw [black] (1.414,1.414) circle (2pt);
\filldraw [black] (-1.414,1.414) circle (2pt);
\filldraw [black] (-2,0) circle (2pt);
\draw [black,very thick] (-2,0) to (-3,0);
\draw [black,very thick] (2,0) to (3,0);
\draw [black,very thick] (-1.414,-1.414) to (-2,-2);
\draw [black,very thick] (1.414,-1.414) to (2,-2);
\draw [black,very thick] (1.414,1.414) to (2,2);
\draw [black,very thick] (-1.414,1.414) to (-2,2);
\draw [black,very thick] (0,2) to (0,3);
\draw [black,very thick] (0,-2) to (0,-3);
\end{tikzpicture}
\caption{The graph $(1,1,1,4,4)$. }
 \end{subfigure}
\caption{The multiloop graphs with edge label $(1,\dots,1,n,n)$.}\label{fig:generalisedeyes}
\end{figure}

The list of reflexive polytopes contains graphs built by attaching the same number of external legs to two lines of the multiloop sunset graph as shown in \cref{fig:generalisedeyes}.  We label the graph by the number of edges attached to the multiloop sunset graph. The graphs with less than ten edges in this family are the kite graph, shown in \cref{fig:twoloop122}, with label $(1,2,2)$, and the three-loop graph in \cref{fig:threeloop1133}, with label $(1,1,3,3)$.

The first Symanzik polynomial of this family of graphs  reads
\begin{equation}
    \cU_{(1,\dots,1,n,n)}=\left(\sum_{i=1}^n x_i\right) \left(\sum_{i=n+1}^{2n}x_i\right) \prod_{i=2n+1}^{3n-1}x_i \left(\sum_{i=2n+1}^{3n-1} {1\over x_i}\right)+\left(\sum_{i=1}^{2n}x_i\right) \prod_{i=2n+1}^{3n-1}x_i\,.
\end{equation}
    These graphs have $n$ loops and $\nprops=3n-1$ edges. In dimension $D=2(n+1)$, the Newton polytope $(n+1)\Newton(\cU_{(1,\dots,1,n,n)})$ is reflexive, with a single interior point $\undernotation{\nu}=(1,\dots,1,n,\dots,n)$.
The associated integral is given by
\begin{equation}
I_{(1,\dots,1,n,n)}=\int_{\mathbb R_+^{3n-2}} \left. x_1\cdots x_{2n} (x_{2n+1} \cdots x_{3n-1})^n\over \cU_{(1,\dots,1,n,n)}^{n+1}\right|_{x_{3n-1}=1} \prod_{i=1}^{3n-2}{dx_i\over x_i}\,.    
\end{equation}
Changing variables as $x_i\to 1/x_i$ for $2n+1\leq i\leq 3n-1$, one gets the integral
\begin{equation}
I_{(1,\dots,1,n,n)}=\int_{\mathbb R_+^{3n-2}} \left. x_1\cdots x_{3n-1} \over \hat \cU_{(1,\dots,1,n,n)}^{n+1}\right|_{x_{3n-1}=1} \prod_{i=1}^{3n-2}{dx_i\over x_i}\,,
\end{equation}
with 
\begin{equation}
    \hat \cU_{(1,\dots,1,n,n)}=\left(\sum_{i=1}^n x_i \right)\left(\sum_{i=n+1}^{2n}x_i\right) \prod_{i=2n+1}^{3n-1}x_i+\sum_{i=1}^{2n}x_i \,.
\end{equation}
This integral evaluates to
\begin{equation}
I_{(1,\dots,1,n,n)}={1\over (n-1)!^2 n!}\,.
\end{equation}

\subsection{Feynman graphs which evaluate to zeta values}\label{sec:Intzeta}

  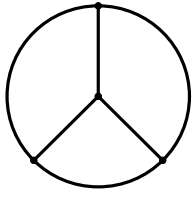
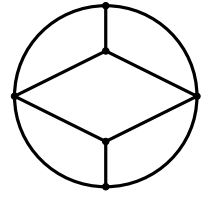
\begin{figure}[ht]
          \centering
          \begin{subfigure}{0.4\textwidth}
          \centering
			\begin{tikzpicture}[scale=0.6]
				\filldraw [color = black, fill=none, very thick] (0,0) circle (2cm);
				\filldraw [black] (0,0) circle (2pt);
				\filldraw [black] (0,2) circle (2pt);
				\filldraw [black] (1.414,-1.414) circle (2pt);
				\filldraw [black] (-1.414,-1.414) circle (2pt);
                                \draw [black,very thick] (1.414,-1.414) to (0,0);
                                \draw [black,very thick]   (-1.414,-1.414) to (0,0);
                                \draw [black,very thick] (0,0) to
                                (0,2);
                              \end{tikzpicture}\caption{The wheel
                                with three spokes. }\label{fig:Wheel3spokes} \end{subfigure}\hfill
                                \begin{subfigure}{0.4\textwidth}
                                          \centering\begin{tikzpicture}[scale=0.6]
				\filldraw [color = black, fill=none,  very thick] (0,0) circle (2cm);                        
				\filldraw [black] (0,2) circle (2pt);
				\filldraw [black] (2,0)  circle (2pt);
                                \filldraw [black] (-2,0) circle (2pt);
                                \filldraw [black] (0,-2) circle (2pt);
                                \filldraw [black] (0,1) circle (2pt);
                                \filldraw [black] (0,-1) circle (2pt);
				\draw [black,very thick]  (0,2) to (0,1);
                                \draw [black,very thick] (0,-2) to (0,-1);
				\draw [black,very thick]   (2,0) to (0,-1);
                                \draw [black,very thick]   (2,0) to (0,1);
                                \draw [black,very thick]   (-2,0) to (0,-1);
                                \draw [black,very thick]   (-2,0) to (0,1);
                                                     \end{tikzpicture}
                            \caption{The diamond circle graph. }\label{fig:diamondcircle}
                          \end{subfigure}
                          \caption{Two graphs that evaluate to zeta values.}
	\label{fig:graphzeta}	\end{figure}

The polytope associated with the graph in \cref{fig:Wheel3spokes} is $\Delta_{W_3}(3,4;(1,\dots,1))=4\Newton(\cU_{W_3})$
 with 
  \begin{multline}
  \cU_{W_3}= x_{1} x_{2}+x_{3} x_{2}+x_{1} x_{4} x_{2}+x_{3} x_{4} x_{2}+x_{1} x_{5} x_{2}+x_{3} x_{5}
  x_{2}+x_{4} x_{5} x_{2}+x_{5} x_{2}+x_{1} x_{3}\cr
  +x_{1} x_{4}+x_{1} x_{3} x_{4}+x_{3} x_{4}+x_{1}
   x_{5}+x_{1} x_{3} x_{5}+x_{3} x_{4} x_{5}+x_{4} x_{5}\,,
 \end{multline}
 and the polytope associated with the five-loop diamond~circle graph in \cref{fig:diamondcircle} is 
  $\Delta_{\textrm{diamond~circle}}(5,4;(1,\dots,1))=2\Newton(\cU_{\textrm{diamond~circle}})$  with the first Symanzik polynomial $\cU_{\textrm{diamond~circle}}$  given in \cref{e:Udiamondcircle}. 
  Both polytopes are reflexive with interior point $\undernotation{\nu}=(1,\dots,1)$.

Finally, the Feynman integral for the graph in \cref{fig:Wheel3spokes} in four dimensions evaluates to 
  \begin{equation}\label{e:IW3}
    I_{W_3}(3,4; (1,1,1,1,1,1))=    \int_{\mathbb R_+^5} \left. {\dd x_1\cdots \dd x_5\over \cU_{W_3}^2}\right|_{x_6=1}=6\zeta(3)
\end{equation}
  in agreement with~\cite{Brown:2021umn}.
The Feynman integral for the graph in \cref{fig:diamondcircle} in four dimensions evaluates to
  \begin{equation}
    I_{\rm diamond~circle}(5,4;(1,\dots,1))=\int_{\mathbb R_+^9} \left. {\dd x_1\cdots
      \dd x_{9}\over \cU_{\rm diamond~circle}^2}\right|_{x_{10}=1}=36\zeta(3)^2 \,.   
  \end{equation}
  This integral is the square of the one in \cref{e:IW3} as a consequence of conformal invariance~\cite{Broadhurst:1985tld, Panzer:2019yxl}.

\section{Calabi--Yau from reflexive polytopes of Feynman integrals}\label{sec:mirror}
In this \namecref{sec:mirror}, we elaborate on how Calabi--Yau varieties naturally arise from Feynman integrals when their Symanzik graph polynomials are interpreted through toric geometry.

In the toric framework, mirror symmetry emerges directly from the geometry of the Newton polytope. A reflexive polytope $\Delta$ determines a Gorenstein toric Fano variety $\mathbb P_\Delta$, while its dual $\PolarPolytope$ determines the mirror partner $\mathbb P_{\PolarPolytope}$, see e.g., ref.~\cite{Kasprzyk_2022}. The anticanonical hypersurfaces in these varieties form a mirror pair of Calabi--Yau manifolds in Batyrev's sense~\cite{Batyrev:1993oya}. When the Symanzik polynomials of a Feynman graph  define a reflexive polytope, the graph integral naturally embeds in this mirror pair geometry.

Products of powers of the first and second Symanzik polynomials define hypersurfaces in projective space, whose Newton polytopes (from scaled Minkowski sums) can be reflexive or Fano. When reflexive, the associated toric variety admits a Calabi--Yau hypersurface in its anticanonical linear system. The geometry of vanishing locus of the product of the powers of the Symanzik polynomials determines a degeneration of a del Pezzo surface, $K3$ surface, or higher-dimensional Calabi--Yau variety. 

This connection bridges quantum field theory and mirror symmetry. The Feynman integral computes a period of the graph hypersurface, while the reflexive polytope determines dual toric Fano varieties whose anticanonical hypersurfaces form a Batyrev mirror pair~\cite{Batyrev:1993oya}. Each Feynman graph with a reflexive Newton polytope, therefore, provides access to periods on both sides of a mirror pair. The integral's analytic behavior---rational, dilogarithmic, elliptic, or Calabi--Yau type---corresponds to the dimension and degeneration type of the underlying toric hypersurface.
We discuss  the  cases of the multiloop sunset and multileg one-loop graphs and their toric Calabi--Yau structures.
For one-loop graphs, the explicit computations of the
$\cU_\Gamma$ and $\cF_\Gamma$ polynomials are particularly transparent.  Their Newton
polytopes are  low-dimensional reflexive polytopes, and the corresponding
graph hypersurfaces may be interpreted as anticanonical hypersurfaces in
toric del Pezzo surfaces, quartic $K3$ surfaces, or (in the case of the
pentagon) degenerations of the quintic Calabi--Yau threefold.  These
examples illustrate how the analytic structure of the integral matches
the geometry of the toric Calab--Yau.

\subsection{The multiloop sunset graphs}\label{sec:sunsetmirror}

The Newton polytope for the generic massive multiloop sunset graph of \cref{sec:sunset}  is
\begin{equation}
  \Delta_{\circleddash}^\nprops=\simplexwithpars{\nprops,1} +\Newton(	\cU^{\nprops}_{\circleddash})\,.
\end{equation}
This polytope is detailed in \cref{app:sunsetpolytope} and  it is reflexive. In
		figure~\ref{twothreeloopssunset}, we present the
                reflexive polytopes for the two-loop and three-loop 
		sunset graphs. 	Verrill~\cite{verrill1996root} has studied the toric variety $\mathbb
		P_{\Newton_{\circleddash}^\nprops}$ defined by the Newton polytope $\Newton_{\circleddash}^\nprops$. The toric variety is obtained by blowing up the strict
		transformation of the space spanned by the coordinate points
		$(1,0,\dots,0), \dots, (0,\dots,0,1)$ in $\mathbb P^{\nprops-1}$. This
		defines a  family of (singular) Calabi--Yau  $(\nprops-2)$-folds. For
		$\nprops=3$ the toric variety is obtained by the blowing up of three points
		in $\mathbb P^2$ and leads to the del Pezzo surface $dP_6$ studied
		in refs.~\cite{Bloch:2013tra,Bloch:2016izu}; for $\nprops=4$ this is a $K3$
		surface studied
		in refs.~\cite{Bloch:2014qca,Broedel:2019kmn,Broedel:2021zij,Pogel:2022yat,Duhr:2025ppd};
		for $\nprops=5$ loops, we have the Calabi--Yau 3-fold~\cite{HulkerVerrill3fold}; and
		particular cases of the Calabi--Yau fourfold arise at  $\nprops=6$ loops~\cite{HulekVerrill4fold}.

		\begin{figure}[ht]
			\centering
			\begin{subfigure}{0.45\textwidth}
				\centering
				\includegraphics[width=5cm]{polytope-massless-triangle.png}
                                \caption{2d reflexive  \#9 in the {\tt sagemath}~\cite{sagemath}
                          database of polytopes.}\label{fig:2sunsetpolytope}
			\end{subfigure}\hfill
			\begin{subfigure}{0.45\textwidth}
				\centering
				\includegraphics[width=8cm]{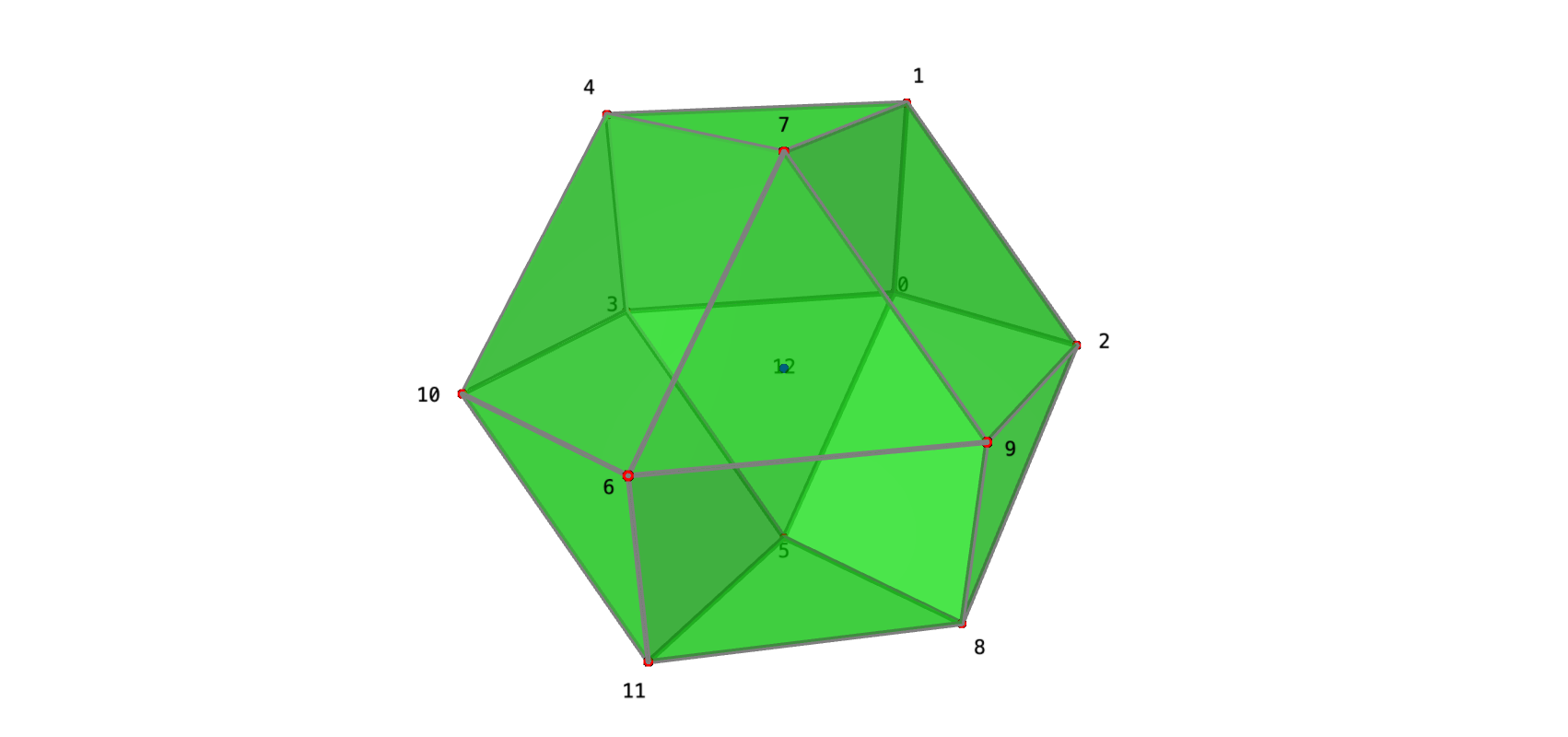}
                                \caption{3d reflexive  \#1529 in the {\tt sagemath}~\cite{sagemath}
                          database of polytopes.}\label{fig:3sunsetpolytope}
			\end{subfigure}
			\caption{Reflexive polytopes for the two- and
                          three-loop sunset graphs.}
			\label{twothreeloopssunset}
		\end{figure}

	\subsection{Hodge  numbers of $\nprops$-$\textrm{gons}$}\label{sec:Hodgenumbers}
    \begin{table}[tb]
			\centering
			\setlength{\tabcolsep}{1em}
			\begin{tabular}{@{}lrrrrrrr@{}}
                          $N$   &4 & 5 &6 &7 &8 &9&10  \\
                    				\midrule
				$h_{1,N-3}$  &20 & 101&426 &1667 &6371 & 24229& 92278\\
			\end{tabular}
			\caption{Non-vanishing Hodge numbers for $N$-gon. In all cases with $N>4$, $h_{11}=1$.}
			\label{Hodgengon}
		\end{table} 
		
Hodge numbers of a hypersurface and its mirror are  related by \cite{Batyrev:1993oya}
\begin{equation}
h_{1,1}(V)=h_{n-2,1}(V^{\mathrm{dual}}), \quad
h_{n-2,1}(V)=h_{1,1}(V^{\mathrm{dual}}) \,.
\end{equation}
These numbers can also be computed from the polytopes, e.g., using \texttt{PALP}.	

Reflexive polytopes associated to the massive one-loop $\nprops$-gon graph are given by the scaled standard simplex $\nprops \simplexwithpars{\nprops,1}$.
        The corresponding polar polytope has the simple expression
		\begin{equation}
			\nabla_{\nprops-\rm
                          gon}(1,D;(1,\dots,1))=\Conv\big\{
			e_1,e_2, \dots, e_{\nprops-1}, -\sum_{i=1}^{\nprops-1} e_i
			\big\}	\, , 
		\end{equation}
		where $e_i$ are unit vectors or $\mathbb R^{\nprops-1}$. We
                thus find that the one-loop $\nprops$-gon encodes the
                projective space $\mathbb{P}^{\nprops-1}$, which is the
                associated toric variety of the polar polytope, see e.g., ref.~\cite[Chapter 1]{Fulton93}.
                The
Hodge numbers are presented in \cref{Hodgengon}.
                They all vanish except for the numbers
                $h_{1,1}(\nprops)$ and $h_{1,N-3}(\nprops)$. The Hodge
                number $h_{\nprops,0}=h_{1,1}(\nprops)=1$  for all cases.
		The Hodge numbers $h_{1,N-3}(\nprops)$ can be extracted
                from the Hirzebruch generating formula for a degree
                $\nprops$ hypersurface $X$ in  $\mathbb{P}^{\nprops}$ given by~\cite[Theorem~17.3.4]{Arapura2012}
		\begin{equation}
			H(\nprops)=\sum_{p,q}(h_{p,q}(\nprops)-\delta_{pq}) x^p y^q=\frac{(1+y)^{\nprops-1}-(1+x)^{\nprops-1}}{(1+x)^{\nprops }y-(1+y)^{\nprops }x}\,.
			\label{formal-series-Hd}
                      \end{equation}

The varieties defined by the singular locus of
the one-loop $\nprops$-gon graphs are not smooth, although they share 
the same polar polytope as the smooth Calabi--Yau $(\nprops-2)$-folds with
the given Hodge numbers.
For instance, the reflexive polytope for the pentagon graph
($\nprops=5$) in $D=6$, $D=8$ or $D=10$ dimensions is the same as the one for a smooth quintic  Calabi--Yau  threefold  in $\mathbb P^4$.  But the one-loop pentagon variety is a
singular toric degeneration of a smooth  quintic threefold. This will
be further discussed in \cref{sec:pentagon-graph-CY}.

\subsection{The massless triangle and the del Pezzo surface $dP_6$}
\label{massless-triangle}
The massless triangle graph has three edge parameters $x_1,x_2,x_3$.
Its Symanzik polynomials read 
\begin{equation}
\cU_{\rm triangle}=x_1+x_2+x_3, \qquad
\cV_{\rm triangle}=x_1 x_2\, p_3^2 + x_1 x_3\, p_2^2 + x_2 x_3\, p_1^2 .
\end{equation}
In $D=4$ dimensions the finite integral is 

	\begin{equation}\label{e:triangle-massless-fourD}
			 I_{\rm triangle}^0(1,4;(1,1,1))=
                        \int_{\mathbb R_+^{2}}\left. {1\over \cU_{\rm
					triangle}\cV_{\rm triangle}}\right|_{x_3=1} \dd x_1  \dd x_2\, .  
                                  \end{equation}
The product $\cU_{\rm triangle} \cV_{\rm triangle}$ is a homogeneous cubic polynomial
in three variables. 
 
The Newton polytope $\Newton(\cU_{\rm triangle})+\Newton(\cV_{\rm triangle})$
is the two-dimensional reflexive
hexagon shown in \cref{fig:polytopemasslesstriangle}, the unique reflexive polytope in dimension two with six
vertices.  The associated toric Fano surface is the del Pezzo surface
$dP_6$ (see e.g. refs.~\cite{Manin1986cubic,Dolgachev_CAG} for a definition), obtained as the blow-up of $\mathbb{P}^2$ at three non-collinear
points. The 
anticanonical hypersurface on $dP_6$ is the cubic
\begin{multline}\label{e:dP6}
  a_{0} z_{0}^2 z_{1}^2 z_{2} z_{3}+a_{1} z_{0} z_{1}^2
   z_{3}^2 z_{4}+a_{2} z_{0}^2 z_{1} z_{2}^2 z_{5}+a_{3}
   z_{1} z_{3}^2 z_{4}^2 z_{5}+a_{4} z_{0} z_{2}^2 z_{4}
   z_{5}^2+a_{5} z_{2} z_{3} z_{4}^2 z_{5}^2\cr+a_{6} z_{0}
   z_{1} z_{2} z_{3} z_{4} z_{5}=0\, ,
\end{multline}
which defines a smooth elliptic curve.
The two-loop sunset hypersurface (cf.\ \cref{secondSymanzik-banana})
\begin{equation}
\{\cF_{\circleddash}^2=(x_1x_2+x_1x_3+x_2x_3)(m_1^2x_1+m_2^2x_2+m_3^2x_3)-p^2x_1x_2x_3=0\}
\end{equation}
is a specialization of the of the cubic in \cref{e:dP6} 
after setting $z_0 = x_1/x_2 z_3 z_4/z_2$, $ z_1 = x_2/x_3 z_2 z_5/z_3$ and  identifying the coefficients $a_0=a_2=m_1^2$, $a_1=a_3=m_2^2$, $a_4=a_5=m_3^2$ and $a_6=m_1^2+m_2^2+m_3^2-p^2$.
The two-loop sunset Feynman integral 
\begin{equation}
    I_{\circleddash}(2,2;(1,1,1))=\int_{\mathbb R_+^2}\left.{1\over \cF_{\circleddash}}\right|_{x_3=1} \dd x_1\dd x_2
\end{equation}
is then interpreted as a
regulator period, and the limiting mixed Hodge structure realizes a
version of local mirror symmetry for $dP_6$~\cite{Bloch:2016izu}.

\medskip

On the other hand, the hypersurface $\{\cU_{\rm triangle}\cF_{\rm triangle}=0\}$
is reducible  and it is obtained with the same identification as above except for the coefficient $a_6=m_1^2+m_2^2+m_3^2$ so that \cref{e:dP6} factorizes as
\begin{equation}
    (z_{0} z_{1} z_{3}+z_{0} z_{2} z_{5}+z_{3} z_{4}
   z_{5}) \left(m_{1}^2 z_{0} z_{1} z_{2}+m_{2}^2 z_{1}
   z_{3} z_{4}+m_{3}^2 z_{2} z_{4} z_{5}\right)=0\,.
\end{equation}
This singular hypersurface  is a degeneration
of the smooth sunset elliptic curve.
The integral in \cref{e:triangle-massless-fourD} evaluates to a single-valued
dilogarithm~\cite{Chavez:2012kn}. The value of this integral corresponds to the limit
$p^2=0$ of the two-loop sunset integral~\cite{Bloch:2013tra}. 

\medskip

Although simpler, the massless triangle fits in the same geometric
paradigm: it computes a period of an elliptic curve associated with an
(open) anticanonical divisor in $dP_6$.  The discriminant of the cubic
polynomial and its $j$--invariant encode the analytic behavior of the
integral in terms of elliptic polylogarithms, exactly mirroring the
sunset geometry.  In particular, the degeneration limits of the cubic
curve correspond to the expected boundary behaviors in the space of
kinematic invariants, and the variation of its $j$--invariant matches
the singularity structure of the Feynman integral.

\subsection{The massive triangle and $\mathbb P^2$}
\label{massive-triangle}
In the massive triangle the Symanzik polynomial and its product $\cU_{\rm triangle} \cF_{\rm triangle}$ remains a cubic polynomial in
three variables
\begin{equation}
\cF_{\rm triangle}=(x_1+x_2+x_3) \left(m_1^2x_1+m_2^2x_2+m_3^2x_3\right)-\left(x_1 x_2\, p_3^2 + x_1 x_3\, p_2^2 + x_2 x_3\, p_1^2\right).
\end{equation}
       The polytope for the massive triangle  in $D=4$ and $D=6$ is the translation of the standard simplex
       scaled by a factor of  three (cf.
\cref{sec:onelooppolytope} for details)
	   \begin{equation}\label{e:DeltaMassiveTriangle}
                  \Newton_{\rm triangle}(1,4;(1,1,1))  =3\simplexwithpars{3,1}\,. 
                \end{equation}
The toric Fano variety $\mathbb P_{\Delta_{\rm triangle}}$ is the projective plane $\mathbb{P}^2$, since the dual fan of any reflexive lattice triangle is the fan of $\mathbb{P}^2$ (see~\cite{Batyrev:1993oya,CoxKatz1999}). The  corresponding anticanonical hypersurface is the plane cubic
\begin{equation}\label{e:dP3}
a_0z_0^3 + a_1z_1^3+ a_2z_2^3 + a_3z_0z_1z_2=0\,.
\end{equation}
Thus, the Fano geometry encoded by the simplex $\Newton_{\rm triangle}(1,4;(1,1,1))$ is that of $\mathbb{P}^2$, and its anticanonical hypersurfaces are smooth elliptic curves.

However, the geometry undergoes a notable
degeneration.  Setting $a_1=a_2=0$,  $a_0=c_{11} c_{12} c_{13}$, $a_3=-c_{12}/c_{21}$, and $z_0=x_1+x_2+x_3$ with $z_i=\sum_{r=1}^3 c_{ir}x_i$, and  $p_1^2=c_{11}(c_{12}-c_{13})^2$, $p_2^2=c_{12}(c_{11}-c_{13})^2$ and $p_3^2=c_{13}(c_{11}-c_{12})^2$, the cubic in \cref{e:dP3} degenerates into the expression of the massive triangle graph hypersurface
\begin{equation}
\cU_{\rm triangle} \cF_{\rm triangle}
  = z_0\,(a_0 z_0^2 + a_3 z_1 z_2)\,. 
\end{equation}
This  union of a line and a
conic  is a degeneration of an elliptic curve to a rational
curve.
The massive triangle integral therefore computes periods of a rational
curve in a degeneration of the anticanonical elliptic curve, in accordance with the fact that the
integral
 \begin{equation}
                    \label{e:massivetriangle4d}
                    I_{3-\rm gon}(1,4;(1,1,1))= \int_{\mathbb R_+^2}
                    {1\over \cU_3\cF_3}\Big|_{x_3=1}\dd x_1\dd x_2\,
                  \end{equation}
can be expressed in terms of classical polylogarithms~\cite{tHooft:1978jhc} rather
than elliptic integrals.

\subsection{The massive box graph and quartic $K3$ surfaces}
\label{massive-box}

\begin{figure}[ht] 
  \centering
  \includegraphics[width=5cm]{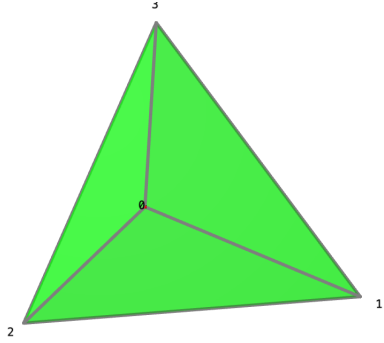}
  \caption{The polytope for the massive box in $D=4$. This polytope
    has the number \#4311 in the {\tt sagemath}~\cite{sagemath} database.}
  \label{fig:massivesboxpolytope}
\end{figure}

The one-loop massive box graph has four variables
$x_1,\dots,x_4$, and the graph polynomials are
\begin{equation}
\begin{aligned}
  \cU_{\rm box}&=x_1+x_2+x_3+x_4, \qquad \cV_{\rm box}=\sum_{1\leq
                 i<j\leq 4} (p_{i}+\cdots+p_{j-1})^2 x_ix_j,\cr
\cF_{\rm box}&=\cU_{\rm box}(m_1^2x_1+\cdots +m_4^2x_4) -\cV_{\rm box}\,  .              \end{aligned}
\end{equation}
The 
Newton polytopes for the massive box integral in $D=4$, $D=6$,  and
$D=8$ are all identical 
because of the relation in \cref{e:FMinkowski} $\Newton(\cF_{\rm
  box})=   \Newton(\cU_{\rm box})+ \Newton(x_1+x_2+x_3+x_4)$ and
$\Newton(\cU_{\rm box})= \Newton(x_1+x_2+x_3+x_4)$.
This polytope is 
four times the three-dimensional standard
simplex (see \cref{sec:onelooppolytope} for details):
\begin{equation}
2\Newton(\cF_{\rm box})=   \Newton(\cU_{\rm box}) +  \Newton(\cF_{\rm
  box})= 4 \Newton(\cU_{\rm box})= 4\simplexwithpars{4,1} \,.
\end{equation}
This polytope, shown in \cref{fig:massivesboxpolytope}, is
reflexive, and the corresponding toric Fano variety is $\mathbb{P}^3$.
Its anticanonical hypersurfaces are quartic surfaces, all of which are  (smooth or mildly singular) quartic $K3$ surfaces. 
The box hypersurface $\{\cF_{\rm box}^2=0\}$,
$\{\cU_{\rm box}^2\cF_{\rm box}=0\}$ or $\{\cU_{\rm box}^4=0\}$ are  singular
degenerations of the quartic $K3$. 
The massive box integrals thus   compute a period of a toric
degeneration of quartic $K3$ surface.

\subsection{The massless box graph and lattice-polarized $K3$ surfaces}
\label{massless-box}

\begin{figure}[ht] 
  \centering
  \includegraphics[width=5cm]{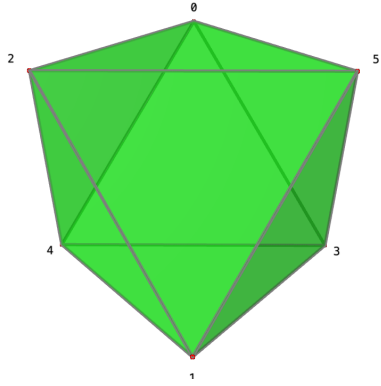}
  \caption{The polytope for the massless box in $D=4$. This polytope
    has the number \#3349 in the {\tt sagemath}~\cite{sagemath} database.}
  \label{fig:masslessboxpolytope}
\end{figure}

The Newton polytope for the massless one-loop box graph in four
dimensions, \cref{fig:masslessboxpolytope},  is given by $  \Newton(\cV_{\rm box}^2)$ which is a three-dimensional reflexive
polytope that is not a simplex.  Concretely, this polytope is the octahedron given by the convex hull of the point $(0,0,2,2)$ and its permutations. 
This is the unique reflexive polytope in dimension three invariant under the full
permutation group $\mathfrak S_4$ of the variables~\cite{Kreuzer:2000xy,Batyrev:1993oya}.  Its polar polytope
is the cube so that the toric ambiant variety is $\mathbb P^1\times \mathbb P^1\times\mathbb P^1$ and the corresponding anticanonical family of hypersurfaces are the  $(2,2,2)$ surfaces in $(\mathbb P^1)^3$.
 From the computation of
\cref{sec:Hodgenumbers} we have that the middle Hodge numbers are $h_{1,1}=20$.

The corresponding toric Fano variety is a smooth toric threefold of
Picard rank six, and a generic anticanonical hypersurface in this
variety is a $K3$ surface. Using Batyrev's mirror symmetry
construction for reflexive polytopes, the dual
reflexive polytope defines a mirror toric Fano threefold whose
anticanonical hypersurfaces form the mirror $K3$ family. The geometric Picard
lattice of a generic member of our $K3$ family has rank $17$, while
the mirror family has geometric Picard rank $3$, in perfect agreement with the
Dolgachev--Nikulin theory of lattice-polarized $K3$
surfaces~\cite{Dolgachev1996,Nikulin1980}.

The massless box integral therefore computes a period of a toric
degeneration of the  lattice-polarized $K3$ surface. The box integral
is given by a dilogarithm function~\cite{tHooft:1978jhc}.
This provides a natural geometric interpretation of the analytic
transition from the $K3$ periods integral to polylogarithmic (and more severe) singularities at special loci in the kinematic space.

\subsection{The pentagon graph and the quintic Calabi--Yau threefold}\label{sec:pentagon-graph-CY}

The graph hypersurfaces for the pentagon are given by
\begin{equation}\label{e:pentagon6d}\cU_{\rm pentagon}\cF_{\rm pentagon}^2=0,\end{equation}   \begin{equation}\label{e:pentagon8d}\cU_{\rm pentagon}^3\cF_{\rm pentagon}=0\, ,\end{equation}
\begin{equation}\label{e:pentagon10d}\cU_{\rm pentagon}^5=0\end{equation}
in $D=6$, $D=8$,
and $D=10$, respectively.  They share the same polytope given by the scaled standard $4$-simplex $5\simplexwithpars{5,1}$ (see
\cref{sec:onelooppolytope} for details), with the graph polynomials
\begin{align}
    \cU_{\rm pentagon}&=x_1+\cdots +x_5,\\
\nonumber    \cF_{\rm pentagon}&=\cU_{\rm pentagon}\left(m_1^2x_1+\cdots +m_5^2x_5\right)-\sum_{1\leq i<j\leq 5} (p_i+\cdots +p_{j-1})^2 x_ix_j\,.
\end{align}

This is the unique reflexive simplex in dimension four.
The associated toric variety is $\mathbb{P}^4$, and the anticanonical
linear system consists of smooth quintic hypersurfaces
\begin{equation}
    a_0z_0^5+\cdots +a_4z_4^5+a_5z_0\cdots z_4=0\,.
\end{equation}
The pentagon hypersurfaces of \cref{e:pentagon6d,e:pentagon8d,e:pentagon10d} are singular specializations of this quintic hypersurface after a linear change of variables.

As in the case of the massive triangle, the pentagon graph hypersurface \eqref{e:pentagon6d} arises from $a_1=a_2=a_3=a_4=0$ with $z_0=x_1+\cdots +x_5$ and a linear change of variables $z_i=\sum_{r=1}^5c_{i,r}x_r$, where the coefficients $c_{i,r}$ and $a_0$ and $a_5$ depend on the kinematic variables. 
The pentagon graph hypersurface in \cref{e:pentagon8d} is obtained by setting 
\begin{equation}
z_0=z_1=z_2=x_1+\cdots+x_5, \quad
a_1=a_2=a_3=a_4=0,  \quad a_0=-c_{3 5} c_{45} a_{5}\, ,
\end{equation}
and expressing the coefficients $c_{r,s}$ of the linear map between the variables $z_r=\sum_{s=1}^5c_{r,s}x_r$ and the edge variables $x_i$ in terms of the kinematic variables $c_{ij}$.
The extreme case in \cref{e:pentagon10d} arises as $a_1=\cdots=a_5=0$, $a_0\neq0$ and $z_0=x_1+\cdots +x_5$, and the other $z_i$ linearly related to the edge variables $x_i$.

In particular, the emergence of these pentagon graph hypersurfaces can be viewed as toric
degeneration of  smooth quintic threefold, and hence fits naturally into the
Batyrev construction~\cite{Batyrev:1993oya}.  As a consequence, the pentagon integral computes
a period of a singular Calabi--Yau threefold that is birational to a
toric degeneration of the quintic.

\subsection{Geometry of reflexive Feynman integrals}

Focusing on the combinatorics and geometry of graph hypersurfaces, the examples above demonstrate that multiloop sunset and one-loop Feynman integrals admit a natural interpretation in terms of smooth anticanonical Calabi--Yau varieties arising from reflexive polytopes via toric methods.
They illustrate a general phenomenon: even when the graph
hypersurface is singular, the reflexive polytope underlying the Symanzik
polynomials still determines a Calabi--Yau structure via Batyrev's
mirror symmetry.  The singularities of the hypersurface correspond to
physical boundary phenomena such as threshold singularities (the
discriminant vanishes at the location of threshold, see e.g. this
connection for two-loop graphs~\cite{Doran:2023yzu}), and mirror
symmetry provides a geometric language to interpret these analytic
features.
It could be interesting to further study the fibration structure on the mirror side and the connection with  torically induced Tyurin degenerations studied by Doran, Harder, and Thompson~\cite{Doran:2016uea}.

This perspective provides a unified framework in which the analytic structures of Feynman integrals---whether rational, polylogarithmic, elliptic, or of Calabi--Yau type---can be understood directly from the combinatorics of the Newton polytope. In this way, mirror symmetry emerges not as an exotic external structure imposed from outside, but as an intrinsic geometric feature of perturbative quantum field theory.

\section{Conclusions}\label{sec:conclusion}

In this work, we have analyzed Fano and reflexive polytopes arising as Newton polytopes of finite and quasi-finite Feynman integrals when the polytope has a single interior point.
We have found that the number of Fano and reflexive polytopes
is very small, which is a remarkable fact considering that the number of three-  and four-dimensional reflexive polytopes is huge~\cite{Kreuzer:2000xy}. We found that in generic kinematics the space of Fano and reflexive polytopes exhibits various families of Feynman graphs including, the well-known multiloop sunset, the $\nprops$-gon, among others. 

Fano and reflexive polytopes are naturally associated with Calabi--Yau varieties, and
our analysis shows that a subset of (quasi-)finite Feynman integrals are
naturally  Calabi--Yau period integrals. Their appearance in this context may provide a geometric organizing principle for the space of such Feynman integrals, allowing one to associate to each integral a toric Calabi--Yau variety or Fano variety that captures its singularity and period structure.
From a mathematical viewpoint, identifying the toric geometry associated with a Feynman graph offers a pathway to understanding the Hodge-theoretic~\cite{Bloch:2013tra,Bloch:2014qca,Bloch:2016izu,Doran:2023yzu}, motivic, and combinatorial structures underlying amplitudes~\cite{Brown:2015fyf, Schnetz:2016fhy, Panzer:2018tiv} and the role of mirror symmetry in evaluating Feynman integrals~\cite{Bloch:2014qca,Bloch:2016izu}.

In multiloop calculations, the standard approach relies on
large families of master integrals, many of which possess overlapping
ultraviolet and infrared divergences and require elaborate subtraction procedures or
dimensional regularization expansions before their finite
contributions can be extracted.  
A promising direction for improving the efficiency of amplitude
computations is the systematic identification of quasi-finite Feynman
integrals. The reason is that finite integrals are easier to evaluate numerically and, in addition to standard techniques, novel sampling methods are applicable to them~\cite{Borinsky:2023jdv, Borinsky:2020rqs}. Our examples show that some of them are also easy to evaluate analytically.

Finally, our analysis gives a systematic way of identifying a rather small number of quasi-finite master integrals that share a common geometrical property: having an associated  Fano variety. 
This might offer another avenue to constructing an integral basis for amplitudes based on algebraic geometry, which would  make the extraction of the finite part of the amplitude more direct.

\section*{Acknowledgements}
We thank Charles Doran for discussions.
The work of PV was funded by the Agence Nationale de la Recherche
(ANR) under the grant Observables (ANR-24-CE31-7996). The research of LDLC was supported by 
the European Research Council under grant ERC--AdG--885414.
The work of PPN was supported by the
European Research Council (ERC) under the European Union's Horizon Europe
research and innovation program grant agreement 101078449 (ERC Starting Grant
MultiScaleAmp).
Views and opinions expressed are however those of the authors only and do not necessarily reflect those of the European Union or the European Research Council Executive Agency. Neither the European Union nor the granting authority can be held responsible for them.

\appendix

\section{Graph polynomials of two-loop graphs}
\label{app:TwoloopGraph}

In this \namecref{app:TwoloopGraph}, we give the generic form of the graph polynomials
of two-loop integrals. The graphs are labeled by
                a triplet of integers $(a,b,c)$ indicating the number of
                external legs attached to each line of the
                skeleton of the graph following the notations of ref.~\cite{Doran:2023yzu}, see \cref{fig:a1cgraphsBis}. 

 	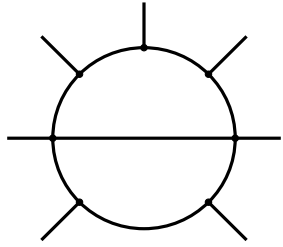
\begin{figure}[h]
			\centering
			\begin{tikzpicture}[scale=0.6]
				\filldraw [color = black, fill=none, very thick] (0,0) circle (2cm);
				\draw [black,very thick] (-2,0) to (2,0);
				\filldraw [black] (2,0) circle (2pt);
				\filldraw [black] (-2,0) circle (2pt);
				\filldraw [black] (0,2) circle (2pt);
				\filldraw [black] (1.414,1.414) circle (2pt);
				\filldraw [black] (-1.414,1.414) circle (2pt);
				\filldraw [black] (1.414,-1.414) circle (2pt);
				\filldraw [black] (-1.414,-1.414) circle (2pt);
				\draw [black,very thick] (-2,0) to (-3,0);
				\draw [black,very thick] (2,0) to (3,0);
				\draw [black,very thick] (0,2) to (0,3);
				\draw [black,very thick] (1.414,1.414) to (2.25,2.25);
				\draw [black,very thick] (-1.414,1.414) to (-2.25,2.25);
				\draw [black,very thick] (1.414,-1.414) to (2.25,-2.25);
				\draw [black,very thick] (-1.414,-1.414) to (-2.25,-2.25);
			\end{tikzpicture}
			\caption{A two-loop graph of type $(a,b,c)$ with $a=4$, $b=1$ and $c=3$.}\label{fig:a1cgraphsBis}
		\end{figure}
		
		The Symanzik polynomials for two-loop graphs are given by

		\begin{align}\label{eq:UVFtwolop}
			\cU_{(a,b,c)} & = \left(\sum_{i=1}^ax_i \right) \left( \sum_{i=1}^b
			y_i\right) + \left(\sum_{i=1}^ax_i \right)  \left(
			\sum_{i=1}^c z_i \right) +  \left( \sum_{i=1}^b
			y_i\right)  \left( \sum_{i=1}^c z_i \right) ,\cr
			\cL_{(a,b,c)}&=\sum_{i=1}^a m_{i}^2 x_i + \sum_{i=1}^b m^2_{i+a} y_i
			+\sum_{i=1}^c  m^2_{a+c+i} z_i,\cr
			\cV_{(a,b,c)} & =  \left( \sum_{i=1}^b
			y_i+ \sum_{i=1}^c z_i \right)\sum_{1\leq
                                        i<j\leq a} c^{x}_{ij} x_ix_j
                                        + \left( \sum_{i=1}^a
                                        x_i                                             + \sum_{i=1}^c z_i \right)\sum_{1\leq
                                        i<j\leq b} c^y_{ij} y_iy_j\cr
                                        &+ \left( \sum_{i=1}^a
			x_i+ \sum_{i=1}^b y_i \right)\sum_{1\leq
				i<j\leq c} c^z_{ij} z_iz_j
			 +\sum_{i=1}^a\sum_{j=1}^b\sum_{k=1}^c c^{xyz}_{ijk} x_iy_jz_k,\cr                  
			\ \cF_{(a,b,c)} & = \cU_{(a,b,c)}\cL_{(a,b,c)} - (\cV^2_{(a,b,c)}+\cV^3_{(a,b,c)}) \,,
		\end{align}
		where $m_i$ are the internal masses and the coefficients $c^x$,
		$c^y$, $c^z$ and $c^{xyz}$ are functions of the scalar products of the
		external legs.

\subsection{The Symanzik polynomial of the diamond circle  graph}\label{app:Udiamondcircle}
We give the first Symanzik polynomial for the diamond circle graph in \cref{fig:diamondcircle}:
 \begin{multline}\label{e:Udiamondcircle}
    \cU_{\rm diamond~circle}=\feynpar_{1} \feynpar_{2} \feynpar_{4} \feynpar_{6} \feynpar_{7}+\feynpar_{1} \feynpar_{2} \feynpar_{4}
    \feynpar_{6} \feynpar_{8}+\feynpar_{1} \feynpar_{2} \feynpar_{4} \feynpar_{6} \feynpar_{9}+\feynpar_{1} \feynpar_{2} \feynpar_{4} \feynpar_{6}
    \feynpar_{10}\cr
    +\feynpar_{1} \feynpar_{2} \feynpar_{4} \feynpar_{7} \feynpar_{8}+\feynpar_{1} \feynpar_{2} \feynpar_{4} \feynpar_{7}
    \feynpar_{10}+\feynpar_{1} \feynpar_{2} \feynpar_{4} \feynpar_{8} \feynpar_{9}+\feynpar_{1} \feynpar_{2} \feynpar_{4} \feynpar_{9}
    \feynpar_{10}+\feynpar_{1} \feynpar_{2} \feynpar_{5} \feynpar_{6} \feynpar_{7}+\feynpar_{1} \feynpar_{2} \feynpar_{5} \feynpar_{6}
    \feynpar_{8}\cr
    +\feynpar_{1} \feynpar_{2} \feynpar_{5} \feynpar_{6} \feynpar_{9}+\feynpar_{1} \feynpar_{2} \feynpar_{5} \feynpar_{6}
    \feynpar_{10}+\feynpar_{1} \feynpar_{2} \feynpar_{5} \feynpar_{7} \feynpar_{8}+\feynpar_{1} \feynpar_{2} \feynpar_{5} \feynpar_{7}
    \feynpar_{10}+\feynpar_{1} \feynpar_{2} \feynpar_{5} \feynpar_{8} \feynpar_{9}+\feynpar_{1} \feynpar_{2} \feynpar_{5} \feynpar_{9}
    \feynpar_{10}\cr
    +\feynpar_{1} \feynpar_{2} \feynpar_{6} \feynpar_{7} \feynpar_{9}+\feynpar_{1} \feynpar_{2} \feynpar_{6} \feynpar_{7}
    \feynpar_{10}+\feynpar_{1} \feynpar_{2} \feynpar_{6} \feynpar_{8} \feynpar_{9}+\feynpar_{1} \feynpar_{2} \feynpar_{6} \feynpar_{8}
    \feynpar_{10}+\feynpar_{1} \feynpar_{2} \feynpar_{7} \feynpar_{8} \feynpar_{9}+\feynpar_{1} \feynpar_{2} \feynpar_{7} \feynpar_{8}
    \feynpar_{10}\cr
    +\feynpar_{1} \feynpar_{2} \feynpar_{7} \feynpar_{9} \feynpar_{10}+\feynpar_{1} \feynpar_{2} \feynpar_{8} \feynpar_{9}
    \feynpar_{10}+\feynpar_{1} \feynpar_{3} \feynpar_{4} \feynpar_{6} \feynpar_{7}+\feynpar_{1} \feynpar_{3} \feynpar_{4} \feynpar_{6}
    \feynpar_{8}+\feynpar_{1} \feynpar_{3} \feynpar_{4} \feynpar_{6} \feynpar_{9}+\feynpar_{1} \feynpar_{3} \feynpar_{4} \feynpar_{6}
    \feynpar_{10}\cr
    +\feynpar_{1} \feynpar_{3} \feynpar_{4} \feynpar_{7} \feynpar_{8}+\feynpar_{1} \feynpar_{3} \feynpar_{4} \feynpar_{7}
    \feynpar_{10}+\feynpar_{1} \feynpar_{3} \feynpar_{4} \feynpar_{8} \feynpar_{9}+\feynpar_{1} \feynpar_{3} \feynpar_{4} \feynpar_{9}
    \feynpar_{10}+\feynpar_{1} \feynpar_{3} \feynpar_{5} \feynpar_{6} \feynpar_{7}+\feynpar_{1} \feynpar_{3} \feynpar_{5} \feynpar_{6}
    \feynpar_{8}\cr
    +\feynpar_{1} \feynpar_{3} \feynpar_{5} \feynpar_{6} \feynpar_{9}+\feynpar_{1} \feynpar_{3} \feynpar_{5} \feynpar_{6}
    \feynpar_{10}+\feynpar_{1} \feynpar_{3} \feynpar_{5} \feynpar_{7} \feynpar_{8}+\feynpar_{1} \feynpar_{3} \feynpar_{5} \feynpar_{7}
    \feynpar_{10}+\feynpar_{1} \feynpar_{3} \feynpar_{5} \feynpar_{8} \feynpar_{9}+\feynpar_{1} \feynpar_{3} \feynpar_{5} \feynpar_{9}
    \feynpar_{10}\cr
    +\feynpar_{1} \feynpar_{3} \feynpar_{6} \feynpar_{7} \feynpar_{9}+\feynpar_{1} \feynpar_{3} \feynpar_{6} \feynpar_{7}
    \feynpar_{10}+\feynpar_{1} \feynpar_{3} \feynpar_{6} \feynpar_{8} \feynpar_{9}+\feynpar_{1} \feynpar_{3} \feynpar_{6} \feynpar_{8}
    \feynpar_{10}+\feynpar_{1} \feynpar_{3} \feynpar_{7} \feynpar_{8} \feynpar_{9}+\feynpar_{1} \feynpar_{3} \feynpar_{7} \feynpar_{8}
    \feynpar_{10}\cr
    +\feynpar_{1} \feynpar_{3} \feynpar_{7} \feynpar_{9} \feynpar_{10}+\feynpar_{1} \feynpar_{3} \feynpar_{8} \feynpar_{9}
    \feynpar_{10}+\feynpar_{1} \feynpar_{4} \feynpar_{6} \feynpar_{7} \feynpar_{9}+\feynpar_{1} \feynpar_{4} \feynpar_{6} \feynpar_{7}
    \feynpar_{10}+\feynpar_{1} \feynpar_{4} \feynpar_{6} \feynpar_{8} \feynpar_{9}+\feynpar_{1} \feynpar_{4} \feynpar_{6} \feynpar_{8}
    \feynpar_{10}\cr
    +\feynpar_{1} \feynpar_{4} \feynpar_{7} \feynpar_{8} \feynpar_{9}+\feynpar_{1} \feynpar_{4} \feynpar_{7} \feynpar_{8}
    \feynpar_{10}+\feynpar_{1} \feynpar_{4} \feynpar_{7} \feynpar_{9} \feynpar_{10}+\feynpar_{1} \feynpar_{4} \feynpar_{8} \feynpar_{9}
    \feynpar_{10}+\feynpar_{1} \feynpar_{5} \feynpar_{6} \feynpar_{7} \feynpar_{9}+\feynpar_{1} \feynpar_{5} \feynpar_{6} \feynpar_{7}
    \feynpar_{10}\cr
    +\feynpar_{1} \feynpar_{5} \feynpar_{6} \feynpar_{8} \feynpar_{9}+\feynpar_{1} \feynpar_{5} \feynpar_{6} \feynpar_{8}
    \feynpar_{10}+\feynpar_{1} \feynpar_{5} \feynpar_{7} \feynpar_{8} \feynpar_{9}+\feynpar_{1} \feynpar_{5} \feynpar_{7} \feynpar_{8}
    \feynpar_{10}+\feynpar_{1} \feynpar_{5} \feynpar_{7} \feynpar_{9} \feynpar_{10}+\feynpar_{1} \feynpar_{5} \feynpar_{8} \feynpar_{9}
    \feynpar_{10}\cr
    +\feynpar_{2} \feynpar_{3} \feynpar_{4} \feynpar_{6} \feynpar_{7}+\feynpar_{2} \feynpar_{3} \feynpar_{4} \feynpar_{6} \feynpar_{8}+\feynpar_{2}
    \feynpar_{3} \feynpar_{4} \feynpar_{6} \feynpar_{9}+\feynpar_{2} \feynpar_{3} \feynpar_{4} \feynpar_{6} \feynpar_{10}+\feynpar_{2} \feynpar_{3}
    \feynpar_{4} \feynpar_{7} \feynpar_{8}+\feynpar_{2} \feynpar_{3} \feynpar_{4} \feynpar_{7} \feynpar_{10}\cr
    +\feynpar_{2} \feynpar_{3} \feynpar_{4} \feynpar_{8} \feynpar_{9}+\feynpar_{2} \feynpar_{3} \feynpar_{4} \feynpar_{9}
    \feynpar_{10}+\feynpar_{2} \feynpar_{3} \feynpar_{5} \feynpar_{6} \feynpar_{7}+\feynpar_{2} \feynpar_{3} \feynpar_{5} \feynpar_{6}
    \feynpar_{8}+\feynpar_{2} \feynpar_{3} \feynpar_{5} \feynpar_{6} \feynpar_{9}+\feynpar_{2} \feynpar_{3} \feynpar_{5} \feynpar_{6}
    \feynpar_{10}\cr
    +\feynpar_{2} \feynpar_{3} \feynpar_{5} \feynpar_{7} \feynpar_{8}+\feynpar_{2} \feynpar_{3} \feynpar_{5} \feynpar_{7}
    \feynpar_{10}+\feynpar_{2} \feynpar_{3} \feynpar_{5} \feynpar_{8} \feynpar_{9}+\feynpar_{2} \feynpar_{3} \feynpar_{5} \feynpar_{9}
    \feynpar_{10}+\feynpar_{2} \feynpar_{3} \feynpar_{6} \feynpar_{7} \feynpar_{9}+\feynpar_{2} \feynpar_{3} \feynpar_{6} \feynpar_{7}
    \feynpar_{10}\cr
    +\feynpar_{2} \feynpar_{3} \feynpar_{6} \feynpar_{8} \feynpar_{9}+\feynpar_{2} \feynpar_{3} \feynpar_{6} \feynpar_{8}
    \feynpar_{10}+\feynpar_{2} \feynpar_{3} \feynpar_{7} \feynpar_{8} \feynpar_{9}+\feynpar_{2} \feynpar_{3} \feynpar_{7} \feynpar_{8}
    \feynpar_{10}+\feynpar_{2} \feynpar_{3} \feynpar_{7} \feynpar_{9} \feynpar_{10}+\feynpar_{2} \feynpar_{3} \feynpar_{8} \feynpar_{9}
    \feynpar_{10}\cr
    +\feynpar_{2} \feynpar_{4} \feynpar_{5} \feynpar_{6} \feynpar_{7}+\feynpar_{2} \feynpar_{4} \feynpar_{5} \feynpar_{6} \feynpar_{8}+\feynpar_{2}
    \feynpar_{4} \feynpar_{5} \feynpar_{6} \feynpar_{9}+\feynpar_{2} \feynpar_{4} \feynpar_{5} \feynpar_{6} \feynpar_{10}+\feynpar_{2} \feynpar_{4}
    \feynpar_{5} \feynpar_{7} \feynpar_{8}+\feynpar_{2} \feynpar_{4} \feynpar_{5} \feynpar_{7} \feynpar_{10}\cr
    +\feynpar_{2} \feynpar_{4} \feynpar_{5} \feynpar_{8} \feynpar_{9}+\feynpar_{2} \feynpar_{4} \feynpar_{5} \feynpar_{9}
    \feynpar_{10}+\feynpar_{2} \feynpar_{5} \feynpar_{6} \feynpar_{7} \feynpar_{9}+\feynpar_{2} \feynpar_{5} \feynpar_{6} \feynpar_{7}
    \feynpar_{10}+\feynpar_{2} \feynpar_{5} \feynpar_{6} \feynpar_{8} \feynpar_{9}+\feynpar_{2} \feynpar_{5} \feynpar_{6} \feynpar_{8}
    \feynpar_{10}\cr
    +\feynpar_{2} \feynpar_{5} \feynpar_{7} \feynpar_{8} \feynpar_{9}+\feynpar_{2} \feynpar_{5} \feynpar_{7} \feynpar_{8}
    \feynpar_{10}+\feynpar_{2} \feynpar_{5} \feynpar_{7} \feynpar_{9} \feynpar_{10}+\feynpar_{2} \feynpar_{5} \feynpar_{8} \feynpar_{9}
    \feynpar_{10}+\feynpar_{3} \feynpar_{4} \feynpar_{5} \feynpar_{6} \feynpar_{7}+\feynpar_{3} \feynpar_{4} \feynpar_{5} \feynpar_{6}
    \feynpar_{8}\cr
    +\feynpar_{3} \feynpar_{4} \feynpar_{5} \feynpar_{6} \feynpar_{9}+\feynpar_{3} \feynpar_{4} \feynpar_{5} \feynpar_{6}
    \feynpar_{10}+\feynpar_{3} \feynpar_{4} \feynpar_{5} \feynpar_{7} \feynpar_{8}+\feynpar_{3} \feynpar_{4} \feynpar_{5} \feynpar_{7}
    \feynpar_{10}+\feynpar_{3} \feynpar_{4} \feynpar_{5} \feynpar_{8} \feynpar_{9}+\feynpar_{3} \feynpar_{4} \feynpar_{5} \feynpar_{9}
    \feynpar_{10}\cr
    +\feynpar_{3} \feynpar_{4} \feynpar_{6} \feynpar_{7} \feynpar_{9}+\feynpar_{3} \feynpar_{4} \feynpar_{6} \feynpar_{7}
    \feynpar_{10}+\feynpar_{3} \feynpar_{4} \feynpar_{6} \feynpar_{8} \feynpar_{9}+\feynpar_{3} \feynpar_{4} \feynpar_{6} \feynpar_{8}
    \feynpar_{10}+\feynpar_{3} \feynpar_{4} \feynpar_{7} \feynpar_{8} \feynpar_{9}+\feynpar_{3} \feynpar_{4} \feynpar_{7} \feynpar_{8}
    \feynpar_{10}\cr
    +\feynpar_{3} \feynpar_{4} \feynpar_{7} \feynpar_{9} \feynpar_{10}+\feynpar_{3} \feynpar_{4} \feynpar_{8} \feynpar_{9}
    \feynpar_{10}+\feynpar_{4} \feynpar_{5} \feynpar_{6} \feynpar_{7} \feynpar_{9}+\feynpar_{4} \feynpar_{5} \feynpar_{6} \feynpar_{7}
    \feynpar_{10}+\feynpar_{4} \feynpar_{5} \feynpar_{6} \feynpar_{8} \feynpar_{9}+\feynpar_{4} \feynpar_{5} \feynpar_{6} \feynpar_{8}
    \feynpar_{10}\cr
    +\feynpar_{4} \feynpar_{5} \feynpar_{7} \feynpar_{8} \feynpar_{9}+\feynpar_{4} \feynpar_{5} \feynpar_{7} \feynpar_{8} \feynpar_{10}+\feynpar_{4} \feynpar_{5} \feynpar_{7} \feynpar_{9} \feynpar_{10}+\feynpar_{4} \feynpar_{5} \feynpar_{8} \feynpar_{9} \feynpar_{10}    \,.
  \end{multline}
   
  \section{Bivariate Ehrhart polynomial for massless one-loop integrals}\label{app:oneLoopEhrhart}
  
  In this \namecref{app:oneLoopEhrhart}, we give a proof of the formula for the Ehrhart
  polynomial in \cref{e:EhNgonConj}.
  
  Let \( \nprops \ge 2 \) be an integer. 
For integers \( t_1,t_2 \ge 0 \), consider the Minkowski sum of the standard simplex in \cref{e:standardsimplex-definition} and the second hypersimplex in  \cref{e:secondhypersymplex-definition},
\begin{equation}
\Newton(t_1,t_2) := t_1    \simplexwithpars{\nprops,1} + t_2  \simplexwithpars{\nprops,2}  .
\end{equation}
We will describe \( \Newton(t_1,t_2) \) explicitly and derive its Ehrhart polynomial
\( \ehr(t_1,t_2) = \#(\Newton(t_1,t_2)\cap\mathbb{Z}^\nprops) \) which counts the number of lattice points of $\Newton(t_1,t_2)$.

For any matroids (or polymatroids) with rank functions \(r_1,r_2\),
the Minkowski sum of their base polytopes is the base polytope of the
rank function \(r=r_1+r_2\).
Hence
\begin{equation}
\Newton(t_1,t_2)
= \left\{ x \in \mathbb{R}_{\ge 0}^\nprops \, \middle| \,
   \sum_{i \in S} x_i \le r(S)\ \text{for all } S \subseteq [\nprops],
   \ \sum_{i=1}^\nprops x_i = r([\nprops]) \right\}
\end{equation}
is the base polytope of rank
\begin{equation}
r(S) = t_1 \min(|S|,1) + t_2 \min(|S|,2), \qquad S \subseteq [\nprops],
\end{equation}
which evaluates to
\begin{equation}
r(S) =
\begin{cases}
0, & |S|=0,\\[4pt]
t_1+t_2, & |S|=1,\\[4pt]
t_1+2t_2, & |S|\ge 2.
\end{cases}
\end{equation}
Therefore only the singleton inequalities are non-trivial, and 
we obtain the simple description
\begin{equation}
\Newton(t_1,t_2)
= \left\{ x \in \mathbb{R}_{\ge 0}^\nprops \, \middle| \,
      \sum_{i=1}^\nprops x_i = t_1+2t_2, \; x_i \le t_1+t_2 \ \text{for all } i \right\}.
\end{equation}
Thus the Minkowski sum depends only on the total sum \(t_1+2t_2\)
and the per-coordinate cap \(t_1+t_2\).
The lattice points of \(\Newton(t_1,t_2)\) are the integer solutions of
\begin{equation}
x_1+\cdots+x_\nprops = t_1+2t_2, \qquad 0 \le x_i \le t_1+t_2.
\end{equation}
Without the upper bounds, the number of non-negative integer solutions
is given by the classical stars-and-bars formula
\begin{equation}
\binom{t_1+2t_2+\nprops-1}{\nprops-1}.
\end{equation}
To impose the upper bounds \(x_i \le t_1+t_2\), we apply inclusion--exclusion.
For each subset \(J \subseteq [\nprops]\) of size \(j\),
consider the solutions with \(x_i \ge t_1+t_2+1\) for all \(i\in J\).
Setting \(x_i = y_i + (t_1+t_2+1)\) for \(i\in J\),
we obtain the equation
\begin{equation}
\sum_{i=1}^\nprops y_i = t_1+2t_2 - j(t_1+t_2+1),
\end{equation}
whose non-negative solutions number is
\begin{equation}
\binom{t_1+2t_2 - j(t_1+t_2+1) + \nprops - 1}{\nprops-1}.
\end{equation}
Using inclusion--exclusion, the total count is therefore
\begin{equation}\label{e:Lsum}
\ehr(t_1,t_2)
= \sum_{j=0}^{\lfloor (t_1+2t_2)/(t_1+t_2+1)\rfloor}
    (-1)^j \binom{\nprops}{j}
    \binom{t_1+2t_2 - j(t_1+t_2+1) + \nprops - 1}{\nprops - 1}.
\end{equation}
The upper limit of the sum arises because the binomial coefficient
vanishes when \(t_1+2t_2-j(t_1+t_2+1) < 0\).

For the case of the Minkowski sum arising from the massless one-loop
graph in \cref{e:DeltaOneLoopMassless} we have
$t_1+2t_2=\nu_1+\cdots+\nu_\nprops$ and $t_1+t_2=D/2$, implying that \(t_1+2t_2 < 2(t_1+t_2+1)\) because $t_1\geq0$.
Therefore only the terms \(j=0,1\) contribute to the sum in
\cref{e:Lsum}, so that
\begin{equation}
\ehr(t_1,t_2)
= \binom{t_1+2t_2+\nprops-1}{\nprops-1}
  - N \binom{t_2+\nprops-2}{\nprops-1}.
\end{equation}

\section{Bivariate Ehrhart polynomial for multiloop sunset integrals}
\label{app:sunsetEhrhart}

In this \namecref{app:sunsetEhrhart}, we compute the bivariate Ehrhart polynomial for the
massive multiloop sunset graphs. We review some properties of the
polytopes relevant for the multiloop sunset graphs and their relation to the
permutohedron, and derive the Ehrhart polynomial that we used in the
main text to identify the sunset polytopes with a single interior
point.

\subsection{Sunset polytope and permutohedron}\label{app:sunsetpolytope}

The polytope for the family of
multiloop sunset graphs is given in
\cref{e:SunsetPolytope}. Introducing   the notation $t_1=\nu_1+\cdots+\nu_{\nprops}-{(\nprops-1)D\over2}$ and
$t_2={D\over2}$ for the coefficients of the Newton polytope for the
sunset for better readability, the sunset Newton polytope
\begin{equation}
  \label{e:PolySun}
  \Delta_{\circleddash}(t_1,t_2;\nprops)=t_1 \simplexwithpars{\nprops,1} +t_2  \Newton(\cU_{\circleddash}^{\nprops})
\end{equation}
is a combination of the standard simplex polytope in 
\cref{e:standardsimplex-definition}
and the Newton polytope for the first Symanzik polynomial,
\begin{equation}
  \Newton(\cU_{\circleddash}^{\nprops})=\Newton\left(\sum_{i=1}^{\nprops} \prod_{1\leq 
    j\leq \nprops\atop j\neq i} x_j\right) =  \Conv(\mathbf 1-e_i) =\mathbf 1-\simplexwithpars{\nprops,1}\,,
\end{equation}
with $\mathbf 1=(1,\dots,1)$.
We notice that the Minkowski sum $\Delta_{\circleddash}(1,1;\nprops)=\simplexwithpars{\nprops,1}  + \Newton(\cU_{\circleddash}^{\nprops})$ has vertices among the points
\begin{equation}
e_i + (\mathbf{1} - e_j) = \mathbf{1} + e_i - e_j, \quad i, j \in \{1, \ldots, \nprops\}.
\end{equation}
The actual vertices are:
\begin{enumerate}[label=(\roman*)]
    \item for $i = j$: the point $\mathbf{1}$ (with multiplicity, but appears once as a vertex only for $n = 2$);
    \item for $i \neq j$: the points $\mathbf{1} + e_i - e_j$.
\end{enumerate}
For $\nprops \geq 3$, the vertices are precisely $\{\mathbf{1} + e_i - e_j : i \neq j\}$, giving $n(n-1)$ vertices.
This Newton polytope $\Delta_{\circleddash}(1,1;\nprops)$ is invariant under the action of the permutation group~$\mathfrak S_{\nprops}$ on the coordinates.

Let us introduce the permutohedron $\Pi_{\nprops-1}$ defined by~\cite{postnikov2005}
\begin{equation}
\Pi_{\nprops-1} = \Conv \set{(\sigma(1), \sigma(2), \ldots, \sigma(\nprops)) \, | \, \sigma
\in \mathfrak S_{\nprops}}.
\end{equation}
The sunset polytope $\Delta_{\circleddash}(1,1;\nprops)$ translated to be centered at the origin, is homothetic to the permutohedron. Specifically,
\begin{equation}
\Delta_{\circleddash}(1,1;\nprops) - \mathbf{1} = \Conv \set{e_i - e_j \, | \, i \neq j},
\end{equation}
which is the convex hull of the root system of type $A_{\nprops-1}$. This
was noticed in ref.~\cite{verrill1996root}.
The root system of type $A_{n-1}$ consists of vectors
\begin{equation}
\Phi_{A_{\nprops-1}} = \set{e_i - e_j \, | \, 1 \leq i \neq j \leq \nprops} \subset
\mathbb R^\nprops.
\end{equation}
The simple roots are $\alpha_i = e_i - e_{i+1}$ for $i = 1, \ldots, \nprops-1$.
The root polytope is the convex hull of the roots
\begin{equation}\label{e:RootAnpolytope}
\mathcal{R}_{A_{\nprops-1}} = \Conv(\Phi_{A_{\nprops-1}}) = \Conv \set{e_i - e_j \, | \, i \neq j}.
\end{equation}
We then have the relation to the sunset graph polytope
\begin{equation}
\Delta_{\circleddash}(1,1;\nprops) = \mathbf{1} + \mathcal{R}_{A_{\nprops-1}}.
\end{equation}

We remark that the root polytope $\mathcal{R}_{A_{\nprops-1}}$ is distinct from but closely related to the permutohedron. The permutohedron can be expressed as a Minkowski sum of line segments along the positive roots, while the root polytope is the convex hull of all roots.

\subsection{Ehrhart polynomials}

Before computing the bivariate Ehrhart polynomial, we compute the univariate Ehrhart polynomial for each polytope in the Minkowski sum.

The dilated simplex $t\simplexwithpars{\nprops,1}$ consists of points $(x_1, \ldots, x_\nprops)$ with $x_i \geq 0$ and $\sum x_i = t$. The number of non-negative integer solutions is $\binom{t+n-1}{n-1}$ by the stars-and-bars formula.
Therefore, the Ehrhart polynomial is~\cite{Beck:2015}
\begin{equation}
\ehr_{\simplexwithpars{\nprops,1}}(t) =\#\left(t\simplexwithpars{\nprops,1} \cap \mathbb Z^\nprops\right) = \binom{t + \nprops - 1}{\nprops - 1} = \frac{(t+1)(t+2)\cdots(t+\nprops-1)}{1 \cdot 2\cdots (\nprops-1)}.
\end{equation}

\medskip

The dilation of the second polytope, $t \Newton(\cU_{\circleddash}^{\nprops-1}) = t \left( \mathbf{1} - \simplexwithpars{\nprops,1} \right)$, is obtained from $t \simplexwithpars{\nprops,1}$ by an integer shift followed by a reflection. Since both operations preserve the number of lattice points, we have
\begin{equation}
\ehr_{\Newton(\cU_{\circleddash}^{\nprops-1})}(t) = \#\left(t
  \Newton(\cU_{\circleddash}^{\nprops-1}) \cap \mathbb Z^\nprops\right) = \ehr_{\simplexwithpars{\nprops,1}}(t).
\end{equation}

\medskip

The bivariate Ehrhart polynomial for the Minkowski sum
$\Delta_{\circleddash}(t_1,t_2; \nprops)$ in \cref{e:PolySun} is defined by
\begin{equation}
    \ehr_{\circleddash}(t_1, t_2) :=\# \left((t_1
      \simplexwithpars{\nprops,1} + t_2 ( \mathbf{1} - \simplexwithpars{\nprops,1})) \cap \mathbb Z^\nprops\right)\,.
  \end{equation}
  This polynomial satisfies the following properties:
  \begin{enumerate}[label=(\roman*)]
  \item it has special values for $t_1=0$ or $t_2=0$,
    \begin{equation}\label{e:EhrSunsetProp1}
      \ehr_{\circleddash}(t, 0) = \ehr_{\circleddash}(0, t) = \binom{t + \nprops - 1}{\nprops - 1};
    \end{equation}
  \item it has total degree $\nprops-1$ in $t_1$ and $t_2$;
  \item it is a symmetric function of $t_1$ and $t_2$,
    \begin{equation}
      \ehr_{\circleddash}(t_1, t_2)=\ehr_{\circleddash}(t_2, t_1) \,,
    \end{equation}
    so that it can be written in symmetric form
    \begin{equation}\label{e:Ehrsp}
  \ehr_{\circleddash}(t_1,t_2,\nprops) = \sum_{ r_1,r_2\geq 0\atop r_1+2r_2\leq \nprops-1}e^\nprops_{r_1, r_2} (t_1 + t_2)^{r_1} (t_1 t_2)^{r_2}\,.
\end{equation}
\end{enumerate}
Because the polytope  $ \Delta_{\circleddash}(t_1,t_2;\nprops)$ in  \cref{e:PolySun}
  is a generalized permutohedron, we can use
Postnikov's result on mixed volumes and his submodular function
formalism~\cite{postnikov2005} to determine the weight coefficients $e_{r_1,r_2}^\nprops$.

\medskip

We introduce the polynomial
\begin{equation}
  P(t_1,t_2,n):=  \sum_{0\leq i,j\leq n-1\atop i+j\neq
    n-1}(-1)^{-i-j+n-1} \binom{i+j}{i} \binom{i+t_1}{i} \binom{j+t_2}{j} \,.
\end{equation}
The Ehrhart polynomial for the multiloop sunset is given by
\begin{equation}\label{e:EhrSunsetResult}
    \ehr_{\circleddash}(t_1,t_2,n)=P(t_1,t_2,n)+\sum_{r=0}^{n-1} c(r,n) P(t_1,t_2,n)\,,
\end{equation}
where the coefficients $c(r,n)$ read~\cite[A097808]{oeis}
\begin{equation}
 c(r,n)=(-1)^n \textrm{coeff}_{x^{n-1}}   \left({2 x+1\over (1+x)^2}  \left(\frac{x}{x+1}\right)^{n-1-r}\right),
\end{equation}
and $\textrm{coeff}_{x^{r}}(f(x))$ means the coefficient of
$x^r$ in the series expansion of $f(x)$ around $x=0$.

\medskip

For fixed integer values of $\tau_1$ and $\tau_2$
one can compute the Ehrhart polynomial for the polytope $P=\tau_1\simplexwithpars{\nprops,1}+\tau_2 ( \mathbf{1} - \simplexwithpars{\nprops,1})$,
\begin{equation}
\ehr_P(t,\nprops)=  \#\left(t\Delta_{\circleddash}(\tau_1,\tau_2;\nprops) \cap \mathbb Z^n\right) \,,
\end{equation}
with {\tt polymake}~\cite{polymake:FPSAC_2009} and check that this
agrees with the bivariate Ehrhart polynomial of \cref{e:EhrSunsetResult} evaluated at $t_1=t \tau_1$ and $t_2=t\tau_2$.
We give the expressions of the bivariate Ehrhart polynomial for the multiloop sunset graphs up to ten edges in \cref{tab:EhrSunset}.

\begin{table}\begin{tblr}{
  colspec={|Q||X|},
  row{odd}={bg=lightgray},  
  row{1}={bg=brown,fg=white}}
  \hline
  $n$ edges & $\ehr(t_1,t_2;n)$ in \cref{e:EhrSunsetResult} with $s=t_1+t_2$ and $p=t_1t_2$\\
  \hline
  2&$1+s$\\[2ex]
  3&$1+p+\frac{3 s}{2}+\frac{s^2}{2}$\\[2ex]
  4&$1+p+\frac{11 s}{6}+p s+s^2+\frac{s^3}{6}$\\[2ex]
  5&$1+\frac{5 p}{4}+\frac{p^2}{4}+\frac{25 s}{12}+\frac{5 p s}{4}+\frac{35 s^2}{24}+\frac{p
   s^2}{2}+\frac{5 s^3}{12}+\frac{s^4}{24}$\\[2ex]
  6&$1+\frac{5 p}{4}+\frac{p^2}{4}+\frac{137 s}{60}+\frac{11 p s}{6}+\frac{p^2 s}{4}+\frac{15
   s^2}{8}+\frac{3 p s^2}{4}+\frac{17 s^3}{24}+\frac{p
   s^3}{6}+\frac{s^4}{8}+\frac{s^5}{120}$\\[2ex]
   7&$1+\frac{49 p}{36}+\frac{7 p^2}{18}+\frac{p^3}{36}+\frac{49 s}{20}+\frac{49 p
   s}{24}+\frac{7 p^2 s}{24}+\frac{203 s^2}{90}+\frac{91 p s^2}{72}+\frac{p^2
      s^2}{8}+\frac{49 s^3}{48}+\frac{7 p s^3}{24}
  +\frac{35 s^4}{144}+\frac{p
   s^4}{24}+\frac{7 s^5}{240}+\frac{s^6}{720}$\\[2ex]
  8&$1+\frac{49 p}{36}+\frac{7 p^2}{18}+\frac{p^3}{36}+\frac{363 s}{140}+\frac{877 p
   s}{360}+\frac{37 p^2 s}{72}+\frac{p^3 s}{36}+\frac{469 s^2}{180}+\frac{14 p
   s^2}{9}+\frac{p^2 s^2}{6}+\frac{967 s^3}{720}+\frac{5 p s^3}{9}+\frac{p^2
     s^3}{24}+\frac{7 s^4}{18}
  +\frac{p s^4}{12}+\frac{23 s^5}{360}+\frac{p
   s^5}{120}+\frac{s^6}{180}+\frac{s^7}{5040}$\\[2ex]
  9&$1+\frac{205 p}{144}+\frac{91 p^2}{192}+\frac{5 p^3}{96}+\frac{p^4}{576}+\frac{761
   s}{280}+\frac{417 p s}{160}+\frac{9 p^2 s}{16}+\frac{p^3 s}{32}+\frac{29531
   s^2}{10080}+\frac{971 p s^2}{480}+\frac{21 p^2 s^2}{64} +\frac{p^3 s^2}{72}+\frac{267
   s^3}{160}+\frac{3 p s^3}{4}+\frac{p^2 s^3}{16}+\frac{1069 s^4}{1920}+\frac{17 p
     s^4}{96}+\frac{p^2 s^4}{96} +\frac{9 s^5}{80}+\frac{3 p s^5}{160}+\frac{13
   s^6}{960}+\frac{p s^6}{720}+\frac{s^7}{1120}+\frac{s^8}{40320}$\\[2ex]
  \hline
10&$1+\frac{205 p}{144}+\frac{91 p^2}{192}+\frac{5 p^3}{96}+\frac{p^4}{576}+\frac{7129 s}{2520}+\frac{17531 p s}{6048}+\frac{6427 p^2 s}{8640}+\frac{7 p^3
   s}{108}+\frac{p^4 s}{576}+\frac{6515 s^2}{2016}+\frac{37 p s^2}{16}+\frac{25 p^2 s^2}{64}+\frac{5 p^3 s^2}{288}+\frac{4523 s^3}{2268}+\frac{4567 p
   s^3}{4320}+\frac{235 p^2 s^3}{1728}+\frac{p^3 s^3}{216}+\frac{95 s^4}{128}+\frac{25 p s^4}{96}+\frac{5 p^2 s^4}{288}+\frac{3013 s^5}{17280}+\frac{19 p
   s^5}{432}+\frac{p^2 s^5}{480}+\frac{5 s^6}{192}+\frac{p s^6}{288}+\frac{29 s^7}{12096}+\frac{p s^7}{5040}+\frac{s^8}{8064}+\frac{s^9}{362880}$\\[2ex]
   \hline
\end{tblr}
\caption{Bivariate Ehrhart polynomials in \cref{e:EhrSunsetResult} for $2\leq \nprops\leq 10$.}\label{tab:EhrSunset}
\end{table}

    \section{Tables of Fano graphs}
        \label{app:fanotable}

In the repository~\cite{reflexivefano}, we give the full list of representative graphs with the exponents that lead to Fano or reflexive polytopes.
They are given in \texttt{Mathematica} files named \texttt{FanoCases.m}, \texttt{ReflexiveCases.m} and \texttt{FanoNotReflexiveCases.m} for the list of polytopes with one interior point, the list of reflexive polytopes and the list of Fano polytopes that are not reflexive, respectively. Fano polytopes are accessed with the following command 
$$ \texttt{FanoCases} [\nprops] = \{ \{\texttt{Graph Association}, \mathtt{powers} \}, \{\texttt{Graph Association},  \mathtt{powers}  \}, \dots \}$$
for $\nprops=2,\dots, 10$.  The same format is used for reflexive cases, and the non-reflexive cases.
 The dimension and powers of the propagators can be obtained from the pair $(n_\cU, n_\cF)$ as they appear in the definition of the integral in eq.~\eqref{parametric-integral-projective}.

In addition, in \cref{Fanoupto9,Fanoupto10}  we give the list of graphs that lead to Fano polytopes  where both Symanzik polynomials appear. These are arguably the most relevant cases in Physics. The sunset and $N$-gons are not drawn since they were discussed at length in \cref{sec:reflexivefano,sec:dimloopscan}.  The graphs were drawn using the \texttt{Loopedia} tool~\emph{redraw} \cite{Bogner:2017xhp}.    

\begin{table}[htb!]
	\centering
	\begin{tabular}{|c||c||c|}
		\hline  
		\input{./drawings/tikz-drawings/graph5-6.tex} &  \input{./drawings/tikz-drawings/graph8-13.tex}&\input{./drawings/tikz-drawings/graph9-1534.tex}\\
		\smallNickel {5,\{2,1\}:(1,1)} &        \smallNickel{8,$\set{4,1}$:(1,2)}   & \smallNickel{9,$\set{14,1}$:(3,1)}\\
		\hline
		\input{./drawings/tikz-drawings/graph7-10.tex}&   \input{./drawings/tikz-drawings/graph9-16.tex} & \input{./drawings/tikz-drawings/graph9-1497.tex}\\ 
		\smallNickel{7, $\set{2,1}$:(2,1)} & \smallNickel{9,$\set{2,1}$:(3,1)} & \smallNickel{9,$\set{15,1}$:(1,1)} 
		\\		\hline
		\input{./drawings/tikz-drawings/graph7-11.tex}&  \input{./drawings/tikz-drawings/graph9-17.tex}& \input{./drawings/tikz-drawings/graph9-1488.tex}\\	
		\smallNickel{7,$\set{3,1}$:(2,1)} & \smallNickel{9,$\set{3,1}$:(3,1)} & \smallNickel{9,$\set{18,1}$:(1,1)}
		\\	\hline
		\input{./drawings/tikz-drawings/graph7-76.tex}& \input{./drawings/tikz-drawings/graph9-18.tex}& \input{./drawings/tikz-drawings/graph9-1493.tex} \\	
		\smallNickel{7,$\set{4,1}$:(1,1)}	& \smallNickel{9,$\set{4,1}$:(3,1)} & \smallNickel{9,$\set{19,1}$:(1,1)}
		\\		\hline
		\input{./drawings/tikz-drawings/graph7-71.tex}& \input{./drawings/tikz-drawings/graph9-1542.tex} & \input{./drawings/tikz-drawings/graph9-1507.tex} \\		
		\smallNickel{7,$\set{5,1}$:(1,1)} & \smallNickel{9,$\set{10,1}$:(1,1)} & \smallNickel{9,$\set{20,1}$:(1,1)}
		\\\hline
		\input{./drawings/tikz-drawings/graph7-68.tex}&  \input{./drawings/tikz-drawings/graph9-1543.tex} & \input{./drawings/tikz-drawings/graph9-1457.tex} \\	
		\smallNickel{7,$\set{7,1}$:(1,1)}	&	 \smallNickel{9,$\set{11,1}$:(1,1)} & \smallNickel{9,$\set{22,1}$:(1,1)}
		\\	\hline
		\input{./drawings/tikz-drawings/graph8-11.tex}& \input{./drawings/tikz-drawings/graph9-1527.tex} & \input{./drawings/tikz-drawings/graph9-1483.tex} \\
		\smallNickel{8,$\set{2,1}$:(1,2)}& \smallNickel{9,$\set{12,1}$:(1,1)}	& \smallNickel{9,$\set{23,1}$:(1,1)}
		\\		\hline 
		\input{./drawings/tikz-drawings/graph8-12.tex}&  \input{./drawings/tikz-drawings/graph9-1529.tex} &\input{./drawings/tikz-drawings/graph9-1451.tex} \\	
		\smallNickel{8,$\set{3,1}$:(1,2)} & \smallNickel{9,$\set{13,1}$:(1,1)} & \smallNickel{9,$\set{25,1}$:(1,1)}
		\\				\hline 
	\end{tabular}
	\caption{Fano graph representatives labeled by \smallNickel{edges, position:$(n_\cU, n_\cF )$}. The position refers to the list in the attached \texttt{Mathematica} file. We give only exponents $(n_\cU, n_\cF )$ with $n_\cU,\ne0$, $n_\cF,\ne0$ in generic kinematics up to 9 edges.  }
		\label{Fanoupto9}
\end{table}

\begin{table}[htb!]
	\centering
	\begin{tabular}{|c||c||c|}
		\hline 
		\input{./drawings/tikz-drawings/graph10-17.tex}&\input{./drawings/tikz-drawings/graph10-574.tex} &   \input{./drawings/tikz-drawings/graph10-571.tex}  \\
		\smallNickel{10,$\set{2,1}$:(2,2)}&  \smallNickel{10,$\set{11,1}$:(2,1)}	&	\smallNickel{10,$\set{10,1}$:(2,1)}
		\\		\hline
		\input{./drawings/tikz-drawings/graph10-18.tex} & 	\input{./drawings/tikz-drawings/graph10-577.tex} & \input{./drawings/tikz-drawings/graph10-570.tex}  \\
		\smallNickel{10,$\set{3,1}$:(2,2)}  &\smallNickel{10,$\set{12,1}$:(2,1)} & \smallNickel{10,$\set{9,1}$:(2,1)}
		\\\hline
		\input{./drawings/tikz-drawings/graph10-19.tex} & 	\input{./drawings/tikz-drawings/graph10-582.tex}& \input{./drawings/tikz-drawings/graph10-529.tex}  
		\\
		\smallNickel{10,$\set{4,1}$:(2,2)} & \smallNickel{10,$\set{2,1}$:(13,1)} & \smallNickel{10,$\set{18,1}$:(2,1)}
		\\\hline
		\input{./drawings/tikz-drawings/graph10-20.tex} &	\input{./drawings/tikz-drawings/graph10-584.tex}& \input{./drawings/tikz-drawings/graph10-567.tex}
		\\ 
		\smallNickel{10,$\set{5,1}$:(2,2)}  &\smallNickel{10,$\set{14,1}$:(2,1)} & \smallNickel{10,$\set{8,1}$:(2,1)}
		\\\hline
		\input{./drawings/tikz-drawings/graph10-564.tex} &	\input{./drawings/tikz-drawings/graph10-538.tex}& 	\input{./drawings/tikz-drawings/graph10-528.tex}\\
		\smallNickel{10,$\set{6,1}$:(2,1)} &\smallNickel{10,$\set{15,1}$:(2,1)} & \smallNickel{10,$\set{17,1}$:(2,1)}
		\\\hline
		\input{./drawings/tikz-drawings/graph10-566.tex} &	\input{./drawings/tikz-drawings/graph10-541.tex}& \\
		\smallNickel{10,$\set{7,1}$:(2,1)} &\smallNickel{10,$\set{16,1}$:(2,1)} &   
		\\\hline
	\end{tabular}
	\caption{
    Fano graph representatives labeled by \smallNickel{edges, position:$(n_\cU, n_\cF )$}. The position refers to the list in the attached \texttt{Mathematica} file.
   We give only exponents $(n_\cU, n_\cF )$ with $n_\cU,\ne0$, $n_\cF,\ne0$ in generic kinematics for ten edges. 
    }
	\label{Fanoupto10}
\end{table}     

\clearpage
    
\bibliographystyle{JHEP}

\providecommand{\href}[2]{#2}\begingroup\raggedright\endgroup
\end{document}